\newtheorem{theorem}{Theorem}
\newtheorem{lemma}[theorem]{Lemma}
\newtheorem{definition}[theorem]{Definition}
\newtheorem{proposition}[theorem]{Proposition}
\newtheorem{property}[theorem]{Property}
\newtheorem{corollary}[theorem]{Corollary}
\newtheorem*{example}{Example}
\newcommand\AC[1]{\AxiomC{$#1$}}
\newcommand\rl[1]{\RightLabel{#1}}
\newcommand\BC[1]{\BinaryInfC{$#1$}}
\newcommand\UC[1]{\UnaryInfC{$#1$}}
\newcommand\dip{\DisplayProof}
\renewcommand\t{\vdash}
\renewcommand{\L}{\ensuremath{\mathsf{L}}\xspace}
\newcommand{\IL}{\ensuremath{\mathsf{IL}}\xspace}
\newcommand{\CL}{\ensuremath{\mathsf{CL}}\xspace}
\newcommand{\ALL}{\ensuremath{\mathsf{ALL}}\xspace}
\newcommand{\MLL}{\ensuremath{\mathsf{MLL}}\xspace}
\newcommand{\LL}{\ensuremath{\mathsf{LL}}\xspace}
\newcommand{\MALL}{\ensuremath{\mathsf{MALL}}\xspace}
\newcommand{\HMALL}{\ensuremath{\mathsf{HMALL}}\xspace}
\newcommand{\IMALL}{\ensuremath{\mathsf{IMALL}}\xspace}
\newcommand{\R}{\ensuremath{\mathsf{R}}\xspace}
\newcommand{\AR}{\ensuremath{\mathsf{AR}}\xspace}
\newcommand{\BI}{\ensuremath{\mathsf{BI}}\xspace}
\begin{document}


\title{Judgment Aggregation in Non-Classical Logics}

\author{Daniele Porello}

\begin{abstract}
This work contributes to the theory of judgment aggregation by discussing a number of significant non-classical logics. After adapting the standard framework of judgment aggregation to cope with non-classical logics, we discuss in particular results for the case of Intuitionistic Logic, the Lambek calculus, Linear Logic and Relevant Logics. 
The motivation for studying judgment aggregation in non-classical logics is that they offer a number of modelling choices to represent agents' reasoning in aggregation problems. By studying judgment aggregation in logics that are weaker than classical logic, we investigate whether some well-known impossibility results, that were tailored for classical logic, still apply to those weak systems. 
\end{abstract}

\maketitle

\medskip
\small
\noindent \textbf{Keywords.} 
Judgment Aggregation, Social Choice Theory, Group Decisions, Non-Classical Logics, Intuitionistic Logic, Non-Monotonic Logics, Lambek calculus, Linear Logic, Relevant Logic, Substructural Logics.
\medskip

\normalsize

\section{Introduction}

Judgement aggregation (JA) has recently become a significant topic at the intersection of themes in social choice theory, multiagent systems, and knowledge representation. The reason is that JA provides a general theory for studying the procedures to aggregate agents' possibly heterogeneous attitudes into a collective attitude that reflects, as close as possible, the individual views. The original applications of judgment aggregation were related to voting theory and to the modellisation of the decision-making processes of collegial courts, cf. \cite{ListPettit2002,KornhauserSagerCLR1993}. In AI and multiagent systems, JA  provides a sound methodology to define ascriptions of propositional contents to collective entities; for that reason, JA has been related for instance to belief merging, cf \cite{pigozzi2006belief,konieczny2011logic}, and to ontology merging, cf \cite{PorelloEndrissJLC2014}.\par
As usual in formal modellings of agency, a modicum of rationality is presupposed in the understanding of agents, \cite{ListPettit2011}. Since JA is about the aggregation of logically connected propositions, the notion of individual rationality of JA is essentially related to the concept of logical consistency. The concept of collective rationality is therefore intended as the preservation of individual consistency, via the aggregation procedure, at the collective level.  As it is known from the beginning of the study of JA, the aggregation of individual consistent judgments may lead to inconsistent outcomes, cf. \cite{ListPettit2002}. For that reason, a careful investigation of the properties of the aggregation procedures and of the extent to which they affect the preservation of consistency is required. This line of study has been developed into a quite sophisticated theory of aggregation procedures. A number of introductory readings to judgment aggregation is now available, for instance, \cite{grossi2014judgment}, \cite{list2010introduction}, \cite{ListPuppe2009}, and \cite{EndrissHBCOMSOC2016}.\par
Important aggregation procedures, with the significant example of the majority rule, do not preserve individual consistency once agents are allowed to express their judgments on any possible proposition, that is, once the agenda on which the agents express their views is allowed to be any subset of the logical language. Hence, one can roughly divide the JA approach to cope with collective inconsistency into two directions: the first is the investigation of the procedures that are indeed capable of preserving consistency; the second is the characterisations of the agendas that guarantee consistent collective outcomes.\par In the standard view of JA, the logic that is used to model rationality is classical logic. Thus, the failure of preserving consistency that the results in judgment aggregation show concerns the notion of consistency defined by classical logic.\par
In this paper, I propose to extend the theory of judgment aggregation to a number of significant of non-classical logics.  By non-classical logic, we shall mean here logical systems that reject one or more principles of classical logic and provide an alternative view of reasoning. A guiding example is intuitionistic logic, see \cite{Dummett2000}: by rejecting for instance the principle of the excluded middle, intuitionism provides a constructive account of reasoning. Many logics for JA have been investigated, designed, and discussed; however, they usually extend classical propositional logics, instead of investigating systems that are alternative to or weaker than classical logic. A few significant exceptions shall be discussed later on.\par
The motivation for studying judgment aggregation in non-classical logic are essentially three. Firstly, there is a theoretical interest in extending the theory of judgment aggregation to logics that, from a mathematical point of view, significantly differ from classical logic. Secondly, by studying non-classical logics, we shall approach weak inference systems; thus, it is worth investigating whether, by weakening the logic that models the concept of rationality, standard impossibility results still hold for weaker system, or whether we can actually circumvent collective inconsistencies by weakening classical logic. Finally, as usual in justifying the investigation of non-classical logics, the adequacy of classical connectives to model reasoning may be questioned. It is interesting to notice that arguments against the material conditional of classical logic have also been suggested within the literature on JA. In particular, by \cite{dietrich2010possibility}, where classical implication is replaced by subjunctive implications of Lewis conditional logic. By studying JA in non-classical logic, we enable the choice among a number of definitions of logical connectives that may be appropriate for certain aggregation problems. For this reason, we shall focus in this paper on a number of well-established non-classical logics with significant impact in philosophical logic or in computer science. For instance, intuitionistic logic provides a constructive view of reasoning that may suit an evidence-based view of inferences, while relevant logic allows for defining a conditional that better copes with the paradoxes of material implication, cf. \cite{Anderson1992entailment}. 
We shall start discussing JA in non-classical logic by dealing with intuitionistic logic. Then, we shall develop our analysis of non-classical logics by means of substructural logics, cf. \cite{Paoli2002,Restall2002}. \emph{Substructural logics} are a family of logics weaker than classical logic that reject one or more of the core principle of classical reasoning; for instance, the monotonicity of the entailment or the commutativity of logical connectives.\par
The conceptual motivations for studying substructural logics are usually related to the idea of capturing a form of reasoning that better copes with the actual inferential practice of human or artificial agents. For instance, the weakest substructural logic that we discuss in this paper is the \emph{Lambek Calculus}, developed by \cite{Lambek1958}, for which connectives are non-commutative. This entails that the ordering of the formulas is crucial for reasoning. The order-dependency of inferences can be used to model aggregation problems with temporal dependencies among the issues of the agenda. For instance, suppose $A$ and $B$ in this order have been accepted, then the constraint $B \wedge A \rightarrow C$ does not apply in this situation to infer $C$.\par
Moreover, we approach \emph{linear logic}, introduced by \cite{Girard1987}, that captures a form of resource-bounded reasoning. For example, suppose the proper axiom $E \rightarrow C$ represents the inference ``if I have one euro ($E$), then I buy one coffee ($C$)''. In classical logic, one can infer, by means of the contraction principle, that $E \rightarrow E \wedge C$, namely, that I still have one euro, besides having the coffee. By dropping contraction, linear logic captures a the resource-sensitive aspect of causality: the antecedent has to be \textit{consumed} during the inferential process so that the consequent may hold, cf. \cite{Girard1995}.\par 
In philosophical logic, an important debate on the nature of logical implication emerged in the tradition of \emph{relevant logics}: \cite{Anderson1992entailment},  \cite{DunnRestall2002}. The family of relevant logics rejects, in particular, the monotonicity of the entailment and design therefore logics for which true implications exhibit the relevance of the antecedent of the conditional to the consequent. In particular, relevant logics reject the axiom $A \rightarrow (B \rightarrow A)$ that means that whenever $A$ holds, $A$ can be implied by any $B$, regardless of the relevance of $B$ for assessing $A$. Relevant logics have also been motivated as logics for modelling epistemic agents and as logics that model inferences that depend on the amount of information available to cognitive agents, \cite{Mares2004,AlloMaresErkenntnis2012,MasoloPorelloAIC2015}.\par
We will also briefly discuss fuzzy logics that are defined within the substructural realm. We shall not discuss in details the case of paraconsistent logics because we are interested here in studying the preservation of consistency via aggregation procedures. However, paraconsistency is approached in this paper  by presenting logics for which the principle of \emph{ex falso quodlibet} does not hold (e.g. in the case of linear and relevant logics).\par 
The methodology of this paper is proof-theoretical. This is motivated by the fact that we are going to introduce a number of logics with significantly different algebraic counterparts: the proof-theoretical methodology permits a compact treatment. Moreover, as we shall see, the proof-theoretical analysis allows for pinpointing the inferences principles that are responsible of the failures of preserving consistency in judgment aggregation.\par
The main contribution of this paper consists in the extension of the theory of judgment aggregation to the case of a number of non-classical logics. The choice of the logical systems in this paper is also motivated by the fact that, by dropping monotonicity and other principles of classical logic, it is possible to circumvent the collective inconsistency that threaten the judgment aggregation based on classical logics. Moreover, results for general monotonic logics have already been presented for instance by \cite{Dietrich2007} and for nonmonotonic logics have been recently presented in \cite{WenJANMSYnthese2017}. Another crucial aspect for preserving consistency that motivates the focus on linear and relevant logics is the distinction that this systems provide between additive and multiplicative, or extensional and intensional, logical connectives. As we shall see in Section \ref{section:mall} and in Section \ref{sec:jar}, the combination of the lack of monotonicity and the distinction between types of connectives allows for establishing positive results for the majority rule, that is, performing collective reasoning in those weak systems guarantees consistency.
By relying on that, in Section 9.1 a strategy for circumventing judgment aggregation paradoxes is proposed. The idea is to assess individual reasoning with respect to classical logic, as usual in judgment aggregation, while assessing collective reasoning with respect to a weak logic for which consistency is ensured. To enable the assessment of individual and collective reasoning with respect to possibly distinct logics, we shall slightly rephrase the standard framework of judgment aggregation.\par
The remainder of this paper is organised as follows. Section \ref{sec:background} introduces the background on substructural logics. In particular, we recall the sequent calculi for classical and intuitionistic logic, for the Lambek calculus, and for linear logic. For the case of relevant logics, as we shall see, the sequent presentation is not satisfactory, thus we shall approach relevant logics by means of Hilbert systems. Section \ref{sec:modelJA} introduces a model of judgment aggregation that is general enough to treat the case of the non-classical logics that we discuss here. In particular, we stress the interesting differences that the lacking of monotonicity, contraction, and commutativity entail on the model. Section \ref{sec:JAclassical} rephrases the standard results of JA in classical logic in the setting of this paper (i.e. in proof-theoretical terms) and discusses the case of conservative extensions of classical logic. Section \ref{sec:il} approaches JA for intuitionistic logic. Section \ref{sec:lambek} discusses JA for the Lambek calculus. Section \ref{section:mall} presents the case of linear logic. In particular, as we shall see, by focusing on a fragment of linear logic, general possibility results for judgment aggregation are achievable. Section \ref{sec:jar} extends the previous analysis to the case of relevant logics. Section \ref{sec:extensions} discusses a number of extensions of the previous treatment to other logics and discusses how to circumvent the impossibility results based on agendas in classical logic by rephrasing classical collective rationality with substructural reasoning. Section \ref{sec:relatedwork} surveys related work, by focussing in particular on the use of non-classical logics in JA, on the use of proof-theoretical methods, and on the relationship with the algebraic approach to the study of general logics of JA. Section \ref{sec:conclusion} concludes.   
 
\section{Background on substructural logics}\label{sec:background}

Sequent calculi were introduced by Gerhard Gentzen to study proofs in classical and intuitionistic logic. They provide an important theory in logic that allows for investigating the operational meaning of logical connectives, cf. \cite{gentzen1935untersuchungen}.\footnote{For an introduction to sequent calculi, see \cite{TroelstraBPT} and \cite{negri2008structural}.} Besides providing a fine-grained tool to analyse inferences, sequent calculi can be used to model a number of logics in a uniform and elegant way. A \emph{sequent} is an expressions of the form $\Gamma \vdash \Delta$, where $\Gamma$, the premises of the sequent, and $\Delta$, the conclusions of the sequent, are made out of formulas in a given logic.  The structure of $\Gamma$ and $\Delta$, as we shall see, depends on the logic; for instance, in what follows, they may be sets, multisets, or lists of formulas.\par
The intuitive reading of a sequent expression is: ``the conjunction of the formulas in $\Gamma$ entails the disjunction of the formulas in $\Delta$''.
A sequent calculus is specified by two classes of rules. The \emph{structural rules} determine the structure of the sequent and define how to handle the hypotheses in a proof, the assumptions of reasoning; for instance, they entail that in classical logic $\Gamma$ and $\Delta$ are \emph{sets} of formulas. The \emph{logical rules} define the operational meaning of the logical connectives.
A fundamental insight concerning the meaning of the logical rules is due to the tradition of substructural logics and in particular to Jean Yves Girard (cf. \cite{Girard1987}): The structural rules determine the behaviour of the logical connectives.  The meaning of the structural rule for reasoning is the following: \emph{weakening} (W) corresponds to the monotonicity of the consequence relation, \emph{contraction} (C) amounts to assuming that multiple occurrences of the same formula can be identified, and \emph{exchange} (E) forces the commutativity of conjunction and disjunction. When rejecting one or more structural rules, we simply cannot view the class of propositions about which reasoning is performed as a \emph{set}; therefore, in the subsequent presentation, we shall define the structure of $\Gamma$ and $\Delta$ as either a multiset or a list.\par
We introduce the language of classical propositional logic \CL as usual. Assume a set of propositional variable $\textbf{P}$, the set of formulas of \CL is defined as follows.

\begin{equation}
\mathcal{L}_{CL} := p \in \textbf{P} \mid \neg A \mid A \wedge A \mid A \vee A \mid A \rightarrow A
\end{equation}

The sequent calculus for \CL is presented in Table \ref{SequentClassical}. The logical rules may be presented either in \emph{multiplicative} form or in \emph{additive} form. In the former case, the premises of the rule have possibly different contexts that are combined in the conclusion, in the latter, the rules presuppose that the premises of the sequent share the same context.\footnote{The terms additive and multiplicative have been introduced by \cite{Girard1987}. The distinction corresponds to the division between \emph{extensional} and \emph{intensional} connectives in relevant logics, cf. \cite{Paoli2002}.}
The dependency of logical rules on structural rules is evident by noticing that classical logic is imposed as soon as we assume (left and right versions of) weakening (W L, and W R), contraction (C R and C L), and exchange (E L and ER). In that case, the additive and the multiplicative versions of the rules of connectives are provably equivalent. 

\begin{table}[h]
\begin{center}
\begin{center}

\textit{Identities}

\begin{tabular}[]{cc}
& \\
\AxiomC{} \RightLabel{\footnotesize ax} \UnaryInfC{$A \vdash A$} \DisplayProof  
& \AxiomC{$\Gamma, A \vdash \Delta$} \AxiomC{$\Gamma' \vdash A, \Delta'$} \RightLabel{\footnotesize cut}
\BinaryInfC{$\Gamma,
\Gamma' \vdash \Delta, \Delta'$} \DisplayProof
\\
 & \\
\end{tabular}

\textit{Structural Rules}

\begin{tabular}[]{cc}
&  \\
\AC{\Gamma, A, B, \Gamma' \t \Delta} \rl{E L} \UC{\Gamma, B, A, \Gamma' \t \Delta}
\DisplayProof & \AC{\Gamma \t \Delta, A, B, \Delta'} \rl{E R} \UC{\Gamma \t \Delta, B, A, \Delta'}
\DisplayProof\\
&  \\

\AC{\Gamma, A, A, \t \Delta} \rl{C L} \UC{\Gamma, A \t \Delta} \dip & \AC{\Gamma \t \Delta, A, A} \rl{C R} \UC{\Gamma \t \Delta, A} \dip\\
&  \\
\AC{\Gamma \t \Delta} \rl{W L} \UC{\Gamma, A \t \Delta} \dip  & \AC{\Gamma \t \Delta} \rl{W R} \UC{\Gamma  \t \Delta, A} \dip\\
& \\
\end{tabular}

\textit{Negation}

\begin{tabular}[]{cc}
& \\
\AC{\Gamma \t A, \Delta} \rl{$\neg$ L} \UC{\Gamma, \neg A \t \Delta}
\dip   &
\AC{\Gamma, A\t \Delta} \rl{$\neg$ R} \UC{\Gamma \t \neg A, \Delta}
\dip
\\
 & \\
\end{tabular}

\textit{Multiplicative presentation of logical connectives}

\begin{tabular}{ll}
&  \\
\AC{\Gamma \vdash A, \Delta} \AC{\Gamma' \vdash B, \Delta'}
 \LeftLabel{$\wedge$R} \BC{\Gamma, \Gamma' \vdash A\wedge B, \Delta, \Delta'} \dip  & 
 \AC{\Gamma, A, B \vdash \Delta} \rl{$\wedge$L} \UC{\Gamma, A \wedge B \vdash \Delta}
\dip      \\
& \\
\AC{\Gamma, A \vdash \Delta} \AC{\Gamma', B \vdash \Delta'} \LeftLabel{$\vee $L}
\BC{\Gamma, \Gamma', A \vee B \vdash \Delta, \Delta'} \dip  &

\AC{\Gamma \vdash A, B, \Delta} \rl{$\vee$ R}
\UC{\Gamma \vdash A \vee B, \Delta}   \dip \\
& \\
\AxiomC{$\Gamma, A \t \Delta$} \AxiomC{$\Gamma', B \t \Delta'$} \LeftLabel{$\rightarrow$L}
\BinaryInfC{$\Gamma, \Gamma' A \rightarrow B \t \Delta, \Delta'$} \DisplayProof & 
 \AxiomC{$\Gamma \t B, \Delta$}
\RightLabel{$\rightarrow$R}
\UnaryInfC{$\Gamma \t A \rightarrow B,\Delta$} \DisplayProof\\
& \\
\end{tabular}

\textit{Additive presentation of logical connectives}

\begin{tabular}[]{ccc}
&  \\
\AxiomC{$\Gamma \t A, \Delta$} \AxiomC{$\Gamma \t B, \Delta $} \LeftLabel{$\wedge$R}
\BinaryInfC{$\Gamma \t A \wedge B, \Delta$} \DisplayProof &
 \AxiomC{$\Gamma, A \t \Delta$} \RightLabel{$\with$L} \UnaryInfC{$\Gamma, A \wedge
B \t \Delta$} \DisplayProof
&
\AxiomC{$\Gamma, B \t \Delta$} \RightLabel{$\with$L} \UnaryInfC{$\Gamma, A \wedge
B \t \Delta$} \DisplayProof\\
& \\

\AxiomC{$\Gamma, A \t \Delta$} \AxiomC{$\Gamma, B \t \Delta$} \LeftLabel{$\vee$L}
\BinaryInfC{$\Gamma, A\vee B \t \Delta$} \DisplayProof & 
\AxiomC{$\Gamma \t A, \Delta$}
\RightLabel{$\vee$R}
\UnaryInfC{$\Gamma \t A \vee B,\Delta$} \DisplayProof
& 
\AxiomC{$\Gamma \t B, \Delta$}
\RightLabel{$\vee$R}
\UnaryInfC{$\Gamma \t A \vee B,\Delta$} \DisplayProof\\
& \\

\AxiomC{$\Gamma \t A, \Delta$} \AxiomC{$\Gamma, B \t \Delta$} \LeftLabel{$\rightarrow$L}
\BinaryInfC{$\Gamma, A \rightarrow B \t \Delta$} \DisplayProof & 
 \AxiomC{$\Gamma \t B, \Delta$}
\RightLabel{$\rightarrow$R}
\UnaryInfC{$\Gamma \t A \rightarrow B,\Delta$} \DisplayProof

&

 \AxiomC{$\Gamma \t A, \Delta$}
\RightLabel{$\rightarrow$R}
\UnaryInfC{$\Gamma \t A \rightarrow B,\Delta$} \DisplayProof
\\
& \\
\end{tabular}

\end{center}
\end{center}
\caption{Sequent Calculus for Classical Logic (CL)}
\label{SequentClassical}
\end{table}
\noindent

By disabling one or more structural rules, we enter the realm of  \emph{substructural logics}; for introductory readings cf. \cite{Paoli2002,Restall2002,Troelstra1992}.\par Intuitionistic logic (\IL) can also be construed as a substructural logic. Gentzen surprisingly showed that intuitionistic logic can be captured by simply imposing upon the sequent calculus for \CL that the right-hand side of the sequent contains at most one formula. This restriction is sufficient to prevent non-constructive principles, such as the excluded middle, to be provable. 
This restriction has been interpreted as a manner to disable structural rules locally by \cite{GirardEtAl1989}: in intuitionistic logic the structural rules are prevented on the right-hand side of the sequent and fully permitted on the left-hand side. 
Define the language of intuitionistic logic  ($\mathcal{L}_{\IL}$) as follows. Let $p \in \textbf{P}$ and $\bot$ the constant for false:

\begin{equation}
\mathcal{L}_{IL} = p \mid \bot \mid A \wedge A \mid A \vee A \mid A \rightarrow A
\end{equation} 

The sequent calculus of intuitionistic logic (\IL) is given by restricting the classical calculus to single-conclusion sequents. Negation is defined, as usual in intuitionistic logic, by means of $\bot$ and $\rightarrow$: $\neg A = A \rightarrow \bot$.

\subsection{Sequent Calculi for Substructural Logics}

The weakest substructural logic that we discuss here is obtained by removing all structural rules. The resulting calculus is known as the \emph{Lambek calculus} \L, developed by \cite{Lambek1958} to capture syntactic parsing of natural language sentences along the tradition of categorial grammars.
By rejecting exchange, the conjunction of \L, denoted by $\odot$, is non-commutative. Moreover, there are two order-sensitive implications $\setminus$ and $/$. Note that \L is an intuitionistic logic, in the sense that at most one formula appears in the right-hand side of the sequent. 
 We assume a minor extension of the Lambek calculus with the constant for false $\bot$ in order to define a form of intuitionistic negation in terms of the order sensitive implication $A \setminus \bot$ (cf. Table \ref{SequentLambek}) .\footnote{In fact, one can show that there are two negations definable in this way that depend on the order-sensitive implications: $A \setminus \bot$ and $\bot / A$. For our purposes, it is not worthy entering the details of the treatment of negation in $\L$. We refer to \cite{Wansing2007}, \cite{Restall2006} for strong negations in \L and to \cite{Abrusci1990} for the definition of the two negations for Lambek calculus.}

\begin{equation}
\mathcal{L}_{L} := p \in \textbf{P} \mid \bot \mid A \odot A \mid A \setminus A \mid A / A 
\end{equation}

\begin{table}[h]
\begin{center}

\begin{tabular}[]{cc}
& \\
\AxiomC{$\Gamma[A; B] \vdash C$} \RightLabel{$\odot$L} \UnaryInfC{$\Gamma[A \odot B] \vdash
C$} \DisplayProof  &  \AxiomC{$\Gamma \vdash A $} \AxiomC{$\Gamma' \vdash B$}
 \RightLabel{$\odot$R} \BinaryInfC{$\Gamma; \Gamma' \vdash A\odot B$} \DisplayProof      \\
& \\
\AxiomC{$\Gamma \vdash A$} \AxiomC{$\Delta[B] \vdash C$} \RightLabel{$\setminus$ L}
\BinaryInfC{$\Delta[\Gamma; A\setminus B] \vdash C$} \DisplayProof  &
\AxiomC{$A; \Gamma \vdash B$} \RightLabel{$\setminus$R}
\UnaryInfC{$\Gamma \vdash A \setminus B$}   \DisplayProof \\
& \\
\AxiomC{$\Gamma \vdash A$} \AxiomC{$\Delta[B] \vdash C$} \RightLabel{$/$ L}
\BinaryInfC{$\Delta[ B / A; \Gamma]  \vdash C$} \DisplayProof  &
\AxiomC{$\Gamma; A \vdash B$} \RightLabel{$/$R}
\UnaryInfC{$\Gamma \vdash A / B$}   \DisplayProof \\
\end{tabular}
\end{center}
\caption{Sequent calculus for the Lambek Calculus}
\label{SequentLambek}
\end{table}

By adding exchange to the Lambek calculus, we obtain linear logic  (\LL). That is, linear logic rejects the global validity of (W) and (C) both on the left and on the right hand side of the sequent. In \LL, sequents are multisets of formulas.  In Table \ref{SequentClassical}, we presented two ways of defining logical rules: an \emph{additive} version and a \emph{multiplicative} version. The two formulations are redundant in classical logic because of (W) and (C), that are sufficient to prove that the additive formulation and the multiplicative formulation are equivalent. If we drop weakening and contraction, additives and multiplicatives are no longer equivalent, hence we have to account for two different types of conjunctions and disjunctions with distinct operational meanings.  This operators are not visible in classical logic, because of the structural rules. Accordingly, in \LL there are two different types of conjunction, a multiplicative conjunction $\otimes$ (tensor) and an additive conjunction $\with$ (with), and two types of disjunctions, multiplicative $\parr$ (parallel) and additive $\oplus$ (plus). Implications can be defined by means of disjunctions and negations as usual, in \LL  the multiplicative implication is $A \multimap B \equiv \neg A \parr B$ and the additive implication $A \leadsto B \equiv \neg A \oplus B$.\footnote{For the additive implication, whose status as an implication is in fact questionable, we refer to \cite{TroelstraBPT}. The reason why the additive implication is not satisfactory is that, in a categorical jargon, the adjunction w.r.t. additive conjunctions does not hold, that is $A \with B \leadsto C$ is not equivalent to $A \leadsto (B \leadsto C)$, cf \cite{pym2013semantics}.}
Given a set of propositional atoms \textbf{P}, the language of multiplicative additive linear logic \MALL is defined as
follows.\footnote{We focus on the multiplicative-additive fragment of \LL. Another important part of LL is given by the \textit{exponentials}, 
that allow for retrieving the usual classical inferences in a controlled way. Therefore, instead of being yet another non-classical logic, linear logic  is motivated at least as an analysis of proofs in classical and intuitionistic logic. We leave a discussion of the exponential for JA for future work.}

\begin{equation}
\mathcal{L}_{\MALL} \; ::=  p \in \textbf{P} \mid  \textbf{1} \mid \bot \mid \top \mid \textbf{0} \mid \neg L \mid L \otimes L\mid L \parr L\mid L \oplus L \mid L \with L \mid A \multimap A \mid A \leadsto A
\end{equation}

By dropping weakening and contraction, the units of the logic have to be distinguished in multiplicatives $\textbf{1}$ and $\bot$ (which are units for $\otimes$ and $\parr$ respectivley) and additives ($\top$ and $\textbf{0}$, which are units for $\with$ and $\oplus$ respectively).\footnote{Without W and C, also negation behaves differently. For example, the \textit{ex falso quodlibet} principle is no longer globally valid in linear logic, for the multiplicative false constant $\bot$. Thus, linear logic, beside being non-monotonic, is also a \emph{paraconsistent} logic, at least in the weak sense of invalidating \emph{ex falso quodlibet}. For the sake of simplicity of presentation, we shall use a single notation for negations in various logics.} The sequent calculus for \MALL is presented in Table \ref{sequent:mall}.


\begin{table}[h]
\begin{center}
\begin{center}

\textit{Identities}

\begin{tabular}[]{cc}
& \\
\AxiomC{} \RightLabel{\footnotesize ax} \UnaryInfC{$A \vdash A$} \DisplayProof  
& \AxiomC{$\Gamma, A \vdash \Delta$} \AxiomC{$\Gamma' \vdash A, \Delta'$} \RightLabel{\footnotesize cut}
\BinaryInfC{$\Gamma,
\Gamma' \vdash \Delta, \Delta'$} \DisplayProof
\\
 & \\
\end{tabular}

\textit{Negation}

\begin{tabular}[]{cc}
& \\
\AC{\Gamma \t A, \Delta} \rl{$L\neg$} \UC{\Gamma, \neg A \t \Delta}
\dip   &
\AC{\Gamma, A\t \Delta} \rl{$R\neg$} \UC{\Gamma \t \neg A, \Delta}
\dip
\\
 & \\
\end{tabular}

\textit{Multiplicatives}

\begin{tabular}{ll}
&  \\
\AC{\Gamma \vdash A, \Delta} \AC{\Gamma' \vdash B, \Delta'}
 \LeftLabel{$\otimes$R} \BC{\Gamma, \Gamma' \vdash A\otimes B, \Delta, \Delta'} \dip  & 
 \AC{\Gamma, A, B \vdash \Delta} \rl{$\otimes$L} \UC{\Gamma, A \otimes B \vdash \Delta}
\dip      \\
& \\
\AC{\Gamma, A \vdash \Delta} \AC{\Gamma', B \vdash \Delta'} \LeftLabel{$\parr$L}
\BC{\Gamma, \Gamma', A \parr B \vdash \Delta, \Delta'} \dip  &

\AC{\Gamma \vdash A, B, \Delta} \rl{$\parr$R}
\UC{\Gamma \vdash A \parr B, \Delta}   \dip \\
& \\
\AxiomC{$\Gamma, A \t \Delta$} \AxiomC{$\Gamma', B \t \Delta'$} \LeftLabel{$\multimap$L}
\BinaryInfC{$\Gamma, \Gamma' A \multimap B \t \Delta, \Delta'$} \DisplayProof & 
 \AxiomC{$\Gamma \t B, \Delta$}
\RightLabel{$\multimap$R}
\UnaryInfC{$\Gamma \t A \multimap B,\Delta$} \DisplayProof\\
& \\

\end{tabular}

\textit{Multiplicative units}

\begin{tabular}{cc}
&  \\
\AC{\Gamma \vdash \Delta} \rl{\textbf{1}L} \UC{\Gamma, \textbf{1} \t \Delta}\dip & \AC{} \rl{\textbf{1} R} \UC{\t \textbf{1}}\dip\\
& \\
\AC{} \rl{$\bot$ L} \UC{\bot \t}\dip & \AC{\Gamma \vdash \Delta} \rl{$\bot$ R} \UC{\Gamma \t \Delta, \bot}\dip \\
& \\
\end{tabular}

\textit{Additives}

\begin{tabular}[]{cc}
&  \\
\AxiomC{$\Gamma \t A, \Delta$} \AxiomC{$\Gamma \t B, \Delta $} \LeftLabel{$\with$R}
\BinaryInfC{$\Gamma \t A \& B, \Delta$} \DisplayProof & \AxiomC{$\Gamma, A_{i} \t \Delta$} \RightLabel{$\with$L} \UnaryInfC{$\Gamma, A_{0} \&
A_{1} \t \Delta$} \DisplayProof\\
& \\

\AxiomC{$\Gamma, A \t \Delta$} \AxiomC{$\Gamma, B \t \Delta$} \LeftLabel{$\oplus$L}
\BinaryInfC{$\Gamma, A\oplus B \t \Delta$} \DisplayProof &  \AxiomC{$\Gamma \t A_{i}, \Delta$}
\RightLabel{$\oplus$R}
\UnaryInfC{$\Gamma \t A_{0} \oplus A_{1},\Delta$} \DisplayProof\\
& \\
\AxiomC{$\Gamma \t A, \Delta$} \AxiomC{$\Gamma, B \t \Delta$} \LeftLabel{$\leadsto$L}
\BinaryInfC{$\Gamma, A \leadsto B \t \Delta$} \DisplayProof & 
 \AxiomC{$\Gamma \t B, \Delta$}
\RightLabel{$\leadsto$R}
\UnaryInfC{$\Gamma \t A \leadsto B,\Delta$} \DisplayProof\\
& \\
\end{tabular}

\textit{Additive units}

\begin{tabular}{cc}
&  \\
no rule for ($\top$ L ) & \AC{} \rl{$\top$ R} \UC{\Gamma \t \Delta, \top}\dip\\
& \\
\AC{} \rl{$\textbf{0}$ L} \UC{\Gamma, \textbf{0} \t \Delta}\dip & no rule for ($\textbf{0}$ R ) \\
& \\
\end{tabular}

\end{center}
\caption{Sequent calculus for \MALL}
\label{sequent:mall}
\end{center}
\end{table}
\noindent

We label by \MLL the multiplicative fragment and by \ALL the additive fragment of \MALL.
The intuitionistic version of \MALL, label it by \IMALL, can again be defined by forcing the sequents to contain at most one formula on the right. Moreover,
 \MALL plus weakening is also known as \emph{affine logic}, see \cite{kopylov1995decidability}. 

\subsection{Distributive Substructural logics}
\label{sec:hilbert}

Since we used the notations decided by \cite{Girard1987} to introduce linear logic, we keep this notation also for the other substructural logics, although the scholars in that area usually deploys different notation. A comparison with the notation used in substructural logics is presented in Table \ref{notation}. 

\begin{table}[h]
\begin{center}
\begin{tabular}{c|c}
\cite{Girard1987} & \cite{Restall2002} \\
\hline
$\otimes$  & $\circ$\\
$\parr$ & $+$\\
$\with$ & $\wedge$\\
$\oplus$ & $\vee$\\
$\multimap$ & $\rightarrow$\\
$\textbf{1}$ & 1\\
$\bot$ & 0 \\
$\top$ &  $\top$\\
$\textbf{0}$ &  $\bot$\\
\end{tabular}
\end{center}
\caption{Notations}
\label{notation}
\end{table}

One of the peculiarity of linear logic with respect to other substructural logics is that in linear logic additive connectives are not \emph{distributive}, that is the formulas $A \with (B \oplus C)$ and $(A \with B) \oplus (A \with C)$ are not equivalent in  $\MALL$. By contrast, relevant logics are in general distributive. This makes a critical difference form the point of view of semantics, cf. \cite{Troelstra1992}, and it has also serious consequences on the sequent calculus presentation.
One may be tempted to define a rule of the sequent calculus that entails distributivity of the additives. Unfortunately, this is in general not possible without also assuming weakening and contraction. This is one of the important limitation of the sequent calculus, cf \cite{ciabattoni2012algebraic}. There is a number of ways to extend sequent calculi to cope with that. For instance, one may introduce \emph{hypersequents} or \emph{display calculi} \cite{Paoli2002}. For the present purposes, a presentation of the extensions of $\MALL$ via Hilbert system suffices.\par 
We start by presenting the Hilbert system for $\MALL$, label it by $\HMALL$ (cf. \cite{Troelstra1992}). Then, we introduce its extensions. The concept of deduction of $\HMALL$ requires a tree-structure in order to handle the hypothesis in the correct resource-sensitive way. 

\begin{table}[h]
\begin{enumerate}
\item $\t A \multimap A$
\item $\t (A \multimap B) \multimap ((B \multimap C) \multimap (A \multimap C))$
\item $\t (A \multimap (B \multimap C)) \multimap (B \multimap (A \multimap C))$
\item $\t \neg \neg A \multimap A$
\item $\t (A \multimap B) \multimap (\neg B \multimap \neg A)$
\item $\t A \multimap (B \multimap A \otimes B)$
\item $\t (A \multimap (B \multimap C)) \multimap (A \otimes B \multimap C)$
\item $\t \textbf{1}$
\item $\t \textbf{1} \multimap (A \multimap A)$
\item $\t \textbf{0} \multimap \neg \textbf{1}$
\item $\t \neg \textbf{1} \multimap \neg \textbf{0}$
\item $\t (A \with B) \multimap A$
\item $\t (A \with B) \multimap B$
\item $\t ((A \multimap B) \with (A \multimap C)) \multimap (A \multimap B \with C)$
\item $\t A \multimap A \oplus B$
\item $\t B \multimap A \oplus B$
\item $\t A \wedge B \multimap \neg ( \neg A \oplus \neg B)$, $\neg ( \neg A \oplus \neg B) \multimap A \wedge B$
\item $\t A \parr B \multimap \neg (A \multimap B)$, $ \neg (A \multimap B) \multimap A \parr B$
\item $\t A \otimes B \multimap \neg (\neg A \parr \neg B)$, $\neg (\neg A \parr \neg B) \multimap A \otimes B$
\item $\t (A \multimap C) \with (B \multimap C) \multimap (A \oplus B \multimap C)$
\end{enumerate}
\caption{Axioms of \HMALL}
\label{himall}
\end{table}

The notion of derivation in the Hilbert system for \HMALL is the following.

\begin{definition}[Deduction in \HMALL]\label{def:derivation-mall}
A \emph{deduction tree} in \HMALL  $\mathcal{D}$ is inductively constructed as follows. (i)~The leaves of the tree are assumptions $A \t A $, for $A \in \mathcal{L}_{\MALL}$, or
$\t B$ where $B$ is an axiom in Table \ref{himall} (base cases).\\
 (ii) We denote by $\stackrel{\mathcal{D}}{\Gamma \t A}$ a deduction tree with conclusion $\Gamma \t  A$. If $\mathcal{D}$ and $\mathcal{D}'$ are deduction trees, then the following are deduction trees (inductive steps). 

\begin{center}
\begin{tabular}{ccc}
$
\AC{\stackrel{\mathcal{D}}{\Gamma \t  A}}
\AC{\stackrel{\mathcal{D}'}{\Gamma' \t  A \multimap B}} \rl{$\multimap$-rule}
 \BC{\Gamma, \Gamma' \t  B}
 \dip
$

&

\hspace{10pt}
& 

$
\AC{\stackrel{\mathcal{D}}{\Gamma \t A}}
\AC{\stackrel{\mathcal{D}'}{\Gamma \t B}} \rl{$\with$-rule}
\BC{\Gamma \t A \with B}
\dip
$
\\
\end{tabular}
\end{center}

\end{definition}

Note that Hilbert system \HMALL and the sequent calculus for $\MALL$ are equivalent as expected: $\Gamma \t \Delta$ is provable in the sequent calculus iff the (multiplicative) disjunction of the formulas in $\Delta$ is derivable from $\Gamma$ in \HMALL.\par
Moreover, for \HMALL the deduction theorem holds, that is if $\Gamma, A \t B$, then $\Gamma \t A \multimap B$ (cf. \cite{Troelstra1992}). 
Note that we can also present $\MALL$ plus weakening or contraction by adding to \HMALL the suitable axioms.
From linear logic, relevant logic $\R$ is obtained by adding contraction (C) and distributivity of the additives (D1) and (D2).
Therefore, the Hilbert system for $\R$ is given by axioms 1- 20 plus the following (C), (D1) and (D2), cf. \cite{Paoli2002,Restall2002}. The definition of derivation in $\R$ simply extends our previous Definition \ref{def:derivation-mall}.

\begin{table}[h]
\begin{enumerate}
\item[] (W): $\t A \multimap (B \multimap A)$
\item[] (C): $\t (A \multimap (A \multimap B)) \multimap (A \multimap B)$
\item[] (D1): $\t A \with (B \oplus C) \multimap (A \with B) \oplus (A \with C)$, 
\item[] (D2): $\t (A \oplus B) \with (A \oplus C) \multimap A \with (B \oplus C)$
\end{enumerate}
\end{table}

%
%
%
%

We label the additive fragement of \R by \AR, which includes the additive connectives of \R.
We summarise the relationships between the logics that we have discussed in Figure 1.  The Lambek calculus is the weakest logic that we discuss in this paper. By adding E, we obtain linear logic: by assuming exchange E, the conjunction becomes commutative and the two order-sensitive implications become equivalent. By adding contraction C and distributivity, we obtain relevant logic and by adding also weakening W, we obtain classical logic. By W and C, additives and multiplicatives become equivalent. 


\begin{figure}[h]
\begin{center}
\includegraphics[scale=0.55]{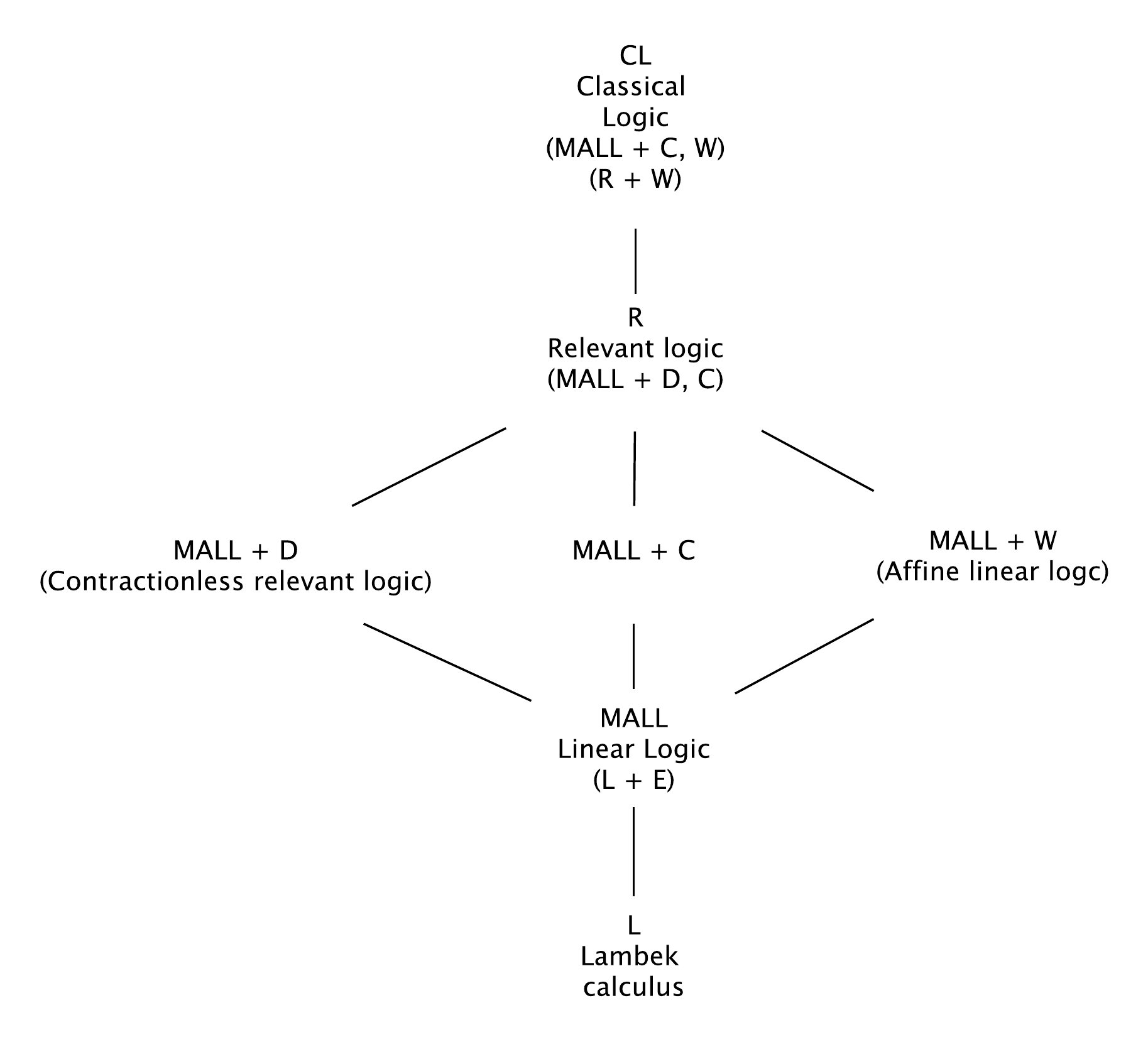}
\end{center}
\caption{Classical and Substructural logics}
\label{figure}
\end{figure}

\section{A model of judgment aggregation for general logics}\label{sec:modelJA}

In this section, we adapt the model of Judgment Aggregation (see \cite{ListPuppe2009,EndrissEtAlJAIR2012}) to cope with the logics that we have introduced. 
We have seen that reasoning about a number of propositions expressed in classical logics amounts to view the them as collected into a set. When discussing logics that lack weakening, contraction, or exchange, we have to replace the notion of a \emph{set} of propositions (or judgments) of the standard JA with the notions of multiset or list of judgments.
Recall that a \emph{multiset} is a pair $(M,f)$ where $M$ is a set and $f: M \to \mathbb{N}$ is a function that assigns to each element of $M$ its multiplicity (i.e. a natural number). A \emph{list} is multiset endowed with a strict total order $\prec$: $(M,f,\prec)$. $(M,f)$ is a \emph{submultiset} of $(M',f')$ iff $M \subseteq M'$ and for every $x \in M$, $f(x) \leq f'(x)$. $(M,f,\prec)$ is a \emph{sublist} of $(M',f',\prec')$ iff it is a submultiset of $(M',f',\prec')$ and $\prec$ is the restriction of $\prec'$ to $M$. 
Moreover, we denote by $\textsf{P}(M)$ the powerset of a multiset (list) $M$, that is, the set of all multisets (lists) that are included in the multiset (list) $M$.\footnote{Possible applications of viewing individual and collective judgments as multisets or lists are the following. Lists may encode propositions in an agenda with temporal dependency or, more generally, with a priority. The case of multisets is motivated as follows. As we discussed in Section 2, if we want to keep track of resource-sensitivity by means of logical reasoning, we have to drop contraction (C). That amounts to assuming that judgments form multisets.  For instance, suppose that agents have to express their opinions on possible \emph{deals} to trade goods (cf. \cite{PorelloEndrissKR2010,PorelloEndrissECAI2010}). A deal that trades a single occurrence of a good $a$ with a single occurrence $b$ can be represented in linear logic by means of a formula $a \multimap b$. The implication states that one token of $a$ can be traded for one token of $b$. In case we want to allow agents to express their opinions about  ``more'' deals between $a$s and $b$s, we may add to the agenda a sufficient number of copies of goods $a$ and $b$ and of the ``deal'' $a \multimap b$.}

Let $N$ be a (finite) set of agents. We shall assume throughout this article that the number of individuals is odd and is bigger than 3.
For every formula $\phi$, we define the \emph{complement} $\sim \phi$ of a formula $\phi$ as follows: $\sim \phi = \neg \phi$, if $\phi$ is not negated, $\sim \phi = \psi$, if $\phi$ is negated and $\phi = \neg \psi$.\footnote{Note that, in intuitionistic logics, double negations do not cancel out. Even in case we were to assume that an agenda in intuitionistic logic is closed under (intuitionistic) negation, the condition of closure under complements does not rule out the presence of doubly negated formulas nor it makes  $\neg \neg \phi$ equivalent to $\phi$.}  An \textit{agenda} $\Phi_{L}$ is a (finite) set (multiset, list) of propositions in the language $\mathcal{L}_L$ of a given logic $L$ (among those that we have previously introduced) that is closed under complements.
Moreover, we assume that the agenda does not contain tautologies nor contradictions (in particular, it does not contain the units of the logic $L$), as usual in the JA literature (cf. \cite{ListPuppe2009}).\par

\begin{definition}
A \textit{judgement set (multiset, list)} $J$ is a (finite) set (multiset, list) of elements of $\Phi_{\textbf{L}}$.  
\end{definition}

We slightly rephrase the usual rationality conditions on judgment sets in terms of derivability in a logic $L$.
With respect to a logic $L$, we say that $J$ is \emph{complement-free} if $J$ does not contain both $\phi$ and the complement of $\sim \phi$; we say that $J$ is \textit{consistent} iff $J \nvdash_{L} \emptyset$\footnote{In case the logic has a formula for expressing absurd, $\bot$, this condition is reformulated as $J\nvdash \bot$}; $J$ is  \textit{complete} iff for all $\phi \in \Phi_{L}$, $\phi \in J$ or $\sim \phi \in J$; and $J$ \textit{deductive closed} iff whenever $J \t_{L} \phi$ and $\phi \in \Phi_{L}$, $\phi \in J$.\par

In JA based on classical logic, the notion of complement-freeness captures a weak form of consistency that is preserved by many aggregation procedures, viz. the majority rule.\footnote{Complement-freeness has been introduced in \cite{EndrissEtAlJAIR2012} and corresponds to the property of weak consistency in \cite{DietrichListJTP2007}.} In classical logic, consistency entails complement-freeness. When dealing with JA in general logics, a significant departure from the standard JA framework is that consistency does not entail complement-freeness any longer.  
This is caused by the possible lack of weakening. Let $L$ be a logic where weakening holds, if a set (multiset) $X$ is inconsistent, then any (mutliset) $X'$  such that $X \subseteq X'$ is also inconsistent. This can be easily shown as follows: if $X \t $ and $X\subseteq X'$, then by weakening we infer $X'\t$.
Thus, the lack of weakening permits that a consistent set $X$ may have inconsistent subsets, which may violate complement-freeness, even in the case of consistency of $X$. As we shall see, this fact has significant consequences on the study of the aggregation of judgments. For this reason, we assume also the following condition.

\begin{definition}
We say that a set (multiset, list) $J$ is \emph{robustly consistent} if $J$ is consistent and every proper subset (submutliset, sublist) $J'$ of $J$ is.  
\end{definition}

Robust consistency always entails consistency and it entails complement-freeness also for logics without weakening. In order to meet the standard treatment of JA, we will always assume that individual judgments sets are robustly consistent, while we shall state explicitly in case we assume that the judgments are also complete. The reason for dropping completeness is that in a number of non-classical logics, e.g. intuitionistic logic and in logics with no classical negation, is not appropriate to express the duality between acceptance and rejection by means of negation. The reason is that positive information and negative information have different statuses in constructive systems.\footnote{Similar motivations for relaxing completeness and closure under complements have been introduced for the case of Description Logics in \cite{PorelloEndrissJLC2014}.}\par

Denote $J(\Phi_{L})$ the set of all robustly consistent judgement sets (multisets, lists) on $\Phi_{L}$. In case we assume that individual judgment sets are also complete, we denote by $J^{*}(\Phi_{L})$ the set of all robustly consistent and complete judgement sets (multisets, lists) on $\Phi_{L}$.  A \textit{profile} of judgements sets $\textbf{J}$ is a vector $(J_{1}, \dots, J_{n})$, where $n = |N|$. In the remainder of the paper, we will occasionally use the characteristic function of a judgment set (multiset or list), with values in $\{0, 1\}$ to visualise profiles by means of a table. By slightly abusing the notation, we will do so also for multisets and lists.\par

\subsection{Aggregation procedures}

We intend to model aggregators that take profiles of judgments that are rational according to a given logic $L$ and return judgements which can be evaluated with respect to a (possibly) different logic $L'$. 
In case $L$ and $L'$ are the same, we are in the standard situation in JA. When the languages of $L$ and $L'$ are different, we would need to define a translation function from the language of $L$ into the language of $L'$. We mainly discuss embeddings of the outcome of an aggregation procedure into weaker or stronger logics that share the same language, with the exception of the discussion in Section 9.1. 

\begin{definition}
An \textit{aggregator procedure} is a function $F: J(\Phi_{L})^{n} \to \textsf{P}(\Phi_{L'})$.
\end{definition}

This definition is quite general and allows in principles for associating profiles of any structure (sets, multisets, lists) to any type of structure (set, multiset, or list). 
In practice, as we shall see in the next sections and as we discussed in Section 2, the type of structure is determined by the logic that is used to assess individual and collective judgments. For instance, the standard JA model associates profiles of sets to sets of judgments. The same shall hold for the case of intuitionistic logic. When discussing JA for the Lambek calculus, aggregation procedures shall associate profiles of lists to lists of judgments and, in the case of linear and relevant logics, the aggregation procedures shall associate multisets to multisets. In Section 9.1, we shall discuss the case of aggregators that connect distinct logics, and in particular we focus on aggregators that associate sets of judgments to multisets of judgments.\par
Note that our definition of aggregation procedure allows for aggregators that return sets (mutlisets, lists) of judgments that are inconsistent w.r.t. $L'$.  This is motivated by the fact that we want to assess the outcome of an aggregation procedure w.r.t.  possibly different logics. That is why the codomain of $F'$ is defined by the set of all sets (multisets, lists) $\textsf{P}(\Phi_{L'})$. \par
In the standard JA framework, an aggregation function depends on the choice of an agenda $\Phi_{L}$, that determines the class of judgment sets, besides on the number of individuals. Here, in the general case, an aggregation function depends on two agendas: the agenda w.r.t. which individuals make their judgments $\Phi_{L}$ and the agenda w.r.t. which the rationality properties of the collective set are evaluated $\Phi_{L'}$. We denote by $[\Phi_{L},\Phi_{L'}]$ the class of aggregation functions from judgments defined on $\Phi_{L}$ to judgments defined in $\Phi_{L'}$, i.e. $[\Phi_{L},\Phi'_{L'}] = \{F \mid F: J(\Phi_{L})^{n} \to \textsf{P}(\Phi_{L'})\}$. Moreover, we use the notation $[\Phi_{L}^{*},\Phi_{L'}]$ when we assume that individual judgments are also complete, i.e.  $[\Phi_{L}^{*},\Phi_{L'}] = \{F \mid F: J^{*}(\Phi_{L})^{n} \to \textsf{P}(\Phi_{L'})\}$. Denote by $N_{\phi}$ the set  $\{i \mid \phi \in J_{i} \}$, the majority rule for $[\Phi_{L},\Phi_{L'}]$ is defined as follows.

\begin{definition}
The majority rule $M : J(\Phi_{\textbf{L}})^{n} \to \textsf{P}(\Phi_{\textbf{L'}})$ is defined by $\phi \in M(\textbf{J})$ iff $|N_{\phi}| > n/2$.
\end{definition}

We also define the following class of aggregation functions that is a simple generalization of the majority rule, the class consists of the \emph{uniform quota rules}, discussed by \cite{DietrichListJTP2007}.

\begin{definition} Let $m \in \{0, \dots, |N| +1\}$, the uniform quota rule defined with quota $m$ is the
aggregation procedure $F_m : J(\Phi_{\textbf{L}})^{n} \to \textsf{P}(\Phi_{\textbf{L'}})$ such that $\phi \in F_{m}(\textbf{J})$ iff $|N_{\phi}| \geq m$. 
\end{definition}

The usual properties of aggregation functions can be rephrased as follows. We start by listing those properties that are usually intended to model fairness desiderata of aggregation procedures. 

\begin{itemize}
 \item \textbf{Anonymity} (A): For any profile $\textbf{J}$ and permutation $\sigma: N \to N$, $F(J_{1}, \dots, J_{n}) = F(J_{\sigma(1)}, \dots, J_{\sigma(n)})$.
 \item \textbf{Neutrality} (N): For any $\phi$ and $\psi$ in $\Phi_{L}$ and profile $\textbf{J}$, if for all $i$ $\phi \in J_{i} \Leftrightarrow \psi \in J_{i}$, then 
 $\phi \in F(\textbf{J}) \Leftrightarrow \psi \in F(\textbf{J})$.
 \item \textbf{Independence} (I): For any $\phi \in \Phi_{L}$ and profiles $\textbf{J}, \textbf{J}' \in J(\Phi_{L})^{n}$, if $\phi \in J_{i} \Leftrightarrow \phi \in J_{i}'$, then $\phi \in F(\textbf{J}) \Leftrightarrow \phi \in F(\textbf{J}')$.
 \item \textbf{Monotonicity} (M): For any $\phi \in \Phi_{L}$ and profiles $\textbf{J} = (J_{1}, \dots,J_{i}, \dots, J_{n})$ 
 and $\textbf{J}' = (J_{1}, \dots,J_{i}', \dots, J_{n})$ if $\phi \notin J_{i}$ and $\phi \in J_{i}'$, then $\phi \in F(\textbf{J}) \Rightarrow \phi \in F(\textbf{J'})$
 \item \textbf{Acceptance-rejection neutrality} (arN): For any $\phi, \psi \in \Phi_{L}$ and any profile $\textbf{J} \in J(\Phi_{L})^{n}$, we have that if
$\phi \in J_{i} \Leftrightarrow \psi \notin J_{i}$ for all agents $i \in N$, then  $\phi \in F(\textbf{J}) \Leftrightarrow \psi \notin F(\textbf{J})$.  
\end{itemize}

(A) states that the aggregator does not favour any particular agent, while (N) implies that it does not favour any particular proposition.
(I) means that the outcome of $F$ w.r.t. a proposition $\phi$ in two different profiles only depends on the patterns of acceptance in the two profiles. 
(M) implies that, by increasing the support of a proposition, $F$ does not change its acceptance. Acceptance-rejection neutrality (arN) has been introduced in \cite{DietrichListSCW2008,DietrichList2009} in order to characterise aggregators in case of weak assumptions concerining individual rationality, namely in case individual judgements sets are just assumed to be consistent. (arN) means that the aggregator is not biased either for or against the acceptance of any proposition.\footnote{This version of acceptance-rejection neutrality is due to \cite{EndrissEtAlJAIR2012}.} \par

The notion of collective rationality that is standard in JA (cf. \cite{ListPuppe2009,EndrissEtAlJAIR2012}) states that an aggregation procedure is collectively rational iff for every profile, the output of the aggregation procedure is consistent, complete, and deductively closed w.r.t. classical logic. Here, we make explicit the logic with respect to which the aggregation is assessed, that is, we define the rationality conditions with respect to the logic that is used to evaluate the output of the aggregation. 

\begin{itemize}
\item $F \in [\Phi_{L},\Phi_{L'}]$ is \textbf{complement-free} iff for every $\textbf{J}$, $F(\textbf{J})$ is complement-free w.r.t. $\textbf{L}'$.
\item $F \in [\Phi_{L},\Phi_{L'}]$ is \textbf{consistent} w.r.t. iff for every $\textbf{J}$, $F(\textbf{J})$ is consistent w.r.t. $\textbf{L}'$.
\item $F \in [\Phi_{L},\Phi_{L'}]$ is \textbf{deductively closed} iff for every $\textbf{J}$, $F(\textbf{J})$ is deductively closed w.r.t. $\textbf{L}'$.
\item $F \in [\Phi_{L},\Phi_{L'}]$ is \textbf{complete}  iff for every $\textbf{J}$, $F(\textbf{J})$ is complete w.r.t. $\textbf{L}'$ 
\item $F \in [\Phi_{L},\Phi_{L'}]$ is \textbf{weakly rational} (WR) iff for every $\textbf{J}$, $F(\textbf{J})$ is complete and complement-free w.r.t. $\textbf{L}'$
\item $F \in [\Phi_{L},\Phi_{L'}]$ is \textbf{robustly consistent} iff for every $\textbf{J}$, $F(\textbf{J})$ is robustly consistent wr $\textbf{L}'$
\end{itemize}

In case we do not assume that individual judgments are complete, we confine ourselves to the study of the preservation of (robust) consistency. 


Let $AX$ be a list of axioms among those above. We denote by $[\Phi_{L},\Phi_{L'}](AX)$ the set of aggregation functions defined with domain $J(\Phi_{L})^{n}$ and range $J(\Phi_{L'})$ that satisfy the axioms in $AX$.\par

In \cite{EndrissEtAlJAIR2012}, two classes of aggregation functions can be characterized in terms of the axioms above. The first class just contains the majority rule, hence it is the characterization of the majority rule, which adapts May's theorem for the case of JA, cf. \cite{MayEconometrica1952}. 
The second class is obtained by dropping weak rationality (WR) and corresponds to the class of uniform quota rules.  
The following proof adapts the one in \cite{EndrissEtAlJAIR2012} for the case of the logics that we have introduced. The significant difference is that, in order to show that the majority rule satisfies (WR), we have to assume that the individual judgment sets are robustly consistent.

\begin{theorem}\label{majority} $F$ is the majority rule iff $F \in [\Phi_{\textbf{L}}^{*},\Phi_{\textbf{L}'}](WR,A,I,N,M)$
\end{theorem}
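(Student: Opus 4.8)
The plan is to prove the two directions of the biconditional separately, following the structure of May's theorem adapted to judgment aggregation. Throughout, the workhorse is the identity $|N_\phi| + |N_{\sim\phi}| = n$, valid because individual sets lie in $J^{*}(\Phi_{L})$, hence are complete and—being robustly consistent—complement-free (invoking that robust consistency entails complement-freeness even without weakening). Thus for every $\phi$ each agent accepts exactly one of $\phi, \sim\phi$, which forces the two coalitions to partition $N$.

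For the left-to-right direction I would assume $F$ is the majority rule and verify the five axioms. Anonymity, neutrality, independence, and monotonicity are immediate from the fact that $\phi \in M(\textbf{J})$ depends only on the cardinality $|N_\phi|$: permuting agents leaves $|N_\phi|$ unchanged (A); two propositions with identical acceptance patterns have $N_\phi = N_\psi$ (N); the verdict on $\phi$ reads off only $N_\phi$ (I); and raising support can only push $|N_\phi|$ across the threshold, never back (M). The substantive axiom is weak rationality. Using $|N_\phi| + |N_{\sim\phi}| = n$ together with the oddness of $n$ (so $n/2$ is not an integer), exactly one of $|N_\phi| > n/2$ and $|N_{\sim\phi}| > n/2$ holds; hence $M(\textbf{J})$ contains exactly one of $\phi$ and $\sim\phi$, which is precisely completeness together with complement-freeness with respect to $L'$.

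For the converse I would assume $F$ satisfies $(WR,A,I,N,M)$ and reconstruct it as the majority rule. By independence the verdict on $\phi$ is a function of the coalition $N_\phi$ alone; anonymity collapses this to a function of $|N_\phi|$; and neutrality forces one and the same function for every proposition. This yields a single acceptance function $g:\{0,\dots,n\}\to\{0,1\}$ with $\phi \in F(\textbf{J}) \Leftrightarrow g(|N_\phi|)=1$, and monotonicity makes $g$ non-decreasing. Applying weak rationality and again $|N_{\sim\phi}| = n - |N_\phi|$ gives the balance condition $g(k) + g(n-k) = 1$ for every realizable count $k$ (realizability follows from the existence of complete, robustly consistent sets containing $\phi$ or $\sim\phi$). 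Combining non-decrease with this self-duality and the oddness of $n$ pins $g$ down: for $k < n/2$ one gets $g(k)=0$ and for $k > n/2$ one gets $g(k)=1$, with no value at $n/2$ to settle. Hence $\phi \in F(\textbf{J}) \Leftrightarrow |N_\phi| > n/2$, i.e. $F$ is the majority rule.

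The main obstacle—and the point where the argument genuinely departs from the classical proof of \cite{EndrissEtAlJAIR2012}—is the identity $|N_\phi| + |N_{\sim\phi}| = n$, on which both directions turn. In classical logic it follows from individual consistency plus completeness, since consistency there already yields complement-freeness; but in the present substructural setting weakening may fail, so consistency no longer excludes holding both $\phi$ and $\sim\phi$. This is exactly why the hypothesis must be strengthened to robust consistency. I would therefore foreground this step, making explicit where robust consistency (rather than mere consistency) is used, and note that only weak rationality—not full consistency or deductive closure with respect to $L'$—enters the argument, so the characterization is insensitive to the choice of the evaluating logic $L'$ beyond its sharing the agenda.
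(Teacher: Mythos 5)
Your proposal is correct and follows essentially the same route as the paper: both directions are the May-style argument in which $(A,I,N)$ collapse $F$ to a threshold function on $|N_\phi|$, monotonicity makes it non-decreasing, and weak rationality pins the threshold at $\frac{n+1}{2}$; and both proofs locate the crucial non-classical subtlety in exactly the same place, namely that robust consistency (not mere consistency) of individual sets is what guarantees complement-freeness when weakening fails. The only cosmetic difference is that you package the weak-rationality step as the partition identity $|N_\phi|+|N_{\sim\phi}|=n$ with a self-duality condition $g(k)+g(n-k)=1$, whereas the paper argues the two halves separately by a coalition-intersection (pigeonhole) argument; these are the same computation.
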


\begin{proof}
From left to right, the majority rule satisfies the axioms. We only show that the majority rule is weakly rational. Assume that $|N| = n$. For complement-freeness, suppose, by contradiction, that there exists a $\phi$ in $\Phi_{L}$ such that both $\phi$ and $\sim \phi$ are in $M(\textbf{J})$. Then 
$|N_{\phi}| \geq \frac{n+1}{2}$ and $|N_{\sim \phi}| \geq \frac{n+1}{2}$. This entails that there exists an individual $i$ such that $\phi$ and $\sim \phi$ are in $J_{i}$, against the assumption that each $J_i$ is robustly consistent.\footnote{Assuming only consistency of the individual sets is not sufficient, for instance $\{A, \neg A, C\}$ is consistent in \MALL but not complement-free.}\par
The majority rule is also complete. Suppose by contradiction that neither $\phi$ nor $\sim \phi$ are in $M(\textbf{J})$. Then $|N_{\phi}| < \frac{n+1}{2}$ and $|N_{\sim \phi}| < \frac{n+1}{2}$, which entails that there exists an $i$ such that $J_i$ violates completeness.\par
From right to left. Assume $F$ satisfies the axioms above. Since $F$ satisfies $(A)$,$(I)$,$(N)$, the outcome of $F$ only depends on the cardinality of the set of individuals accepting $\phi$ (see also \cite{ListPuppe2009}). That is, $F$ can be represented by a function $h : \{0,\dots, n=|N|\} \to \{0,1\}$ such that $\phi \in F(\textbf{J})$ iff $h(|N_{\phi}|) = 1$. Since $F$ satisfies (M), if $h(i) = 1$ and  $j\geq i$, then also $h(j)=1$. Suppose then that $k$ is the minimum for which $h(k) =1$. Since $F$ satisfies (WR), $F$ must be complete, we get that $k \leq \frac{n+1}{2}$, otherwise there are profiles that lead to incomplete judgment sets. Since $F$ has to be complement-free, we get $k \geq \frac{n+1}{n}$, to avoid acceptance of a formula and its negation.
Thus, $k = \frac{n+1}{2}$, hence $F$ is the majority rule.
\end{proof}
 
Note that there is no mention of preserving (robust) consistency at this point, Theorem \ref{majority} only shows that the majority rule preserves complement-freeness and completeness. By assuming mere consistency instead of robust consistency of the individual judgments, as in the standard JA result, cf.  \cite{ListPuppe2009}, the proof fails for the case of logics without weakening. For instance, suppose $\textbf{J}$ is composed of $n$ copies of $J_i = \{A, \neg A, B\}$: each $J_i$ is consistent but not complement-free, hence the majority $M(\textbf{J})$ would violate complement-freeness as well.\par 
The class of uniform quota rules is characterized as follows. Since the rationality conditions do not enter the proof, we can simply adapt the proof in \cite{EndrissEtAlJAIR2012} to the present framework.
 
 \begin{proposition}
 $F$ is a uniform quota rule iff $F \in [\Phi_{\textbf{L}},\Phi_{\textbf{L}'}](A,I,N,M)$
 \end{proposition}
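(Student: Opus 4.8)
The plan is to characterize the uniform quota rules via the same representation-by-a-threshold-function argument used in the proof of Theorem~\ref{majority}, but now without appealing to any rationality condition, since the codomain is all of $\textsf{P}(\Phi_{L'})$ and no consistency, completeness, or complement-freeness is required. First I would establish the easy direction: every uniform quota rule $F_m$ satisfies $(A)$, $(I)$, $(N)$, and $(M)$. Anonymity is immediate because $F_m$ depends only on $|N_\phi|$, which is invariant under permutations of the agents. Independence holds because membership of $\phi$ in $F_m(\textbf{J})$ depends solely on the acceptance pattern of $\phi$ across the profile, i.e.\ on $|N_\phi|$. Neutrality follows because the acceptance condition $|N_\phi| \geq m$ uses the same threshold $m$ for every proposition, so if $\phi$ and $\psi$ have identical acceptance patterns they are treated identically. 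Monotonicity holds because raising support can only increase $|N_\phi|$, and the condition $|N_\phi| \geq m$ is upward-closed in $|N_\phi|$.

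For the converse, I would assume $F$ satisfies $(A)$, $(I)$, and $(N)$ and argue, exactly as in Theorem~\ref{majority}, that the outcome depends only on the number of accepting agents. Concretely, $(I)$ guarantees that the acceptance of $\phi$ is a function of the column of $\phi$ in the profile; $(A)$ reduces this dependence to the cardinality $|N_\phi|$; and $(N)$ ensures a single such function is shared across all propositions. Hence $F$ is representable by some $h : \{0, \dots, n\} \to \{0,1\}$ with $\phi \in F(\textbf{J})$ iff $h(|N_\phi|) = 1$. I would cite \cite{ListPuppe2009} and the argument already given in the proof of Theorem~\ref{majority} for this representation step rather than redo it in full.

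The remaining step is to use $(M)$ to show that $h$ is monotone, i.e.\ if $h(i) = 1$ and $j \geq i$ then $h(j) = 1$. This is identical to the corresponding passage in Theorem~\ref{majority}: increasing the support for $\phi$ from $i$ to $j$ accepting agents cannot turn an accepted proposition into a rejected one. A monotone $\{0,1\}$-valued function on $\{0,\dots,n\}$ is precisely a threshold function: either it is constantly $0$, or there is a least $m \in \{0, \dots, n\}$ with $h(m) = 1$, in which case $h(k) = 1$ iff $k \geq m$. The constantly-$0$ case corresponds to the quota $m = |N| + 1$ (never accept), which is why the definition of the uniform quota rule admits $m$ up to $|N|+1$. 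In either case $F = F_m$ for this $m$, completing the argument.

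The main point to get right — rather than a genuine obstacle — is the bookkeeping about which rationality conditions do and do not enter. The essential contrast with Theorem~\ref{majority} is that here $(WR)$ is dropped, so the threshold $m$ is no longer pinned to the majority value $\frac{n+1}{2}$; without the completeness and complement-freeness constraints, every monotone threshold $m \in \{0,\dots,|N|+1\}$ yields an admissible aggregator. I would therefore emphasize, as the excerpt already notes, that the rationality conditions on $F(\textbf{J})$ never appear in this proof, so the argument of \cite{EndrissEtAlJAIR2012} transfers verbatim to the present non-classical setting; the passage to multisets or lists of judgments changes nothing, since the representation argument only manipulates the cardinalities $|N_\phi|$.
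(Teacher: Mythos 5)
Your argument is correct and is exactly the adaptation the paper has in mind: the paper omits the proof, remarking only that the rationality conditions do not enter and that the characterization of \cite{EndrissEtAlJAIR2012} transfers directly, which is precisely the threshold-function representation you spell out (the same $h$ as in Theorem~\ref{majority}, with $(WR)$ dropped so that $m$ ranges over all of $\{0,\dots,|N|+1\}$). Your closing observation that the argument only manipulates the cardinalities $|N_\phi|$ and is therefore insensitive to whether judgments form sets, multisets, or lists is the paper's implicit justification as well.
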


%
%
%
%
%
%
%

We adapt now the concept of safety of an agenda, that is due to \cite{EndrissEtAlJAIR2012}. Since we are assuming that the individual agenda and the collective agenda might differ, the concept of safety applies to a pair of agendas. 

\begin{definition}
For any set of axioms $AX$, a pair of agendas $(\Phi_{L}, \Phi_{L'})$ is \emph{safe} for axioms $AX$ iff every $F$ in $[\Phi_{L},\Phi_{L'}](AX)$ is robustly consistent.
\end{definition}

The safety of an agenda amounts to assuming that every judgment aggregation problem defined on that agenda $\Phi_{L}$ that uses aggregators of the given class preserves (robustly) consistent outcomes in $L'$. In case $L = L' = \CL$ we obtain the standard definition of safety \cite{EndrissEtAlJAIR2012}.\par
Since we are assessing JA in a variety of logics, it is useful to present the following definition of safety of logics for sets of axioms. 

\begin{definition}
A pair of logics $(L,L')$ is \emph{safe} for axioms $AX$ iff every pair of agenda $(\Phi_{L}, \Phi_{L'})$ is \emph{safe} for axioms $AX$ (i.e. for every pair of agendas $(\Phi_{L}, \Phi_{L'})$, every $F$ in $[\Phi_{L},\Phi_{L'}](AX)$ is robustly consistent.)
\end{definition}

The concept of safety of logics amounts to assuming that for every possible agenda defined by means of the language of the logic $L$ and for every profile of judgments, the aggregation function $F$ preserves robustly consistent judgments defined w.r.t logic $L$ when evaluated w.r.t. the logic $L'$.\par
In case the aggregation procedure is defined w.r.t. a single agenda and w.r.t. a single logic, we say that the (single) agenda $\Phi_{L}$ is safe for the class of axioms $AX$ and that the (single) logic $L$ is. 
Note that concept of safety still applies to classes of axioms that define a single procedure, e.g. to the case of the majority rule that is defined by the axioms WR, A, I, N, and M. In that case, a possibility result --- that states the \emph{existence} of a procedure that satisfies a certain number of axioms and preserves consistency --- and safety results --- that state that \emph{every} function in a certain class preserve consistency --- coincide. We label the class of axioms that characterise the majority rule by MAJ. \par

The last concepts of the standard theory of judgment aggregation that we rephrase for this setting is the following list of properties of agendas. Here we simply generalise it to cope with a variety of logics.\par

Recall that a set (multiset, list) $Y$ of formulas of $L$ is inconsistent w.r.t. $L$ iff $Y \t \emptyset$. $Y$ is minimally inconsistent iff $Y$ is inconsistent and every subset (submultiset, sublist) of $Y$ is consistent. 

\begin{definition}\label{agendas} The following properties define classes of agendas:
\begin{itemize}
\item An agenda $\Phi_{L}$ has the \emph{median property} (MP) iff every minimally inconsistent set (multiset, list) of formulas $Y$ of $\Phi_{L}$ has cardinality at most 2. 
\item An agenda $\Phi_{L}$ has the \emph{simplified median property} (SMP) iff every (non-trivially) inconsistent set (multiset, list) of formulas $Y$ of $\Phi_{L}$ has a subset (submultiset, sublist) $\{\phi, \psi\}$ such that $\phi$ and $\neg \psi$ are logically equivalent: $\phi \t_{L} \neg \psi$ and $\neg \psi \t_{L} \phi$.
\item An agenda $\Phi_{L}$ has the \emph{syntactic simplified median property} (SSMP) iff every (non-trivially) inconsistent set (multiset, list) of formulas $Y$ of $\Phi_{L}$ has a subset (submultiset, sublist) $\{\phi, \neg \phi\}$.
\item An agenda  $\Phi_{L}$  has the $k$-median property (kMP) iff every minimally inconsistent set (multiset, list) of formulas $Y$ of $\Phi_{L}$ has cardinality at most $k$.
\end{itemize}
\end{definition}

The median property is the case with $k = 2$ of the $k$-median property. Moreover, the SSMP entails SMP which in turn entails MP. The opposite directions do not hold.  The \emph{median property} is due to \cite{NehringPuppeJET2007}. As we shall see, the median property characterizes, in case of classical logic, the agendas that are safe for the majority rule. The other properties are adequate to characterize agendas for larger classes of aggregators, cf. \cite{EndrissEtAlJAIR2012,ListPuppe2009}.

\subsection{Summary of results}

We summarise in the following table the results that we are going to establish in the subsequent sections concerning the safety of logics and agendas for a set of axioms. The first line of Table 6 simply rephrases the results about classical logic and classical agendas known from JA. 
%

\begin{table}[h]
\footnotesize
\begin{tabular}{|l|c|c|c|c|c|}
\hline
& MAJ (WR, A, I, N, M) & Quota rules (A, I, N, M)  & (WR, A, N ,I)  &  (WR, A , N) & (WR, A, I)\\
\hline
$\Phi_{\CL}$ & safe iff MP & safe iff kMP  & safe iff SMP & safe iff SMP & safe iff SSMP\\
$\Phi_{\IL}$  & safe iff MP & safe iff kMP  & safe iff SMP & safe iff SMP & safe iff SSMP\\
$\Phi_{\L}$  & safe iff MP & safe iff kMP  & safe iff SMP & safe iff SMP & safe iff SSMP\\
$\Phi_{\MALL}$  & safe iff MP & safe iff kMP  & safe iff SMP & safe iff SMP & safe iff SSMP\\
$\Phi_{\MLL}$  & safe iff MP & safe iff kMP  & safe iff SMP & safe iff SMP & safe iff SSMP\\
$\Phi_{\ALL}$  & always safe & safe with $m \geq n / 2$ & safe iff SMP & safe iff SMP & safe iff SSMP\\
$\Phi_{\ALL W}$  & safe iff MP & safe iff kMP  & safe iff SMP & safe iff SMP & safe iff SSMP\\
$\Phi_{\ALL C}$  & always safe & safe with $m \geq n / 2$  & safe iff SMP & safe iff SMP & safe iff SSMP\\
$\Phi_{\R}$ & safe iff MP & safe iff kMP  & safe iff SMP & safe iff SMP & safe iff SSMP\\
$\Phi_{\AR}$  & always safe & safe with $m \geq n / 2$  & safe iff SMP & safe iff SMP & safe iff SSMP\\
      & & & & &\\
\hline
\end{tabular}
\caption{Summary of results concerning the safety of agendas and logics for sets of axioms.}
\label{tab:summary}
\end{table}

As we shall discuss, all the monotonic logics (i.e. the logics whose sequent calculus admits weakening W)  that are listed in the table exhibit, regarding the safety of the agenda, a situation that is analogous to that of classical logic. This is expected, due to the results in \cite{Dietrich2007}.  However, dropping monotonicity is not sufficient to achieve safety or possibility results, as the situation of a number of non-monotonic logic shows (cf. \MALL, \MLL, \R). 
The case of $\ALL$ and $\AR$ are significant here: since every agenda in \ALL or \AR is safe for MAJ, we can claim that the logics \ALL and \AR are indeed safe for those axioms. In particular, as we shall see, the majority rule is robustly consistent for any agenda in \ALL or \AR. Those systems are indeed non-monotonic however, to achieve safety, as we shall see, we have to restrict to the additive fragment of those logics. 

Observe that when an agenda is not safe for a certain class of axioms AX, this means that there exists an aggregation procedure in the class of functions defined by means of AX that is not robustly consistent. Thus, if an agenda is not safe for axioms AX, this entails that the agenda is not safe for any subset of AX. Moreover, if a logic is not safe for a certain class of axioms AX, it means that there exists an agenda $\Phi_{L}$, such that it exists an aggregation function $F \in [\Phi_{L}, \Phi_{L}](AX)$ that is not robustly consistent. Namely, a possible outcome $Y$ of $F$ is inconsistent in $L$. If $Y$ is inconsistent in $L$ (that is $Y \t_{L} \emptyset$), then $Y$ is inconsistent in any logic $L'$ that is stronger (i.e. that proves more sequents) than $L$ (i.e. $Y \t_{L'} \emptyset$). Therefore, if a logic is not safe for AX, then any stronger logic is not safe for AX.

\section{Judgment Aggregation in extensions of classical logic}\label{sec:JAclassical}

For classical logic $\CL$, we assume that every judgment set is also complete. Moreover, robust consistency and consistency in this case coincide. In our setting, List and Pettit's result can be rephrased as follows.

\begin{theorem}[List and Pettit]\label{ListPettit2002}
There are agendas defined in $\CL$ that are not safe for $MAJ$ (i.e. for axioms (WR, A, I, N, M)). 
\end{theorem}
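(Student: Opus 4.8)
The plan is to exhibit a concrete agenda in $\CL$ together with a profile of individual judgments witnessing that the majority rule---which by Theorem \ref{majority} is the unique aggregator in $[\Phi_{\CL}^{*},\Phi_{\CL}](WR,A,I,N,M)$---can return an inconsistent collective set. Since the class of axioms $MAJ$ pins down exactly the majority rule, showing that \emph{some} function in the class fails to be robustly consistent is the same as showing that the majority rule itself fails on a suitable agenda. This reduces the whole statement to producing one bad example, the classical \emph{doctrinal paradox}.

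First I would fix the agenda generated by the three propositions $p$, $q$, and $p \wedge q$ (or equivalently $p \rightarrow q$ in the conditional formulation), closing it under complements to obtain $\Phi_{\CL} = \{p,\neg p,\, q,\neg q,\, p\wedge q,\neg(p\wedge q)\}$; note this is a legitimate agenda since it contains no tautology or contradiction. Next I would describe a profile over three agents (recall $n$ is odd and at least $3$): agent $1$ accepts $\{p,q,p\wedge q\}$, agent $2$ accepts $\{p,\neg q,\neg(p\wedge q)\}$, and agent $3$ accepts $\{\neg p,q,\neg(p\wedge q)\}$. I would check that each $J_i$ is complete and robustly consistent w.r.t. $\CL$; in classical logic robust consistency coincides with consistency, so this is a routine verification that each individual set picks a classically satisfiable valuation.

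Then I would compute the majority outcome issue by issue: $p$ is accepted by agents $1,2$ (majority), $q$ is accepted by agents $1,3$ (majority), but $p\wedge q$ is accepted only by agent $1$, so $\neg(p\wedge q)$ wins with agents $2,3$. Hence $M(\textbf{J}) \supseteq \{p, q, \neg(p\wedge q)\}$, and I would conclude by exhibiting the derivation $p, q \vdash_{\CL} p\wedge q$ (an instance of $\wedge$R, which together with $\neg$L yields $p,q,\neg(p\wedge q)\vdash_{\CL}\emptyset$), so the collective set is inconsistent w.r.t. $\CL$. This contradicts robust consistency, establishing that $\Phi_{\CL}$ is not safe for $MAJ$.

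The only genuinely substantive point---everything else is bookkeeping---is articulating \emph{why} this suffices: because $MAJ$ characterises a unique function, safety for $MAJ$ is equivalent to robust consistency of the majority rule alone, so a single counterexample profile settles the matter. The main obstacle, such as it is, lies in being careful that the chosen agenda meets the standing assumptions (closure under complements, no units, and the avoidance of tautologies/contradictions) and that the individual sets are genuinely complete and robustly consistent, rather than in any difficulty of the combinatorial computation itself.
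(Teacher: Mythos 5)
Your proposal is correct and follows essentially the same route as the paper: it exhibits the standard discursive-dilemma agenda $\{A,\neg A, B,\neg B, A\wedge B,\neg(A\wedge B)\}$ with the same three-agent profile, computes the majority outcome $\{A,B,\neg(A\wedge B)\}$, and shows its inconsistency via $\wedge$R followed by $\neg$L. The added remark that the axioms $MAJ$ pin down the majority rule uniquely, so a single bad profile suffices, is a sound and slightly more explicit justification than the paper gives.
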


For instance, the agenda $\{A,  B, A \wedge B, \neg A, \neg B, \neg(A \wedge B)\}$ provides the famous discursive dilemma (cf. \cite{ListPettit2002,KornhauserSagerCLR1993}). That is, on that agenda, there is in fact a profile $\textbf{J}$ such that the majority rule returns an inconsistent set. 

 \begin{center}
\begin{tabular}{ccccccc}
 & $A$ & $A \wedge B$ & $B$ & $\neg A$ & $\neg (A \wedge B)$ & $\neg B$  \\
\hline
$i_{1}$ & 1 & 1 & 1 & 0 & 0 & 0 \\   
$i_{2}$ & 1 & 0 & 0 & 0 & 1 & 1\\ 
$i_{3}$ & 0 & 0 & 1 & 1 & 1 & 0\\
\hline 
maj. & \textbf{1} & 0 & \textbf{1} & 0 & \textbf{1} & 0\\
\end{tabular}
\end{center}

The collective set $F(\textbf{J}) = \{ A, B, \neg(A \wedge B)\}$ is not consistent in classical logic. In proof-theoretic terms, it means that $ \{ A, B, \neg(A \wedge B)\} \t $, and that can be shown as follows.

\begin{center}
\AC{A \t A}
\AC{B \t B}
\rl{$\wedge$L}
\BC{A,B \t A \wedge B}
\rl{$\neg$L}
\UC{A, B, \neg(A \wedge B) \t \empty}
\DisplayProof
\end{center}

Theorem \ref{ListPettit2002} can be extended to various classes of functions, in particular all those classes that include the majority rule (cf. \cite{ListPuppe2009,EndrissEtAlJAIR2012}). 

Thus, by playing with our definitions, we can infer that classical logic is not safe for the majority rule and for any class of axioms that define procedures that include the majority rule.


Because of Theorem \ref{ListPettit2002}, in JA it is important to characterize which type of agendas are safe for a certain set of axioms. 
In particular, Theorem \ref{ListPettit2002} can be refined by saying that the agendas that are not safe are those that violates the median property. 

\begin{theorem}[\cite{NehringPuppeJET2007}]
An agenda $\Phi_{CL}$ satisfies the median property iff $\Phi_{CL}$ is safe for MAJ.
\end{theorem}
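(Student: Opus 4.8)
The plan is to prove both directions of the biconditional, reducing the statement to facts about the majority rule established in Theorem~\ref{majority} and about minimally inconsistent sets. The guiding observation is that an inconsistent collective outcome must, after discarding redundant propositions, contain a \emph{minimally} inconsistent subset, and that the majority rule -- being the threshold rule at $(n+1)/2$ -- can be forced to accept every element of any such subset whose cardinality is small enough. Since $L = L' = \CL$ here and individual judgments are complete, robust consistency coincides with consistency, so "safe for MAJ" simply means the majority rule never returns an inconsistent (equivalently, complement-violating or genuinely $\CL$-inconsistent) set.

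For the direction "(MP)~$\Rightarrow$~safe", I would argue contrapositively. Suppose the majority rule is not robustly consistent on $\Phi_{CL}$, so there is a profile $\textbf{J}$ of complete, consistent individual sets with $M(\textbf{J})$ inconsistent. Extract a minimally inconsistent subset $Y = \{\phi_1, \dots, \phi_k\} \subseteq M(\textbf{J})$; by construction each $\phi_j$ is accepted by a strict majority, i.e.\ $|N_{\phi_j}| \geq \frac{n+1}{2}$. The core combinatorial step is to show that if every $\phi_j$ has majority support then $k \leq 2$ fails -- that is, a minimally inconsistent set of size $\geq 3$ \emph{can} arise as a majority outcome, which is exactly what the discursive dilemma of Theorem~\ref{ListPettit2002} exhibits for $k=3$. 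So if (MP) holds (every minimally inconsistent set has size $\leq 2$), then the only minimally inconsistent subsets of $M(\textbf{J})$ have the form $\{\phi, \sim\phi\}$ up to logical equivalence; but $M(\textbf{J})$ is complement-free by Theorem~\ref{majority}, ruling these out and yielding consistency.

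For the converse "safe~$\Rightarrow$~(MP)", I again go contrapositively: assume (MP) fails, so there is a minimally inconsistent set $Y = \{\phi_1, \dots, \phi_k\}$ with $k \geq 3$. The task is to construct a profile of complete, consistent individual judgment sets whose majority outcome contains all of $Y$, hence is inconsistent. The standard construction partitions the $n$ agents into (roughly) three equal blocks and, for each block, has the agents reject exactly one $\phi_j$ (choosing the rejected formula to rotate across blocks) while accepting the other elements of $Y$; minimal inconsistency of $Y$ guarantees that each such individual set extends to a complete consistent judgment set on $\Phi_{CL}$. Because $n$ is odd and $k \geq 3$, each $\phi_j$ is then rejected by at most one block, hence accepted by a strict majority, so $Y \subseteq M(\textbf{J})$ and the majority rule fails consistency.

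The main obstacle is the converse construction: one must verify that the ``reject one element of $Y$'' individual sets are genuinely \emph{consistent} and can be completed to full judgment sets, which relies precisely on the minimality of $Y$ (each proper subset is consistent), and one must manage the bookkeeping so that every $\phi_j$ survives the majority count when $k > 3$ blocks are distributed among $n$ voters. The rotation argument handles $k=3$ cleanly; for general $k \geq 3$ one assigns each agent to reject a single $\phi_j$ in a balanced way so that no $\phi_j$ loses its majority, which works because each formula needs only avoid rejection by half the electorate. The forward direction is comparatively routine once complement-freeness from Theorem~\ref{majority} is invoked.
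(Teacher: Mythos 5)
Your converse direction (safe $\Rightarrow$ MP, argued contrapositively) is essentially the construction the paper itself uses for the analogous results in non-classical logics (see the proof of Theorem \ref{th:ilmp}): distribute the rejections of the elements of a minimally inconsistent $Y$ with $|Y|\geq 3$ so that each element is rejected by fewer than half the agents, and invoke minimality of $Y$ to certify that each individual set is consistent and, in classical logic, extendable to a complete consistent judgment set. That part is sound. (Note that the paper does not reprove this particular theorem --- it cites Nehring and Puppe --- but the same two-step template is carried out explicitly for intuitionistic logic, the Lambek calculus and \MALL.)

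The forward direction as you wrote it has a genuine gap. After reducing an inconsistent outcome to a minimally inconsistent subset $Y=\{\phi_1,\phi_2\}\subseteq M(\textbf{J})$ of size $2$, you assert that such a pair must be ``of the form $\{\phi,\sim\phi\}$ up to logical equivalence'' and then appeal to complement-freeness from Theorem \ref{majority} to exclude it. Neither step holds: a minimally inconsistent pair need not consist of a formula and something equivalent to its complement --- take $\{A\wedge B,\ \neg A\}$, which is minimally inconsistent although $A\wedge B$ is not equivalent to $\sim(\neg A)=A$ --- and complement-freeness is a purely syntactic condition that cannot exclude such a pair from $M(\textbf{J})$. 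What you are implicitly invoking is the simplified median property, which is strictly stronger than MP and characterises a different class of aggregators. The correct and much simpler closing step, which is the one the paper uses in Theorem \ref{th:ilmp}, is a pigeonhole count: since $|N_{\phi_1}|\geq\frac{n+1}{2}$ and $|N_{\phi_2}|\geq\frac{n+1}{2}$ while $|N|=n$, the two majorities intersect, so some individual $j$ has $\{\phi_1,\phi_2\}\subseteq J_j$, contradicting the consistency of $J_j$; the case $|Y|=1$ is excluded because agendas contain no contradictions. With that substitution your proof goes through; also, the remark that a size-$3$ minimally inconsistent set \emph{can} be majority-accepted belongs to the converse direction and plays no role in the forward one.
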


For larger classes of functions, the median property is no longer sufficient. The following proposition summarizes the relationships between agenda properties and classes of functions for classical logic. 

\begin{proposition}\label{cl:SSMP}[\cite{EndrissEtAlJAIR2012}]\ The following facts hold:
\begin{itemize}
\item $\Phi_{\CL}$ satisfies the SMP iff it is safe for $(WR, A, N, I)$
\item $\Phi_{\CL}$ satisfies the SMP iff it is safe for $(WR, A, N)$
\item $\Phi_{\CL}$ satisfies the SSMP iff it is safe for $(WR, A, I)$
\item $\Phi_{\CL}$ satisfies the kMP iff it is safe for $(A, I, N,M)$,  (for $F_{m}$, with $m > n - \frac{n}{k})$
\end{itemize}
\end{proposition}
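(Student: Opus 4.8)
The plan is to prove each of the four biconditionals by establishing two directions: \emph{soundness} of the agenda property (the property forces every aggregator in the stated class to be robustly consistent) and a \emph{counterexample construction} (failure of the property yields an aggregator in the class that is not robustly consistent on some profile). Throughout I would use the facts recorded above that for $\CL$ with complete individual judgments robust consistency coincides with consistency, and that in a logic with weakening every subset of a consistent set is consistent; hence it suffices to reason about plain consistency of the collective outcome. I would also exploit the nesting of the axiom classes: since $[\Phi_{\CL},\Phi_{\CL}](WR,A,N,I)\subseteq[\Phi_{\CL},\Phi_{\CL}](WR,A,N)$, a single soundness argument for the larger class $(WR,A,N)$ covers both of the first two items, while a single counterexample built inside the smaller class $(WR,A,N,I)$ refutes safety for both. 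This is precisely why the two lines share the property SMP.

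For the soundness directions I would argue by contraposition from an inconsistent outcome, i.e.\ from an inconsistent subset $Y \subseteq F(\textbf{J})$. In the case of $(WR,A,I)$ and SSMP the argument is immediate: by SSMP the inconsistent $Y$ contains a syntactic complementary pair $\{\phi,\sim\phi\}$, but complement-freeness, which is part of WR, forbids $F(\textbf{J})$ from containing both $\phi$ and $\sim\phi$; no further axiom is needed for this half. For the SMP cases I would bring in neutrality together with completeness and consistency of the individual sets: if $\{\phi,\psi\} \subseteq F(\textbf{J})$ with $\phi$ and $\neg\psi$ logically equivalent, then for every complete consistent $J_i$ a short check (using that $\phi \vdash \neg\psi$ and, classically, $\neg\phi$ equivalent to $\psi$) gives $\phi \in J_i \Leftrightarrow \sim\psi \in J_i$; hence (N) yields $\phi \in F(\textbf{J}) \Leftrightarrow \sim\psi \in F(\textbf{J})$, and together with $\psi \in F(\textbf{J})$ this again contradicts complement-freeness. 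For the quota/kMP item I would run the standard double-counting argument: an inconsistent outcome of $F_m$ contains a minimal inconsistent set $\{\phi_1,\dots,\phi_j\}$ with $j \le k$; each individual, being consistent, rejects at least one $\phi_l$, so the total number of acceptances is at most $n(j-1)$, while each accepted $\phi_l$ requires at least $m$ acceptances, giving $jm \le n(j-1)$ and hence $m \le n(1-1/j) \le n - n/k$, contradicting $m > n - n/k$.

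The harder, and I expect the decisive, direction is the construction of counterexamples when the property fails, which I would adapt from \cite{EndrissEtAlJAIR2012}. When SSMP (resp.\ SMP) fails there is a non-trivially inconsistent set $Y$ containing no pair $\{\phi,\sim\phi\}$ (resp.\ no pair $\{\phi,\psi\}$ with $\phi$ equivalent to $\neg\psi$); when kMP fails there is a minimally inconsistent set of size strictly greater than $k$. In each case I would design a profile of complete, robustly consistent judgment sets together with an aggregator of the required type (anonymous, and neutral or independent as demanded) whose output is exactly $Y$. The point is that the absence of the forbidden pair is what allows one to distribute the formulas of $Y$ across the agents so that no individual set becomes inconsistent, while the symmetry or independence of the chosen rule forces all of $Y$ into the collective outcome. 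For the quota case the oversized minimal inconsistent set is used to build a profile in which each of its members reaches the threshold $m$, witnessing the inconsistency.

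The main obstacle is the bookkeeping in these constructions: one must simultaneously guarantee that every individual set in the witnessing profile is complete and robustly consistent, that the aggregator one writes down genuinely satisfies \emph{all} the axioms of its class rather than merely producing the desired output on one profile, and, for the two SMP items, that the counterexample lives inside the smaller class $(WR,A,N,I)$ so that it refutes safety for $(WR,A,N)$ as well. Verifying that the majority-style or quota-style rule constructed lies in the intended axiom class, and that the profile respects individual rationality, is where the combinatorial effort of \cite{EndrissEtAlJAIR2012} concentrates; the soundness halves, by contrast, are short once the complement-freeness and counting observations above are in place.
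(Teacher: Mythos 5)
Your proposal is correct and takes the same route as the paper, which states this proposition without proof as a citation of \cite{EndrissEtAlJAIR2012} and reproduces exactly this style of argument for the analogous \MALL results (Proposition \ref{prop:mallsmp}): soundness via complement-freeness plus neutrality (and the $jm \le n(j-1)$ counting bound for quota rules), and explicit profile-plus-aggregator counterexamples when the agenda property fails. Your observation that one soundness argument for $(WR,A,N)$ and one counterexample in $(WR,A,N,I)$ jointly settle the first two items is also the reason the paper lists them with the same property SMP.
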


For the class of uniform quota rules, note that the choice of the threshold is crucial as well to preserve consistency. 
The median property is the condition that guarantees that an agenda in classical logic is safe for the majority rule. Since the median property is defined in terms of minimally inconsistent sets of formulas, any logic that conservatively extends classical propositional logic shall suffer the same problems of aggregation. If $X$ is minimally inconsistent in classical logic, then $X$ is minimally inconsistent in any conservative extension of classical logic. 
Therefore, there is no hope to mend propositional inconsistency by enriching the language of the logic.

\begin{corollary}
Any extension of classical logic is not safe for MAJ.
\end{corollary}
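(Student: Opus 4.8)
The plan is to reduce the corollary to the discursive-dilemma witness already exhibited for Theorem~\ref{ListPettit2002}, and then to transport both the failure of consistency and the validity of the input profile from $\CL$ to the extension by invoking conservativity. First I would fix the agenda $\Phi = \{A, B, A \wedge B, \neg A, \neg B, \neg(A \wedge B)\}$ together with the profile $\textbf{J}$ displayed after Theorem~\ref{ListPettit2002}, for which the majority rule returns $Y = M(\textbf{J}) = \{A, B, \neg(A \wedge B)\}$. The sequent derivation given there establishes $A, B, \neg(A \wedge B) \t$ in $\CL$, so $Y$ is inconsistent, hence not robustly consistent, with respect to $\CL$.

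Now let $L'$ be any (conservative) extension of $\CL$. I would first note that $\Phi$ is a legitimate agenda of $L'$: its formulas lie in $\mathcal{L}_{\CL} \subseteq \mathcal{L}_{L'}$, it is closed under complements, and it contains no units. Since $L'$ proves every sequent provable in $\CL$, the derivation of $A, B, \neg(A \wedge B) \t$ survives verbatim, so $Y$ remains inconsistent in $L'$ and the majority rule $M \in [\Phi_{L'},\Phi_{L'}](MAJ)$ fails robust consistency on $\textbf{J}$. By the definitions of safety of an agenda and of a logic, this already shows that $\Phi$, and therefore $L'$, is not safe for MAJ --- provided $\textbf{J}$ is a genuine input. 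Checking the latter is the one delicate point and is exactly where conservativity, rather than mere extension, is used: completeness of each $J_i$ on $\Phi$ is purely combinatorial and unchanged by the logic, while robust consistency of each $J_i$ follows because the $J_i$ are $\CL$-consistent, their formulas are classical, conservativity creates no new inconsistency among formulas of $\mathcal{L}_{\CL}$, and $L'$ still admits weakening (so robust consistency collapses to consistency as in $\CL$).

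The conceptual reason behind the statement, which I would record, is that minimal inconsistency is invariant under conservative extension: $\{A, B, \neg(A \wedge B)\}$ is minimally inconsistent in $\CL$ since each of its two-element subsets is $\CL$-consistent, and conservativity preserves both the inconsistency of the whole set and the consistency of its proper subsets. Hence $\Phi$ violates the median property in $L'$ exactly as in $\CL$, and the obstruction is inherited from the classical case. The main obstacle to watch is precisely this validity-of-input issue: for a genuinely non-conservative extension a classically consistent $J_i$ could become inconsistent, which would break the profile; so the argument must lean on conservativity and cannot be run for an arbitrary deductive strengthening of $\CL$.
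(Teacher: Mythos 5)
Your proposal is correct and follows essentially the same route as the paper, which justifies the corollary by observing that minimally inconsistent sets of classical formulas (and hence violations of the median property, witnessed by the discursive-dilemma agenda) are preserved under conservative extensions of $\CL$. Your explicit check that the input profile remains admissible --- and your observation that this is precisely where conservativity, rather than mere deductive strengthening, is needed --- is a welcome sharpening of a point the paper leaves implicit (the corollary says ``any extension'' while the surrounding discussion speaks of conservative extensions).
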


For instance, any modal logic and any description logic, as soon as they extend propositional reasoning, are not safe (cf. for instance \cite{PorelloEndrissJLC2014} for description logics, and \cite{Pauly2007}, \cite{EndrissGrandiECAI2014} for general Kripke structures).


\section{Judgment aggregation in Intuitionistic Logic}\label{sec:il}

We start the study of judgment aggregation in non-classical logics by dealing with intuitionistic logic.  For \IL, we do not assume that individual judgments sets are complete. Recall that intuitionistic negation is defined by $\neg A = A \rightarrow \bot$.\footnote{Assuming that the agenda may contain $A \rightarrow \bot$ does not entail that the agenda contains the formula for false $\bot$. That is, in this case, the agenda is not closed under the atoms occurring in the formulas. Such an agenda is called \emph{non truth-functional} in \cite{nehring2008consistent}.} For intuitionistic logic, weakening holds, hence consistency and robust consistency are equivalent. 
We can easily see that with respect to judgment aggregation, intuitionistic logic does not significantly differ from classical logic. 

\begin{theorem}
Intuitionistic logic $\IL$ is not safe for MAJ. 
\end{theorem}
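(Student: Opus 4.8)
The plan is to exhibit a single agenda in intuitionistic logic together with a profile for which the majority rule produces an inconsistent output, thereby showing that \IL is not safe for MAJ. Since \IL is weaker than \CL (it proves fewer sequents), I cannot simply reuse a classical inconsistency verbatim: I must find a minimally inconsistent set whose inconsistency is already derivable intuitionistically. The key observation is that the discursive dilemma from Theorem \ref{ListPettit2002} uses only the inference $A, B \vdash A \wedge B$ together with $\neg(A \wedge B)$, and this inference is constructively valid --- indeed the derivation displayed after Theorem \ref{ListPettit2002}, using $\wedge$L and $\neg$L, goes through unchanged in the single-conclusion intuitionistic calculus because $\neg(A\wedge B) = (A \wedge B) \rightarrow \bot$ and the $\rightarrow$L rule applied to $A \wedge B \vdash A \wedge B$ yields $A, B, \neg(A \wedge B) \vdash \bot$.

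Concretely, I would take the agenda $\Phi_{\IL} = \{A, B, A \wedge B, \neg A, \neg B, \neg(A \wedge B)\}$, noting that it is closed under complements once we read $\neg \phi$ as $\phi \rightarrow \bot$, and reuse the same profile $\textbf{J}$ with three voters as in the classical case. The majority rule returns $M(\textbf{J}) = \{A, B, \neg(A \wedge B)\}$. I would then verify that this set is inconsistent in \IL by giving the intuitionistic derivation of $A, B, \neg(A \wedge B) \vdash \bot$: start from the axioms $A \vdash A$ and $B \vdash B$, apply $\wedge$R (or equivalently build $A \wedge B \vdash A \wedge B$ and use $\wedge$L), to obtain $A, B \vdash A \wedge B$, and then apply the intuitionistic $\rightarrow$L rule against $A \wedge B \vdash A \wedge B$, which is single-conclusion and hence legal, to close the derivation with conclusion $\bot$. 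This shows $M(\textbf{J}) \vdash_{\IL} \bot$, so the output is not consistent, hence not robustly consistent, in \IL.

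The main point to check --- the only place where \IL could in principle behave differently from \CL --- is that the relevant inconsistency does not secretly rely on a classical principle such as the excluded middle or double-negation elimination. Here the derivation uses only $\wedge$-introduction and the definitional unfolding of intuitionistic negation via $\rightarrow$, so no non-constructive step appears; the profile never forces acceptance of a disjunction or a negated formula whose classical-only consequences would be needed. I would therefore emphasize that the discursive dilemma survives the passage to intuitionistic logic precisely because its underlying inconsistency is constructive, and that since individual judgment sets here are not assumed complete, robust consistency coincides with consistency (weakening holds in \IL), so violating consistency suffices to defeat safety for MAJ. This completes the argument by exhibiting one unsafe agenda.
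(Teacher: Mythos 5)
Your proof is correct and follows essentially the same route as the paper: exhibit the discursive-dilemma agenda, reuse the three-voter profile, and verify that $\{A, B, (A\wedge B)\rightarrow\bot\} \vdash \bot$ via a single-conclusion derivation using only $\wedge$R and $\rightarrow$L, which is exactly the paper's displayed derivation. The additional observations (that the inconsistency is constructive and that consistency equals robust consistency in \IL because weakening holds) are accurate and consistent with the paper's discussion.
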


\begin{proof} It is sufficient to exhibit an agenda w.r.t. which an aggregation problem generates an inconsistent set. 
For instance, we show that any agenda that includes $\{A,  B, A \wedge B, (A \wedge B) \rightarrow \bot\}$ is not safe for the majority rule in intuitionistic logic. There exists indeed a profile of judgments $\textbf{J}$ (adapt the one we encountered in Section \ref{sec:JAclassical}) such that $F(\textbf{J}) = \{A, B, (A \wedge B) \rightarrow \bot\}$. This set is inconsistent in intuitionistic logic, as the following proof shows.

\begin{center}
\AC{A \t A}
\AC{B \t B}
\rl{$\wedge$R}
\BC{A,B \t A \wedge B}
\AC{\bot \t \bot}
\rl{$\rightarrow$L}
\BC{A, B, (A \wedge B) \rightarrow \bot \t \bot}
\DisplayProof
\end{center}

\end{proof}

Also in the case of intuitionistic logic, the median property is the appropriate condition that characterizes safe agendas for the majority rule. The following proof largely adapts the known result for the case of non-complete judgments sets (cf. \cite{PorelloEndrissJLC2014}). 

\begin{theorem}\label{th:ilmp}
An agenda $\Phi_{\IL}$ is safe for MAJ iff it satisfies the median property.
\end{theorem}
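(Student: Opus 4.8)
The plan is to prove the equivalence in two directions, following the template already established for classical logic in Theorem~\ref{majority} and the median-property characterization of \cite{NehringPuppeJET2007}, but taking care that individual judgment sets are \emph{not} assumed complete in \IL. For the direction ``median property implies safe'', I would argue contrapositively: suppose $F \in [\Phi_{\IL},\Phi_{\IL}](MAJ)$ is not robustly consistent, so there is a profile $\textbf{J}$ for which $M(\textbf{J})$ is inconsistent (by Theorem~\ref{majority} the majority rule is the only function in this class, so we reason directly about $M$). Pick a minimally inconsistent subset $Y \subseteq M(\textbf{J})$; since \IL\ enjoys weakening, consistency and robust consistency coincide, and every formula of $Y$ is accepted by a strict majority. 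The aim is to show that if the agenda satisfied the median property (so $|Y|\le 2$), then $Y$ could not arise as a majority outcome from robustly consistent individual sets. The standard counting argument shows that a minimally inconsistent $Y$ with $|Y|\le 2$ forces two complementary formulas to be co-accepted, contradicting complement-freeness established in Theorem~\ref{majority}; hence a violation of safety requires a minimally inconsistent set of size at least $3$, i.e.\ a failure of the median property.

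For the converse ``safe implies median property'', again contrapositively, I would assume the median property fails and construct a profile witnessing a consistency violation. So let $Y = \{\phi_1,\dots,\phi_k\}$ be minimally inconsistent with $k \ge 3$. The construction mimics the discursive-dilemma profile: using the $n \ge 3$ odd agents, distribute acceptance of the $\phi_i$ so that each $\phi_i$ is accepted by a strict majority, while each individual accepts only a proper (hence consistent, by minimal inconsistency of $Y$) subset of $Y$. Concretely I would assign to agent $i$ the judgment set obtained by rejecting exactly one $\phi_{j}$ (cycling $j$ through the agents) together with the complements needed to keep each $J_i$ robustly consistent and within the agenda. Because every proper subset of $Y$ is consistent in \IL, each such $J_i$ is consistent, and by weakening it is robustly consistent; yet by minimal inconsistency the whole of $Y$ is inconsistent, so $M(\textbf{J}) \supseteq Y$ is inconsistent, defeating safety.

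The main obstacle, and the place where \IL\ genuinely differs from \CL, is the completeness issue: in intuitionistic logic we do \emph{not} assume individual sets are complete, so the ``balanced'' profiles used in the classical construction must be rebuilt without relying on each agent deciding every agenda formula. The delicate step is verifying that the partial individual judgment sets $J_i$ I assign are (i) genuinely robustly consistent in \IL --- where intuitionistic negation $\neg A = A \rightarrow \bot$ does not cancel double negations --- and (ii) still force a strict majority on every $\phi_i \in Y$ while each $J_i$ misses at least one element of $Y$. Since \IL\ validates weakening, the equivalence of consistency and robust consistency removes one subtlety, so the crux reduces to a purely combinatorial placement of $k \ge 3$ formulas across $n$ voters with the required majority thresholds; this is exactly the argument underlying \cite{NehringPuppeJET2007} and its adaptation in \cite{PorelloEndrissJLC2014}, and I would invoke those constructions, adjusting only for the intuitionistic reading of the complement operation and the absence of completeness.
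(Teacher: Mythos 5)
Your overall architecture matches the paper's: both directions are argued exactly as you outline, your converse construction (cycling the rejected member of a minimally inconsistent $Y$ with $|Y| \geq 3$ through the agents so that each formula keeps a majority while every individual accepts only a proper, hence consistent, subset of $Y$) is the paper's construction up to an inessential change in how the agents are partitioned, and your observation that weakening collapses robust consistency into consistency is precisely what the paper exploits. Since completeness of individual sets is not assumed in \IL, you do not even need to pad the $J_i$ with complements; each agent may simply accept $Y$ minus one formula and nothing else.

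There is, however, one step in your left-to-right direction that fails as written. You claim that a minimally inconsistent $Y$ with $|Y| \leq 2$ in the majority outcome ``forces two complementary formulas to be co-accepted, contradicting complement-freeness established in Theorem~\ref{majority}.'' A minimally inconsistent pair need not be a complementary pair: $\{A \wedge B,\ A \rightarrow \bot\}$ is minimally inconsistent in \IL, yet neither formula is the complement of the other, so the collective acceptance of both violates no complement-freeness condition. (A secondary worry is that Theorem~\ref{majority} is stated for complete individual judgment sets, which are dropped for \IL.) The contradiction must be located at the individual level, not the collective one: since each of $\phi_1, \phi_2 \in Y$ is accepted by at least $\frac{n+1}{2}$ of the $n$ agents, the two accepting coalitions intersect, so some agent $j$ has both $\phi_1$ and $\phi_2$ in $J_j$; by weakening $J_j$ is then inconsistent, contradicting the rationality of agent $j$. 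This is exactly the pigeonhole you set up one sentence earlier when you note that every formula of $Y$ is accepted by a strict majority, so the repair is immediate, but the appeal to collective complement-freeness does not close the argument.
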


\begin{proof}  In one direction, we show that if $\Phi_{\IL}$ satisfies the median property, then the majority rule is consistent.  Suppose by contradiction that there is an agenda $\Phi_{\IL}$ that satisfies the median property and that there is a profile $\textbf{J}$ such that $F(\textbf{J})$ is inconsistent. Since the median property holds, if $F(\textbf{J})$ is inconsistent, then there is a minimally inconsistent set $Y \subseteq F(\textbf{J})$ with cardinality at most 2. Firstly, $Y$ cannot have cardinality 1, otherwise there must be a contradictory formula in the agenda.\footnote{In case we do assume contradictions in the agenda, we can reason as follows: if $|Y|=1$ there should be a majority of agents for which $Y$ is in $J_i$, violating the assumption of consistency of the individual sets $J_i$}
Suppose then that $Y = \{\phi_1, \phi_2\}$. Since $Y$ is accepted by majority, we have that $|\{i \in N \mid \phi_1 \in J_i\}| \geq \frac{n+1}{2}$ and $|\{i \in N \mid \phi_2 \in J_i\}| \geq \frac{n + 1}{2}$. This entails that there is an individual $j$, such that  both $\phi_1$ and $\phi_2$ are in $J_j$, against assumption that every individual judgment set is consistent. \par
In the other direction, we prove the contrapositive statement: if $\Phi_{\IL}$ violates the median property, then $F(\textbf{J})$ is inconsistent. Suppose that $\Phi_{\IL}$ violates the median property, then there exists a subset $Y$ that is minimally inconsistent of size strictly bigger than 3. We construct a profile that violates the consistency of the majority rule.
Suppose $|Y| \geq 3$ and that $\phi$ and $\psi$ are distinct formulas in $Y$. The first $\frac{n-1}{2}$ individuals accept all formulas of $Y$ but $\phi$, the last $\frac{n+1}{2}$ individuals accept all formulas of $Y$ but $\psi$, and the individual $\frac{n+1}{2}$ accepts just $\phi$ and $\psi$. Each individual set is consistent, however, by majority, $Y$ is accepted. That is, $Y$ it is contained in $M(\textbf{J})$. Since $\IL$ satisfies $W$, if $Y$ is inconsistent, then $M(\textbf{J})$ is inconsistent.  
\end{proof}

Note that we can use the sole consistency assumption to conclude the argument that shows that the median property is necessary because $\IL$ satisfies weakening.\par 
By enlarging the class of aggregation procedures beyond the majority rule, that is by focusing on uniform quota rules, we can show that the situation is similar w.r.t. the classical case.

\begin{proposition}  $\Phi_{\IL}$ satisfies the kMP iff it is safe for (A, I, N,M), for $F_{m}$ with $m > n - \frac{n}{k}$.
\end{proposition}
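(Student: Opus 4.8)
The plan is to prove both directions by adapting the characterisation of the median property from Theorem \ref{th:ilmp}, replacing minimally inconsistent sets of cardinality at most $2$ with minimally inconsistent sets of cardinality at most $k$, and replacing the majority threshold $\frac{n+1}{2}$ with the quota $m$. Throughout I rely on the fact, noted after Theorem \ref{th:ilmp}, that $\IL$ satisfies weakening, so that consistency and robust consistency coincide and any inconsistent subset of an outcome forces the whole outcome to be inconsistent. By Proposition~\ref{cl:SSMP} the uniform quota rule $F_m$ is exactly the rule characterised by $(A,I,N,M)$ for the appropriate $m$, so it suffices to analyse when $F_m$ preserves consistency.

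First I would prove the easy direction: if $\Phi_{\IL}$ satisfies the kMP, then $F_m$ is consistent for $m > n - \frac{n}{k}$. Arguing by contraposition as in Theorem \ref{th:ilmp}, suppose some outcome $F_m(\textbf{J})$ is inconsistent. By the kMP there is a minimally inconsistent $Y \subseteq F_m(\textbf{J})$ with $|Y| \leq k$, say $Y = \{\phi_1,\dots,\phi_\ell\}$ with $\ell \leq k$. Every $\phi_j \in Y$ is accepted, so $|N_{\phi_j}| \geq m$ for each $j$. A counting argument then shows that there must be an agent accepting all of $Y$: the number of agents \emph{rejecting} at least one $\phi_j$ is at most $\sum_{j}(n - |N_{\phi_j}|) \leq \ell(n-m) < k \cdot \frac{n}{k} = n$, using $m > n - \frac{n}{k}$ and $\ell \leq k$. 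Hence some agent $i$ accepts every formula in $Y$, so $Y \subseteq J_i$; since $Y$ is inconsistent, $J_i$ is inconsistent, contradicting individual (robust) consistency.

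For the converse, if $\Phi_{\IL}$ violates the kMP, I would exhibit a profile on which $F_m$ returns an inconsistent set. The violation gives a minimally inconsistent $Y$ with $|Y| = \ell > k$. The construction generalises the one in Theorem \ref{th:ilmp}: partition the agents so that each formula of $Y$ is accepted by at least $m$ agents while each agent's judgment set omits enough formulas of $Y$ to remain (robustly) consistent --- since $Y$ is minimally inconsistent, any proper subset of $Y$ is consistent, so each $J_i$ may safely contain all but one element of $Y$. Distributing the ``rejected'' formula evenly across agents, one arranges that every $\phi \in Y$ is accepted by a number of agents at least the quota $m$, so $Y \subseteq F_m(\textbf{J})$; weakening then forces $F_m(\textbf{J})$ to be inconsistent.

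The main obstacle is the bookkeeping in the converse construction: I must verify simultaneously that every formula of $Y$ clears the quota $m$ and that no individual set contains all of $Y$, which constrains how the ``missing'' formulas can be spread across the $n$ agents. This requires checking that the inequality $\ell > k$ together with the threshold condition $m > n - \frac{n}{k}$ leaves enough room --- concretely, that one can assign to each agent a sublist of $Y$ of size $\ell - 1$ such that each formula is withheld by at most $n - m$ agents. Because $\ell - 1 \geq k$ and $n-m < \frac{n}{k}$, a uniform (round-robin) assignment of withheld formulas suffices, but the precise arithmetic with the odd-$n$ convention and the ceiling of $m$ is the delicate point; the rest follows the classical template and the weakening argument verbatim.
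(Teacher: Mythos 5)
Your proposal is correct and follows exactly the route the paper takes: the paper omits this proof entirely, stating only that it ``largely adapts the case for classical logic (Dietrich and List 2007, Corollary 2)'', and your counting argument for the forward direction ($\ell(n-m) < k\cdot\frac{n}{k} = n$, so some agent accepts all of $Y$) together with the round-robin profile construction for the converse is precisely that classical adaptation. The integrality wrinkle you flag in the converse --- whether an integer quota $m$ with $n - \frac{n}{k} < m \leq n - \lceil \frac{n}{\ell} \rceil$ actually exists for every $n$, $k$, and $\ell > k$ --- is a genuine looseness in the statement as given, but it is inherited from the classical formulation and is not addressed by the paper either, so it does not count against your argument relative to the paper's.
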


The proof largely adapt the case for classical logic (cf \cite{DietrichListJTP2007}, Corollary 2), so we omit it.

\section{Judgment Aggregation in the Lambek Calculus}\label{sec:lambek}

We start discussing substructural logics by studying the Lambek calculus $\L$.  In this case, we do not assume that individual judgments are complete.\footnote{In fact, the Lambek calculus is an intuitionistic logic, \cite{Abrusci1990}.} Individual and collective judgments are (finite) lists of formulas $J = [ \phi_1, \dots, \phi_m]$ that are sublists of $\Phi_{\L}$. Since the aggregation procedures for \L return lists of judgments, we have to be explicit in defining how the aggregation computes the outcome. We present the case of the majority rule, other aggregation procedures can be handled in a similar way. We have to keep track of the positions of the formulas in the $J_{i}$s with respect to the order of formulas in $\Phi_{\L}$. Denote by $\pi_{j}(X)$ the $j$-th element of the list $X$. We write a judgment set $J$ by means of $J'= J \cup \{\star\}$, where $\star$ is a designated symbol. $J'$ is defined by $\pi_{j}(J')= \phi$ if $\phi \in J$ and $j$ is the position of $\phi$ in $\Phi_{\L}$, otherwise $\pi_{j}(J) = \star$. For example, the sublist $[B,D]$ of $[A,B,C,D]$ can be written by $[\star, B, \star, D]$.\par
 Given a profile $\textbf{J}$, let $N^{j}_{\phi} = \{i \mid \phi = \pi_j (J_{i})\}$, that is, $N^{j}_{\phi}$ denotes the set of individuals that place $\phi$ at the $j$-th position of their list of judgements.\par
Define the list $[x_{1} ,\dots, x_{l}]$, where $l$ is the length of the list $\Phi_{\L}$ and each $x_j$ is either a formula of the agenda of the designated symbol $\star$: $x_j = \phi_{j}$ if $|N^{j}_{\phi_{j}}| \geq \frac{n + 1}{2}$ and $x_{j} = \star$, if $|N^{j}_{\phi_{j}}| <  \frac{n + 1}{2}$. $M(\textbf{J})$ is then the sublist of $[x_{1} ,\dots, x_{l}]$ that is obtained by removing all the $\star$ symbols.\par
It is easy to see that collective inconsistencies may emerge also for \L.

\begin{theorem}
The Lambek calculus $\L$ is not safe for MAJ. 
\end{theorem}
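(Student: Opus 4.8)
The plan is to exhibit a concrete agenda in $\L$ together with a profile of robustly consistent individual judgments whose majority outcome is inconsistent in $\L$. Following the template of the classical and intuitionistic cases, I would build the agenda around an inconsistency expressible with the Lambek connectives. The natural analogue of the classical discursive dilemma uses the non-commutative conjunction $\odot$ and the intuitionistic negation $A \setminus \bot$: take an agenda containing $A$, $B$, $A \odot B$, and $(A \odot B) \setminus \bot$, together with the complements required by closure under $\sim$. The target collective outcome is the list whose underlying set is $\{A, B, (A \odot B) \setminus \bot\}$, and I would show that this list derives $\bot$ in $\L$, using the derivation
\begin{center}
\AxiomC{$A \vdash A$}
\AxiomC{$B \vdash B$}
\RightLabel{$\odot$R}
\BinaryInfC{$A; B \vdash A \odot B$}
\AxiomC{$\bot \vdash \bot$}
\RightLabel{$\setminus$ L}
\BinaryInfC{$A; B; (A \odot B)\setminus \bot \vdash \bot$}
\DisplayProof
\end{center}
which is the order-sensitive Lambek reworking of the classical proof given in Section~\ref{sec:JAclassical}.

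The delicate point, and the step I expect to be the main obstacle, is that in $\L$ judgments are \emph{lists}, so the majority rule operates position by position via the sets $N^{j}_{\phi}$ defined just above the statement, and the \emph{order} of the formulas in the outcome must be compatible with the order the derivation requires. I therefore have to fix the order of $\Phi_{\L}$ so that $A$ precedes $B$ which precedes $(A \odot B) \setminus \bot$, and then design the profile so that each of these three formulas wins a positional majority \emph{in the positions} that produce exactly the sublist $[A, B, (A \odot B)\setminus \bot]$ after the $\star$ symbols are deleted. Because the Lambek sequent above is sensitive to the sequence $A; B; (A\odot B)\setminus \bot$, I must check that the outcome list, read in the agenda order, is precisely the antecedent of a provable sequent with conclusion $\bot$; since $\L$ lacks exchange, a wrong relative order would not yield a derivation, so this compatibility check is essential rather than cosmetic.

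Concretely, I would adapt the three-voter profile from the classical dilemma, assigning to each voter a robustly consistent sublist of $\Phi_{\L}$: voter $i_1$ accepts $A$, $B$, $A \odot B$ (placed at their agenda positions); voter $i_2$ accepts $A$ and the complement of $A \odot B$, namely $(A \odot B)\setminus \bot$; voter $i_3$ accepts $B$ and $(A \odot B)\setminus \bot$. Each individual list is robustly consistent in $\L$ because no individual list contains a subinconsistency (and in particular none contains both a formula and its complement). Tallying positionally, $A$ is placed in its slot by $i_1$ and $i_2$, $B$ by $i_1$ and $i_3$, and $(A \odot B)\setminus \bot$ by $i_2$ and $i_3$, so each clears the $\frac{n+1}{2}$ threshold, while $A \odot B$ and all remaining complements fail to, yielding $M(\textbf{J}) = [A, B, (A \odot B)\setminus \bot]$.

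Finally I would confirm inconsistency: the derivation displayed above shows $A; B; (A \odot B)\setminus \bot \vdash \bot$, so $M(\textbf{J}) \vdash_{\L} \emptyset$, witnessing that the outcome is inconsistent and hence that the majority rule violates consistency on this agenda. Since a safe agenda would force every $F \in [\Phi_{\L}, \Phi_{\L}](\mathrm{MAJ})$ to be robustly consistent, exhibiting one unsafe agenda suffices to conclude that $\L$ is not safe for MAJ. The only subtlety to verify carefully is that the agenda, once closed under complements and ordered as required, still avoids units and contradictions and that the chosen positions for the winning formulas genuinely reconstruct the antecedent order demanded by the $\odot$R and $\setminus$L rules; this is where the absence of exchange makes the bookkeeping on $N^{j}_{\phi}$ indispensable.
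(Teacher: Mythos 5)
Your proposal is correct and follows essentially the same route as the paper: the same agenda built from $A$, $B$, $A \odot B$, $(A \odot B)\setminus\bot$, the same three-voter profile, and the same $\odot$R/$\setminus$L derivation of $A; B; (A\odot B)\setminus\bot \vdash \bot$. The extra care you take with the positional tallying via $N^{j}_{\phi}$ and the order-compatibility of the outcome list is a point the paper leaves implicit, and it is handled correctly here.
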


\begin{proof}

Take any agenda that includes $[A, B, A \odot B,(A \odot B) \setminus \bot]$ and a profile of judgments $\textbf{J}$ as follows.

\begin{center}
\begin{tabular}{ccccc}
 & $A$ & $B$ & $A \odot B$ & $A \odot B \setminus \bot$ \\
\hline
$i_{1}$ & 1 & 1 & 1 & 0\\   
$i_{2}$ & 1 & 0 & 0 & 1\\ 
$i_{3}$ & 0 & 1 & 0 & 1\\
\hline 
maj. & \textbf{1} & \textbf{1} & 0 & 1\\
\end{tabular}
\end{center}

The outcome of the majority rule on $\textbf{J}$ is the list $[A, B, A \odot B \setminus \bot]$, which is not consistent in $\L$, as the following proof shows.

\begin{center}
\AC{A \t A}
\AC{B \t B}
\rl{$\odot$L}
\BC{A,B \t A \odot B}
\AC{\bot \t \bot}
\rl{$\setminus$ L}
\BC{A, B, A \odot B \setminus \bot \t \bot}
\DisplayProof
\end{center}
\end{proof}

In $\L$ consistency and robust consistency are \emph{not} equivalent. For instance, $[A, A\setminus \bot, C]$ is consistent, since it does not entail $\bot$, whereas $[A, A \setminus \bot]$ is inconsistent. The following example shows that the majority rule for $\L$ may return an inconsistent list even in an extremely simple agenda.

\begin{example}
Take the agenda $\{A, A \setminus \bot, C,  C \setminus \bot\}$ and the following profile of (incomplete) lists of judgments. 

\begin{center}
\begin{tabular}{cccc}
 & $A$ & $C$ & $A\setminus \bot$ \\
 \hline
$i_1$ & 1 & 1 &  1 \\
$i_2$ & 1 & 0 &  0 \\
$i_3$ & 0 & 0 &  1\\
\hline
maj. & 1 &  0 & 1 
\end{tabular} 
\end{center}

Each individual judgment list is consistent, however the majority returns the inconsistent list $[A, A \setminus \bot]$.

\end{example}

Therefore, in case we assume just the consistency of the individual judgments, the median property is not sufficient to guarantee the consistency of every collective outcomes. The condition of robust consistency is required, as the following proof shows. 

\begin{theorem}\label{th:mpl}
An agenda $\Phi_{\L}$ satisfies the median property iff $\Phi_{\L}$ is safe for MAJ. 
\end{theorem}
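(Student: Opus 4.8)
The plan is to prove both implications directly for the majority rule $M$, which suffices since by Theorem~\ref{majority} the axioms $MAJ$ characterise $M$, so that $(\Phi_\L,\Phi_\L)$ is safe for $MAJ$ exactly when $M$ is robustly consistent on every profile of robustly consistent (possibly incomplete) individual lists. The essential new difficulty compared with the intuitionistic case (Theorem~\ref{th:ilmp}) is that $\L$ lacks weakening: a list may be consistent while some of its sublists are not, so I must argue throughout at the level of \emph{robust} consistency and exploit that a list fails robust consistency exactly when it contains a minimally inconsistent sublist. I also have to check that the positional definition of $M$ for lists lets the set-theoretic counting arguments go through, which works because every formula sits at its canonical position inherited from $\Phi_\L$.

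First I would prove that MP implies safety. Suppose for contradiction that $M(\textbf{J})$ is not robustly consistent for some profile $\textbf{J}$; then $M(\textbf{J})$ contains a minimally inconsistent sublist $Y$, and by MP we have $|Y|\le 2$. The case $|Y|=1$ is impossible: an element of $M(\textbf{J})$ is accepted by at least $\frac{n+1}{2}$ agents, so some $J_i$ would contain a one-element inconsistent sublist, contradicting robust consistency of $J_i$ (and the agenda contains no contradictions anyway). If $Y=[\phi_1,\phi_2]$ with $\phi_1,\phi_2$ at positions $j_1<j_2$ of $\Phi_\L$, then both are accepted at their canonical positions, so $|N^{j_1}_{\phi_1}|\ge\frac{n+1}{2}$ and $|N^{j_2}_{\phi_2}|\ge\frac{n+1}{2}$; since $n$ is odd these two sets of agents intersect, yielding an agent $i$ with $\pi_{j_1}(J_i)=\phi_1$ and $\pi_{j_2}(J_i)=\phi_2$. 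Hence $[\phi_1,\phi_2]$ is a sublist of $J_i$ in exactly the $\Phi_\L$-order, so $J_i$ has an inconsistent sublist, contradicting its robust consistency. Here it is crucial that individuals are robustly, not merely, consistent, precisely because weakening fails.

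For the converse I would prove the contrapositive: if $\Phi_\L$ violates MP, there is a minimally inconsistent sublist $Y$ with $|Y|\ge 3$, and I construct a profile witnessing unsafety. Fix two distinct $\phi,\psi\in Y$. Let the first $\frac{n-1}{2}$ agents accept the sublist $Y\setminus\{\phi\}$, the last $\frac{n-1}{2}$ agents accept $Y\setminus\{\psi\}$, and the single middle agent accept $\{\phi,\psi\}$ (each agent ordering its formulas by their $\Phi_\L$-positions). A routine count shows every element of $Y$ is accepted by at least $\frac{n+1}{2}$ agents: $\phi$ and $\psi$ each by one block plus the middle agent, and any third element by both blocks. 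Since each agent's list is a proper subset of $Y$ and $Y$ is minimally inconsistent, every individual list --- and all of its sublists --- is consistent, so each is robustly consistent. Because all agents place their formulas at the canonical positions, the positional majority count equals the set count, and $M(\textbf{J})$ is exactly $Y$ as a $\Phi_\L$-ordered list; as $Y$ is inconsistent, $M(\textbf{J})$ is not robustly consistent, so $\Phi_\L$ is not safe for $MAJ$.

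The main obstacle I anticipate is bookkeeping rather than conceptual: making sure the move from sets to ordered lists is sound, i.e. that the position-wise acceptance of $M$ coincides with majority acceptance whenever every agent places each formula at its $\Phi_\L$-position, and that the minimally inconsistent sublist $Y$ reappears in the outcome with the correct order. Once this positional bookkeeping is fixed, both directions reduce to the familiar pigeonhole argument and the three-block construction, with robust consistency substituted for consistency everywhere to compensate for the absence of weakening.
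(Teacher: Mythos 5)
Your proposal is correct and follows essentially the same route as the paper: the pigeonhole/intersection argument on positional majorities for the left-to-right direction, and the three-block profile built from a minimally inconsistent list of size at least three (reusing the construction from the intuitionistic case) for the converse. Your explicit handling of the positional bookkeeping and your consistent use of robust consistency (rather than mere consistency) of the outcome are slightly more careful than the paper's own wording, but they do not constitute a different argument.
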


\begin{proof}
From left to right, suppose by contradiction that $\Phi_{\L}$ satisfies the median property and $M(\textbf{J})$ is inconsistent. Then, $M(\textbf{J})$ contains a minimally inconsistent list $Y$ of size smaller or equal than 2. $|Y|$ cannot be 1, since we excluded contradictions in the agenda. Thus $Y = [\phi_1, \phi_2]$, and suppose that $\phi_{1}$ and $\phi_2$ take the $j$-th and $j + 1$-th positions (respectively) in the list $M(\textbf{J})$. Therefore, there exist $h$ and $k$, with $h < k$, such that $|N^{h}_{\phi_1}| \geq \frac{n+1}{2}$ and  $|N^{k}_{\phi_1}| \geq \frac{n+1}{2}$, which entails that there is an individual $i$ such that $[\phi_1, \phi_2]$ is a sublist of $J_{i}$.
Note that  $\phi_1$ and $\phi_2$ may be contiguous in $J_i$ or not, however, in both cases, this contradicts the assumption of robust consistency.\par 
From right to left, we show that if $\Phi_{\L}$ violates the median property, then the majority rule is inconsistent. Assume that $\Phi_{\L}$ contains a list $Y$ that is minimally inconsistent of size bigger than 3. Define a profile $\textbf{J}$ that coincides with the one in the proof of Theorem \ref{th:ilmp} for the formulas in $Y$ and, for the other formulas of the agenda, the individuals reject all of them. Note the judgment sets are now incomplete. By construction, we get $M(\textbf{J}) = Y$, which is inconsistent in $\L$. 
\end{proof}

For the larger class of uniform quota rules, we can adapt the known results, by noticing that in this case, again, the safety result applies to the condition of robust consistency.

\begin{proposition} $\Phi_{\L}$ satisfies the kMP iff it is safe for $(A, I, N,M)$, for $F_m$ with $m > n - \frac{n}{k}$.
\end{proposition}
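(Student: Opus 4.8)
The plan is to mirror the structure of the majority-rule result in Theorem \ref{th:mpl}, adapting it to the threshold $m$ of the uniform quota rule $F_m$ while keeping careful track of the fact that in $\L$ consistency and robust consistency differ, so the argument must target robust consistency throughout. The statement is a biconditional between the $k$-median property of $\Phi_{\L}$ and safety for $(A,I,N,M)$, restricted to the quota rules $F_m$ with threshold $m > n - \frac{n}{k}$. Since the rationality conditions (WR) play no role here and the characterization of quota rules is already available as a Proposition, the content is purely about when minimally inconsistent sublists force an inconsistent collective outcome.

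First I would prove the easier direction, from kMP to safety. Assuming $\Phi_{\L}$ satisfies the $k$-median property, suppose for contradiction that some $F_m$ with $m > n - \frac{n}{k}$ returns a collective list $F_m(\textbf{J})$ that is not robustly consistent. Then $F_m(\textbf{J})$ contains a minimally inconsistent sublist $Y$, and by kMP we have $|Y| \leq k$. Each formula $\phi_j \in Y$ is accepted only if it sits at some fixed position with support at least $m$, so $|N_{\phi_j}| \geq m$. The key counting step is the same pigeonhole argument used for quota rules in classical JA: if $|Y| \leq k$ formulas each have support at least $m > n - \frac{n}{k}$, then the sets of ``rejectors'' are too small to cover $N$ disjointly, forcing the existence of an agent $i$ who accepts every formula of $Y$; since the positions are fixed by the definition of $F_m$ on lists, this agent in fact has all of $Y$ as a sublist of $J_i$, contradicting robust consistency of $J_i$. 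I would be explicit that it is robust consistency of the individual lists, not mere consistency, that is contradicted, exactly as in the example preceding Theorem \ref{th:mpl}.

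For the converse, I would prove the contrapositive: if $\Phi_{\L}$ violates kMP, then there is a minimally inconsistent sublist $Y$ with $|Y| \geq k+1$, and I would construct a profile witnessing a quota rule $F_m$ (with an admissible $m > n - \frac{n}{k}$) whose outcome contains $Y$ and is therefore inconsistent. The construction distributes acceptances so that each of the $k+1$ formulas of $Y$ just clears the threshold $m$ while each individual list omits at least one formula of $Y$ and hence stays (robustly) consistent; the remaining agenda formulas are rejected by everyone, so the individual lists are incomplete, which is permitted since completeness is not assumed for $\L$. This is the step that closely parallels the right-to-left construction in the proofs of Theorem \ref{th:ilmp} and Theorem \ref{th:mpl}.

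The main obstacle, and the reason the proof is not a verbatim copy of the classical case, is the list structure: the outcome of $F_m$ on lists is computed positionwise via the $N^j_\phi$ counts, so accepting a formula at the collective level requires threshold support at a \emph{fixed position}, and the recovered sublist $Y$ of an individual $J_i$ must respect the ambient order $\prec$ inherited from $\Phi_{\L}$. I would therefore take care that the pigeonhole step yields not merely an agent accepting all formulas of $Y$ as a set, but an agent whose list contains $Y$ as a genuine sublist, and that the witnessing profile in the converse places formulas at positions consistent with the order on $\Phi_{\L}$. Because the proposition is stated as an adaptation of a known result (DietrichListJTP2007, Corollary 2) and these order-bookkeeping points are exactly those already handled in Theorem \ref{th:mpl}, I expect the write-up to be short, very possibly omitted with a reference, as was done for the analogous intuitionistic proposition.
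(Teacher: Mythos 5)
Your proposal is correct and follows exactly the route the paper intends: the paper omits the proof entirely, remarking only that the known quota-rule characterization (Dietrich and List, Corollary 2) adapts once robust consistency replaces consistency, and your pigeonhole argument for the kMP direction together with the profile construction for the converse is precisely that adaptation, with the positionwise bookkeeping for lists handled as in Theorem \ref{th:mpl}. Nothing essential is missing.
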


\section{Judgment Aggregation in Linear Logic}\label{section:mall}

We show that for linear logic analogous impossibility results can be stated. We assume that the judgments form multisets of formulas and that they are complete. In \MALL, due to the lack of weakening, consistency and robust consistency are not equivalent. 

\begin{theorem}\label{MALL}
(Multiplicative additive) linear logic $\MALL$ is not safe for MAJ.  
\end{theorem}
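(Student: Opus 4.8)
The plan is to mirror the impossibility arguments already given for $\IL$ (Theorem~\ref{th:ilmp}) and $\L$: it suffices to exhibit a single agenda together with a discursive-dilemma profile whose majority outcome is inconsistent in $\MALL$. The decisive modelling choice is \emph{which} conjunction to place in the agenda. Since $\MALL$ lacks weakening, I would build the dilemma around the \emph{multiplicative} conjunction $\otimes$ rather than the additive $\with$, because only the $\otimes$R rule merges two independent contexts, and hence only $\otimes$ lets the separately supported resources $A$ and $B$ combine into $A \otimes B$.

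Concretely, I would take any agenda containing $\{A, B, A \otimes B, \neg A, \neg B, \neg(A \otimes B)\}$ and the profile $\textbf{J}$ below (each multiset being, in fact, a set here), which reproduces the classical discursive dilemma.

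\begin{center}
\begin{tabular}{ccccccc}
 & $A$ & $B$ & $A \otimes B$ & $\neg A$ & $\neg B$ & $\neg(A \otimes B)$ \\
\hline
$i_1$ & 1 & 1 & 1 & 0 & 0 & 0 \\
$i_2$ & 1 & 0 & 0 & 0 & 1 & 1 \\
$i_3$ & 0 & 1 & 0 & 1 & 0 & 1 \\
\hline
maj. & \textbf{1} & \textbf{1} & 0 & 0 & 0 & \textbf{1} \\
\end{tabular}
\end{center}

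By inspection the majority rule returns the multiset $M(\textbf{J}) = \{A, B, \neg(A \otimes B)\}$, and I would then display the derivation witnessing $A, B, \neg(A \otimes B) \t$, first forming $A \otimes B$ by $\otimes$R from the two axioms and then moving $\neg(A \otimes B)$ to the left by $L\neg$:

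\begin{center}
\AC{A \t A}
\AC{B \t B}
\rl{$\otimes$R}
\BC{A, B \t A \otimes B}
\rl{$L\neg$}
\UC{A, B, \neg(A \otimes B) \t}
\dip
\end{center}

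It remains to check that each $J_i$ is robustly consistent and complete, which holds because every $J_i$ is complement-free and, containing no pair whose resources can be consumed into a contradiction, is consistent together with all of its submultisets. The step I expect to be the genuine content, rather than routine bookkeeping, is the justification for choosing $\otimes$ over $\with$: with the additive conjunction the analogous derivation would demand $A, B \t A$ and $A, B \t B$, neither of which is available without weakening, so the additive dilemma simply does not close. This is exactly the phenomenon anticipated by Table~\ref{tab:summary}, where the multiplicative fragment $\MLL$ (and hence $\MALL$) is unsafe while the purely additive $\ALL$ is always safe; the present theorem is the multiplicative half of that contrast.
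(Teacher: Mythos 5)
Your proposal is correct and follows essentially the same route as the paper: the same agenda $\{A, \neg A, B, \neg B, A \otimes B, \neg(A \otimes B)\}$, the same discursive-dilemma profile (up to column ordering), and the same two-step derivation of $A, B, \neg(A \otimes B) \vdash$ via $\otimes$R and $\neg$L. Your closing remark on why the multiplicative conjunction is essential is not part of the paper's proof but correctly anticipates the contrast with $\ALL$ established later in Section 7.1.
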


\begin{proof}
Take the agenda $\Phi_{\MALL} = \{A, \neg A, B, \neg B, A \otimes B, \neg (A \otimes B)\}$ and the following profile:

 \begin{center}
\begin{tabular}{ccccccc}
 & $A$ & $A \otimes B$ & $B$ & $\neg A$ & $\neg (A \otimes B)$ & $\neg B$  \\
\hline
$i_{1}$ & 1 & 1 & 1 & 0 & 0 & 0 \\   
$i_{2}$ & 1 & 0 & 0 & 0 & 1 & 1\\ 
$i_{3}$ & 0 & 0 & 1 & 1 & 1 & 0\\
\hline 
maj. & \textbf{1} & 0 & \textbf{1} & 0 & \textbf{1} & 0\\
\end{tabular}
\end{center}

The outcome of the majority rule on $\textbf{J}$ is the multiset $\{A, B, \neg (A\otimes B)\}$ which is inconsistent w.r.t. $\MALL$:

\begin{center}
$$
\AC{A \t A}
\AC{B \t B}
\rl{$\otimes$ R}
\BC{A, B \t A \otimes B}
\rl{$\neg$ L}
\UC{A, B, \neg (A \otimes B) \t}
\dip
$$

\end{center}

\end{proof}

Again, the median property is not sufficient to preserve consistency, in case we assume that individual judgments are just consistent. 

\begin{example}\label{ex:mpmall}
Take the agenda closed under complements $\{A, \neg A, C, \neg C, C \multimap \textbf{1}, \neg (C \multimap \textbf{1})\}$. The agenda satisfies the median property, since there is no inconsistent multiset of size strictly bigger than 2. Take the following profile.

\begin{center}
\begin{tabular}{ccccccc}
 & $A$ & $\neg A$ & $C$ & $C \multimap \textbf{1}$ & $\neg (C \multimap \textbf{1})$ & $\neg C$  \\
\hline
$i_{1}$ & 1 & 1 & 1 & 0 & 0 & 0 \\   
$i_{2}$ & 1 & 0 & 1 & 1 & 0 & 0\\ 
$i_{3}$ & 0 & 1 & 1 & 1 & 0 & 0\\
\hline 
maj. & \textbf{1} & \textbf{1} & \textbf{1} & \textbf{1} & 0 & 0\\
\end{tabular}
\end{center}

Each individual multiset is consistent, although the first one is not robustly consistent, as it contains $\{A, \neg A\}$. By majority, we obtain the multiset $\{A, \neg A, C, C \multimap \textbf{1}\}$, which is inconsistent in \MALL, as the following proof shows.

$$
\AC{A\t A}
\rl{$\neg$ L}
\UC{A, \neg A \t}
\rl{\textbf{1} L}
\UC{A, \neg A, \textbf{1} \t}
\AC{C \t C}
\AC{\textbf{1} \t \textbf{1}}
\rl{$\multimap$ L}
\BC{C, C \multimap \textbf{1} \t \textbf{1}}
\rl{cut}
\BC{A, \neg A, C, C \multimap \textbf{1} \t}
\dip
$$

Therefore, the median property is not sufficient to guarantee the consistency of the majority rule in case of \MALL.
\end{example}

The condition of robust consistency is required also for $\MALL$.

\begin{proposition} 
$\Phi_{\MALL}$ satisfies the median property iff it is safe for MAJ.
\end{proposition}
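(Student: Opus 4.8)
The plan is to prove both directions by closely following the proofs of Theorem~\ref{th:ilmp} and Theorem~\ref{th:mpl}, since the abstract structure of the argument is logic-independent and depends only on the median property together with robust consistency. The key point that I must respect is that, unlike in $\IL$, weakening fails in $\MALL$, so the proof must never infer inconsistency of $M(\textbf{J})$ from inconsistency of a mere subset by monotonicity; instead the direction showing that the median property \emph{suffices} will exploit robust consistency of the individual judgment multisets, exactly as in Theorem~\ref{th:mpl}.

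For the direction ``median property $\Rightarrow$ safe for MAJ'', I would argue by contradiction. Suppose $\Phi_{\MALL}$ satisfies the median property but some profile $\textbf{J}$ of robustly consistent complete multisets yields an inconsistent outcome $M(\textbf{J})$. Then $M(\textbf{J})$ contains a minimally inconsistent submultiset $Y$, and by the median property $|Y| \leq 2$. The case $|Y| = 1$ is excluded because the agenda contains no contradictions (nor units). Hence $Y = \{\phi_1, \phi_2\}$, and since both $\phi_1, \phi_2 \in M(\textbf{J})$ are accepted by majority, $|N_{\phi_1}| \geq \frac{n+1}{2}$ and $|N_{\phi_2}| \geq \frac{n+1}{2}$; a counting argument then produces an agent $i$ with $\{\phi_1,\phi_2\}$ a submultiset of $J_i$, contradicting the \emph{robust} consistency of $J_i$ (which forbids inconsistent proper submultisets). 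The multiset bookkeeping is the only delicate point: I must note that if $\phi_1 = \phi_2$ as formulas, acceptance of two copies by majority forces some agent to carry two copies, which is again ruled out by robust consistency.

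For the converse ``violates median property $\Rightarrow$ not safe'', I would prove the contrapositive by constructing a bad profile, mirroring the second half of Theorem~\ref{th:ilmp}. If the median property fails there is a minimally inconsistent submultiset $Y$ with $|Y| \geq 3$; picking distinct $\phi, \psi \in Y$, I let the first $\frac{n-1}{2}$ agents accept all of $Y$ except $\phi$, the last $\frac{n+1}{2}$ except $\psi$, and the middle agent accept exactly $\phi$ and $\psi$, completing each multiset to a complete robustly consistent one (each agent drops at least one element of the minimally inconsistent $Y$, so each individual multiset is robustly consistent). By construction every element of $Y$ wins a majority, so $Y \subseteq M(\textbf{J})$, and since $Y$ is already inconsistent the outcome is inconsistent. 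Here I must be careful \emph{not} to invoke weakening: inconsistency of $M(\textbf{J})$ follows because $M(\textbf{J})$ \emph{contains} the already-inconsistent $Y$, and inconsistency is just $Y \vdash \emptyset$, which directly witnesses $M(\textbf{J}) \vdash \emptyset$ without any structural rule beyond what is already present in the derivation of $Y \vdash \emptyset$.

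The main obstacle I anticipate is precisely this non-monotonicity: in $\IL$ and in any logic with weakening one freely passes between a set and its supersets, but in $\MALL$ the argument only goes through because the violating witness $Y$ is itself inconsistent and sits inside the outcome, and because the \emph{positive} direction leans on robust (not mere) consistency of individuals. I would therefore state explicitly that the safety notion here is robust consistency, and remark — as the surrounding text and Example~\ref{ex:mpmall} already stress — that mere individual consistency is insufficient, so that the role of robust consistency in replacing the missing weakening step is the conceptual heart of the proof.
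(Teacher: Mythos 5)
Your left-to-right direction matches the paper's (which simply defers to the argument of Theorem~\ref{th:mpl}): the counting argument plus \emph{robust} consistency of the individual multisets is exactly the intended proof, and your extra care about the case $\phi_1=\phi_2$ in the multiset setting is a reasonable refinement. The converse direction, however, contains a genuine error in its last step. You claim that because $M(\textbf{J})$ \emph{contains} the inconsistent $Y$, the derivation of $Y \vdash \emptyset$ ``directly witnesses $M(\textbf{J}) \vdash \emptyset$ without any structural rule.'' That is false in \MALL: a provable sequent's antecedent is exactly the multiset consumed by the derivation, and padding it with the remaining formulas of $M(\textbf{J})$ is precisely an application of weakening, which is unavailable. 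The paper itself exhibits the counterexample immediately after this proposition: $\{A, B, \neg(A\otimes B)\}$ is inconsistent in \MALL, yet $\{A, B, \neg(A\otimes B), C\}$ is \emph{not}. Since the individual judgments here are assumed complete, the constructed outcome $M(\textbf{J})$ will in general properly contain $Y$ (it must decide every agenda formula), so you cannot conclude that $M(\textbf{J})$ is inconsistent.

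What the paper actually concludes, and what you should conclude, is that $M(\textbf{J})$ fails to be \emph{robustly} consistent, because it has the inconsistent submultiset $Y$. This is enough: the definition of safety in this framework is preservation of robust consistency, not mere consistency. So your construction is the right one, but the final inference should read ``$Y \subseteq M(\textbf{J})$ and $Y$ is inconsistent, hence $M(\textbf{J})$ is not robustly consistent, hence the agenda is not safe,'' rather than asserting $M(\textbf{J}) \vdash \emptyset$. Your own framing in the first and last paragraphs (never pass from a subset to a superset without weakening) is exactly the discipline that this step violates.
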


\begin{proof}
From left to right, suppose $\Phi_{\MALL}$ satisfies the median property and $M(\textbf{J})$ is inconsistent. The argument is then similar to the proof of Theorem \ref{th:mpl}.\par
From right to left, assume that $\Phi_{\MALL}$ violates the median property, we show that $M(\textbf{J})$ is not robustly consistent. If $\Phi_{\MALL}$ violates the median property, then there exists an inconsistent set $Y$ included in $\Phi_{\MALL}$ with cardinality bigger than 3. 
Define a profile as in the proof of Theorem \ref{th:ilmp}. We can conclude that $Y$ is included in $M(\textbf{J})$, which violates robust consistency.
\end{proof}

In case we assume the mere consistency of individual judgments, the median property is also not necessary for the consistency of the majority rule. That is, the direction ``if $\Phi_{\MALL}$ violates the median property, then the majority rule is inconsistent'' does not hold. For instance, take the agenda $\{A, \neg A, B, \neg B, A \otimes B, \neg (A \otimes B), C, \neg C\}$. It violates the median property, since it contains a multiset $\{A,  B, \neg (A \otimes B)\}$ which is minimally inconsistent of size 3. However, assuming that each judgment is complete, the majority rule cannot provide an inconsistent outcome on that agenda.
We show this fact as follows.

Since each $J_i$ is complete, either $C$ or $\neg C$ is in $J_i$, therefore either $C$ or $\neg C$ is in $F(\textbf{J})$. We have then two cases: either $F(\textbf{J}) = \{A,  B, \neg (A \otimes B), C\}$ or $\{A,  B, \neg (A \otimes B), \neg C\}$. In either case, $F(\textbf{J})$ is not inconsistent in $\MALL$.\par


 Analogous treatment can be adapted to obtain results for the case of uniform quota rules.
 
\begin{proposition} $\Phi_{\MALL}$ satisfies the kMP iff it is safe for (A, I, N,M), for $F_m$ with $m > n - \frac{n}{k}$.
\end{proposition}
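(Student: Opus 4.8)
The plan is to adapt the classical argument of \cite{DietrichListJTP2007} (Corollary~2), following the template of Theorem~\ref{th:mpl}. By the characterisation of the class $[\Phi_{\MALL},\Phi_{\MALL}](A,I,N,M)$ as the uniform quota rules, safety for these axioms amounts to the robust consistency of the rules $F_m$, and the role of the threshold $m > n - \frac{n}{k}$ is to pin down exactly which quotas survive. Throughout I would replace ``consistency'' by ``robust consistency,'' since $\MALL$ lacks weakening and, as Example~\ref{ex:mpmall} shows, mere consistency of the individual multisets does not suffice. The whole argument rests on one observation: if $Y$ is minimally inconsistent and every $J_i$ is robustly consistent, then $Y$ is not a submultiset of any $J_i$, so each of the $n$ agents rejects at least one element of $Y$.

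For the left-to-right direction I would assume the kMP and fix a quota $m > n - \frac{n}{k}$. Suppose towards a contradiction that $F_m(\textbf{J})$ is not robustly consistent for some profile $\textbf{J}$; then it contains a minimally inconsistent submultiset $Y$, and by the kMP $|Y| = j \leq k$. Each $y \in Y$ is collectively accepted, so it has at least $m$ supporters and hence at most $n - m$ opponents. Summing opponents over the $j$ elements of $Y$ and using that each agent rejects at least one element gives $n \leq j(n-m)$. But $m > n - \frac{n}{k} \geq n - \frac{n}{j}$ (using $j \leq k$) yields $n - m < \frac{n}{j}$, hence $j(n-m) < n$, a contradiction. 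Thus every such $F_m$ is robustly consistent, i.e. the pair is safe.

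For the converse I would argue contrapositively: if the kMP fails there is a minimally inconsistent $Y \subseteq \Phi_{\MALL}$ with $|Y| = j > k$. I would build a profile in which the rejections of elements of $Y$ are spread as evenly as possible across the $j$ elements (each agent rejecting exactly one element, with the remaining judgments completed robustly and consistently as in the proof of Theorem~\ref{th:ilmp}); then every $y \in Y$ receives at least $n - \lceil \frac{n}{j}\rceil$ acceptances. Since $j > k$ gives $n - \frac{n}{j} > n - \frac{n}{k}$, there is a quota $m$ with $n - \frac{n}{k} < m \leq n - \lceil \frac{n}{j}\rceil$, and for this $m$ the rule $F_m$ accepts all of $Y$, so $Y$ is a submultiset of $F_m(\textbf{J})$ and robust consistency fails.

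The step I expect to be most delicate is the balancing construction in the converse: one must exhibit complete, robustly consistent individual multisets whose rejections of $Y$ are distributed evenly enough that each element clears the threshold, and simultaneously verify that an integer quota in the interval $(\,n - \frac{n}{k},\ n - \lceil \frac{n}{j}\rceil\,]$ genuinely exists for the given $n$, $j$, $k$ (the ceilings are where one must be careful, exactly as in the classical estimate). The multiset bookkeeping with multiplicities and the completeness requirement on the $J_i$ also need attention here, but the accounting is otherwise identical to the classical case, which is why the result transfers to $\MALL$.
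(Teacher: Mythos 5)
Your proposal is correct and takes the route the paper itself intends: the paper omits this proof entirely, saying only that it adapts the classical quota-rule argument of Dietrich and List (Corollary 2) with robust consistency replacing consistency, and your counting argument (each robustly consistent $J_i$ must reject some element of the minimally inconsistent $Y$, giving $n \leq j(n-m)$ against $j(n-m) < n$) is exactly that adaptation. The ceiling/integer boundary issue you flag in the converse is real, but it is inherited from the classical statement rather than introduced by your argument, and your handling is no less careful than the paper's.
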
 


Since we are assuming that judgments are complete and complement-free, it is meaningful to extend the class of the majority rule, by keeping the axiom of WR and removing one of the other. In this case, we can adapt the known results of Proposition \ref{cl:SSMP} to the case of $\MALL$ under the hypothesis of robust consistency. 
The proof largely adapts the arguments in (\cite{EndrissEtAlAAMAS2010}, Theorem 7, 8 and 9) so we present only a few cases. 

\begin{proposition}\label{prop:mallsmp}
The following facts hold:
\begin{itemize}
\item i)  $\Phi_{\MALL}$ satisfies the SMP iff it is safe for $(WR, A, N ,I)$.
\item ii) $\Phi_{\MALL}$ satisfies the SMP iff it is safe for $(WR, A, N)$.
\item iii) $\Phi_{\MALL}$ satisfies the SSMP iff it is safe for $(WR, A, I)$.
\end{itemize}
\end{proposition}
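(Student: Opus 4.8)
The plan is to prove each of the three biconditionals in Proposition~\ref{prop:mallsmp} by adapting, mutatis mutandis, the classical characterisations recorded in Proposition~\ref{cl:SSMP}, while being careful that the relevant notion of collective rationality here is \emph{robust} consistency (not mere consistency), since \MALL\ lacks weakening. Throughout I assume individual judgment multisets are robustly consistent and complete, as stipulated for \MALL. The common structure of each proof is: for the ``if'' direction (agenda property $\Rightarrow$ safety) I argue by contraposition, showing that any aggregator $F$ in the relevant axiom class that produces a non-robustly-consistent outcome forces the agenda to violate the median-type property; for the ``only if'' direction (failure of the property $\Rightarrow$ non-safety) I exhibit an explicit aggregator in the class and a profile whose outcome is not robustly consistent.

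First I would treat part~(iii), the SSMP case for $(WR,A,I)$, since its profile construction is the most transparent. For the ``if'' direction, assume $\Phi_{\MALL}$ satisfies the SSMP and let $F$ satisfy $(WR,A,I)$; suppose for contradiction that $F(\textbf{J})$ is not robustly consistent, so some inconsistent submultiset $Y\subseteq F(\textbf{J})$ exists. By the SSMP, $Y$ contains a pair $\{\phi,\neg\phi\}$; but then $\phi,\neg\phi\in F(\textbf{J})$ contradicts complement-freeness, which is part of $(WR)$. For the ``only if'' direction, assume SSMP fails: there is a (non-trivially) inconsistent $Y$ with no subset of the form $\{\phi,\neg\phi\}$. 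Here $(I)$ lets me define $F$ proposition-by-proposition, and I would build a profile placing each element of $Y$ in a strict majority while keeping every $J_i$ robustly consistent and complete, so that $Y\subseteq F(\textbf{J})$ witnesses a non-robustly-consistent outcome. The key point is that because no complementary pair sits inside $Y$, the individual multisets can be arranged to avoid containing $Y$ itself (and hence remain robustly consistent), exactly as in (\cite{EndrissEtAlAAMAS2010}, Theorem~9).

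For parts~(i) and~(ii), the SMP replaces the syntactic pair $\{\phi,\neg\phi\}$ with a semantic pair $\{\phi,\psi\}$ where $\phi\t_L\neg\psi$ and $\neg\psi\t_L\phi$. The ``if'' direction runs as before: from a non-robustly-consistent outcome extract an inconsistent submultiset, find the logically equivalent pair guaranteed by the SMP, and derive a violation of complement-freeness using $(WR)$ together with the equivalence $\phi\dashv\vdash_{\MALL}\neg\psi$. The ``only if'' directions for $(WR,A,N,I)$ and $(WR,A,N)$ use the neutrality axiom $(N)$ to force equal treatment of any two propositions accepted by the same agents, and I would reuse essentially the same profile as in part~(iii) but now exploiting the failure of the \emph{semantic} median property; the constructions are those of (\cite{EndrissEtAlAAMAS2010}, Theorems~7 and~8).

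The main obstacle, and the place where the \MALL\ case genuinely departs from the classical argument, is verifying that the witnessing profiles really consist of \emph{robustly} consistent complete multisets, and that the collective outcome fails \emph{robust} consistency rather than merely consistency. Because \MALL\ lacks weakening, a multiset can be consistent while having an inconsistent submultiset (as Example~\ref{ex:mpmall} shows), so I must check at each step that the individual multisets I assign avoid containing any inconsistent submultiset, and that the logical equivalences invoked by the SMP are indeed derivable in the \MALL\ sequent calculus (which requires producing explicit derivations for the specific formulas used). Once the bookkeeping around robust consistency is handled, the combinatorial core of each direction is identical to the classical proof, which is why I would state the shared structure once and then present only the \MALL-specific derivations and the robustness checks in detail.
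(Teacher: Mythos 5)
Your overall architecture matches the paper's: adapt the classical characterisations of \cite{EndrissEtAlAAMAS2010}, use complement-freeness from (WR) to refute bad outcomes, and exhibit explicit aggregators in the class when the agenda property fails; the paper likewise presents only a few cases and leans on those theorems, with the same caveats about robust consistency. However, there is a genuine gap in your ``if'' direction for parts (i) and (ii). You claim that (WR) ``together with the equivalence $\phi \dashv\vdash_{\MALL} \neg\psi$'' yields a violation of complement-freeness, and you relegate neutrality to the ``only if'' direction. That is backwards: complement-freeness only forbids the \emph{syntactic} pair $\phi, \neg\phi$ from co-occurring, so having $\phi$ and $\psi$ in $F(\textbf{J})$ with $\phi$ merely logically equivalent to $\neg\psi$ contradicts nothing by itself. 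The paper closes this step by observing that robust consistency and completeness of the individual multisets force $\phi$ and $\neg\psi$ to have identical acceptance patterns across agents, and then uses (A), (I), (N) --- via the representation of $F$ by a function $h$ of $|N_{\phi}|$ --- to conclude $\phi \in F(\textbf{J}) \Leftrightarrow \neg\psi \in F(\textbf{J})$, which is what actually places the complementary pair $\psi, \neg\psi$ in the outcome. This is precisely why (i) and (ii), which include (N), need only the SMP, while (iii), which drops (N), needs the stronger SSMP; as written, your sketch erases that distinction.

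A second, smaller problem: for the ``only if'' direction of (iii) you propose a profile ``placing each element of $Y$ in a strict majority.'' When the violating inconsistent multiset $Y$ has size $2$ --- which the failure of SSMP permits, e.g.\ $\{\phi,\psi\}$ with $\phi \vdash \neg\psi$ but $\psi \neq \neg\phi$ --- no profile of robustly consistent individuals can give both elements majority support (pigeonhole), so the majority rule cannot witness non-safety. The paper instead picks a different member of the axiom class: the aggregator accepting $\phi$ exactly when at most one agent does (for the classes with neutrality), or a constant-style function (for the independence-only case). Your citation of the right source theorems suggests you would recover these constructions, but the plan as stated stalls on exactly the minimal violating agendas.
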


\begin{proof}
$i$) From left to right, suppose by contradiction that $\Phi_{\MALL}$ satisfies the SMP and $F$ in $[\Phi_{\MALL}, \Phi_{\MALL}](WR, A, N)$ is inconsistent. Therefore there exist two formulas $\phi$ and $\neg \psi$ in $F(\textbf{J})$ such that $\phi$ is equivalent to $\psi$. As we have seen in the proof of Theorem \ref{majority}, since $F$ satisfies $(A)$,$(I)$,$(N)$, the outcome of $F$ only depends on the cardinality of the set of individuals accepting $\phi$ (see also \cite{ListPuppe2009}) and $F$ can be represented by a function $h : \{0,\dots, n=|N|\} \to \{0,1\}$ such that $\phi \in F(\textbf{J})$ iff $h(|N_{\phi}|) = 1$. Since every $J_i$ is robustly consistent, for every $i$, $\phi \in J_i$ iff $\neg \psi \notin J_i$ and since $J_i$ are complete, $\phi \in J_i$ iff $\psi \in J_i$. Thus, by A,I, and N, we can infer that $\phi \in F(\textbf{J})$ iff $\psi \in F(\textbf{J})$. Since by assumption we have $\phi$ in $F(\textbf{J})$, we conclude $\psi \in F(\textbf{J})$, but since $\neg \psi \in F(\textbf{J})$. This violates complement-freeness of $F$. \par
From right to left, assume that $\Phi_{\MALL}$ violates the SMP, we show that there exists a function $F$ in $[\Phi_{\MALL}, \Phi_{\MALL}](WR, A, N ,I)$ that is not robustly consistent. If $\Phi_{\MALL}$ violates the SMP, then there are two formulas $\phi$ and $\psi$ such that $\phi \t \neg \psi$ and $\neg \psi \nvdash \phi$. Define the aggregation procedure $F$ as follows. Take a profile with 3 agents such that agent $J_{1} = \{ \neg A, \neg (A \with B)\}$,  $J_{2} = \{ A, A \with B\}$ and 
$J_{3} = \{A, \neg (A \with B)\}$. Take the aggregator $F$ such that $\phi \in F(\textbf{J})$ if 0 or 1 agents has $\phi \in J_{i}$ and  $\phi \notin F(\textbf{J})$ if more than one agent has $\phi \in J_{i}$. $F$ is clearly satisfies (WR), (A), and (N). Thus, $F(\textbf{J})$ is $\{\neg A, A \with B\}$, which is inconsistent in \MALL.

\noindent
$ii$) The argument above holds also when dropping independence (I). The only difference is that the function $h$ that characterizes the formulas in $F(\textbf{J})$ may vary for each profile $\textbf{J}$.\par
\noindent
$iii$) From left to right, assume that $\Phi_{\MALL}$ satisfies the SSMP and that $F(\textbf{J})$ is inconsistent. Then $F(\textbf{J})$ contains $\phi$ and $\neg \phi$ against complement-freenes of $F$. 
\end{proof}

Again, due to the lack of weakening, if we only assume consistency, the conditions of SMP and SSMP are not necessary for preserving consistency. For instance, the agenda $\{A \with B, \neg (A \with B), \neg A \oplus \neg B, \neg(\neg A \oplus \neg B), C, \neg C\}$ does not satisfy the SMP,  since $A \with B$ is equivalent to $\neg (\neg A \oplus \neg B)$, however we can show that $F$ is consistent on that agenda.\par 

%
%

We conclude this section by noticing that the previous results still applies even if we restrict to the multiplicative fragment of $\MALL$, i.e. they hold for $\MLL$.

\begin{proposition}\label{prop:mll}
$\MLL$ is not safe for MAJ. 
\end{proposition}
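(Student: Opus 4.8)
The plan is to reuse the impossibility witness from Theorem \ref{MALL} essentially verbatim, checking that the entire construction lives inside the multiplicative fragment \MLL. The key observation is that the agenda $\Phi_{\MALL} = \{A, \neg A, B, \neg B, A \otimes B, \neg (A \otimes B)\}$ employed in the proof of Theorem \ref{MALL} contains only atoms, negations, and the multiplicative conjunction $\otimes$; no additive connective ($\with$, $\oplus$) and no additive implication ($\leadsto$) appears. Hence this is already a legitimate agenda in $\mathcal{L}_{\MLL}$, closed under complements, and free of tautologies, contradictions, and units, so it meets all the standing assumptions on agendas for \MLL.

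First I would exhibit the same profile $\textbf{J}$ as in Theorem \ref{MALL}, with the three individuals accepting respectively $\{A,B,A\otimes B\}$, $\{A, \neg(A\otimes B), \neg B\}$, and $\{B, \neg A, \neg(A\otimes B)\}$, each of which is robustly consistent and complete on this agenda (this is inherited directly from the \MALL\ case, since consistency and derivability for these formulas do not depend on the additive rules). Applying the majority rule yields the collective multiset $M(\textbf{J}) = \{A, B, \neg(A\otimes B)\}$. I would then reproduce the same cut-free derivation witnessing inconsistency, namely the $\otimes$R step combining $A \vdash A$ and $B \vdash B$ into $A,B \vdash A\otimes B$, followed by $\neg$L to obtain $A, B, \neg(A\otimes B) \vdash \emptyset$. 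This derivation uses only the multiplicative $\otimes$R rule and the negation rule, both of which are rules of \MLL, so the inconsistency is witnessed entirely within \MLL.

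The only point requiring a word of care — and the nearest thing to an obstacle — is that one must confirm the derivation does not secretly rely on any additive or structural machinery absent from \MLL. Since \MLL\ retains the identity axiom, negation rules, exchange, and the multiplicative connective rules, and since the witnessing proof above invokes only $\otimes$R and $\neg$L applied to axioms, the argument is self-contained in the multiplicative fragment. Therefore the agenda above, which is a \MLL-agenda, admits a profile on which the majority rule produces an \MLL-inconsistent (indeed not robustly consistent) outcome, so \MLL\ is not safe for MAJ, as claimed.
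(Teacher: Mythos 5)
Your proposal is correct and matches the paper's own argument, which likewise observes that the agenda, profile, and inconsistency derivation from Theorem \ref{MALL} involve only multiplicative connectives and rules, so the counterexample already lives in \MLL. Your additional check that the witnessing derivation uses only $\otimes$R and $\neg$L is exactly the verification the paper leaves implicit.
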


This fact can be shown by noticing that the example in the proof of Theorem \ref{MALL} is in fact in \MLL.\par
Again, the median property is adequate to ensure consistency. For example, we have the following result.

\begin{proposition}\label{prop:mpmll}
$\Phi_{\MLL}$ satisfies the median property iff it is safe for MAJ.
\end{proposition}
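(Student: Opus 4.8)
The plan is to prove the biconditional by essentially replaying the argument of Theorem~\ref{th:mpl} (the Lambek case) and the earlier $\MALL$ proposition, since $\MLL$ is a fragment of $\MALL$ and the structural situation --- multisets of formulas, no weakening, hence consistency and robust consistency distinct --- is identical. Recall that $\MLL$ lacks weakening, so I will establish safety as preservation of \emph{robust} consistency, and I will use the assumption (stated at the start of Section~\ref{section:mall}) that individual judgments are complete and robustly consistent.

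For the direction from left to right, I would argue by contraposition on robust consistency. Assume $\Phi_{\MLL}$ satisfies the median property and suppose, toward a contradiction, that $M(\textbf{J})$ is not robustly consistent for some profile $\textbf{J}$. Then some minimally inconsistent submultiset $Y \subseteq M(\textbf{J})$ exists, and by the median property $|Y| \leq 2$. As in the proof of Theorem~\ref{th:mpl}, the case $|Y| = 1$ is excluded because the agenda contains no contradictions (in particular no units), so $Y = \{\phi_1, \phi_2\}$. Since both $\phi_1, \phi_2 \in M(\textbf{J})$, each is accepted by a majority, so $|N_{\phi_1}| \geq \frac{n+1}{2}$ and $|N_{\phi_2}| \geq \frac{n+1}{2}$; a counting argument then yields an agent $i$ with both $\phi_1, \phi_2 \in J_i$, so $\{\phi_1,\phi_2\}$ is an inconsistent submultiset of $J_i$, contradicting the robust consistency of $J_i$.

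For the direction from right to left I prove the contrapositive: if $\Phi_{\MLL}$ violates the median property, then the majority rule fails to be robustly consistent. Violation of MP gives a minimally inconsistent multiset $Y \subseteq \Phi_{\MLL}$ with $|Y| \geq 3$. I would then construct the standard profile used in the proof of Theorem~\ref{th:ilmp} (adapted to multisets): picking two distinct formulas $\phi, \psi \in Y$, let the first $\frac{n-1}{2}$ agents accept all of $Y$ except $\phi$, the last $\frac{n-1}{2}$ accept all of $Y$ except $\psi$, and one pivotal agent accept exactly the pair needed so that every formula of $Y$ obtains a strict majority while each individual multiset remains robustly consistent (each agent holds a proper, hence consistent, submultiset of the minimally inconsistent $Y$). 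By construction $Y \subseteq M(\textbf{J})$, and since $Y$ is inconsistent, $M(\textbf{J})$ contains an inconsistent submultiset and so violates robust consistency.

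The only genuinely delicate point --- and the main obstacle --- is the failure of weakening: unlike the classical or intuitionistic cases, one cannot conclude that $M(\textbf{J})$ is outright \emph{inconsistent} merely from containing an inconsistent $Y$, which is exactly why the statement is phrased in terms of safety (robust consistency) rather than consistency, and why completeness of the individual judgments must be invoked to control the pivotal agent's multiset. I would therefore be careful to check that the constructed profile keeps each $J_i$ both complete and robustly consistent, and that the majority outcome genuinely contains $Y$ as a submultiset; the rest follows verbatim from the cited arguments, so I would simply remark that the example in the proof of Theorem~\ref{MALL} already lies in $\MLL$ and that the two directions above transfer unchanged.
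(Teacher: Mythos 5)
Your proposal is correct and follows essentially the same route as the paper, which leaves this proposition without an explicit proof precisely because it is the restriction to the multiplicative fragment of the preceding $\MALL$ argument (left-to-right via the median property and a pigeonhole on majorities against robust consistency of the $J_i$, right-to-left via the profile construction from Theorem~\ref{th:ilmp} showing $Y \subseteq M(\textbf{J})$). Your explicit attention to the failure of weakening --- concluding only that $M(\textbf{J})$ contains an inconsistent submultiset rather than that it is outright inconsistent --- is exactly the point the paper makes for $\MALL$.
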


Moreover, analogous results can be provided for the intuitionistic version of \MALL and \MLL.
 
\subsection{The case of additive linear logic}\label{subsection:all}

We can now state an interesting possibility result for $\ALL$. We assume that each judgment set is also complete. We can show that the majority rule is always robustly consistent on agendas in $\ALL$.
The key property for stating this result is the following.

\begin{property}\label{prop:two}
In additive linear logic (\ALL) every provable sequent contains exactly two formulas (e.g. $A \t B$).
\end{property}
 
This property has been noticed in \cite{HughesGlabbeek2003}.\footnote{This property holds, provided we do not include the logical constants for true and false in the language of \ALL.} If we inspect the additive rules, we see that they cannot add any new proposition. Thus, since every proof starts with axioms $A \t A$, every provable sequent contains two formulas of $\ALL$. This easily entails that there are no minimal inconsistent multisets of size bigger than 2 in \ALL: if $J$ is inconsistent in $\ALL$, then $J \t $ is provable in \ALL. Thus, every agenda in \ALL is safe for MAJ. 

\begin{theorem}\label{ALL} $\ALL$ is safe for MAJ.   
\end{theorem}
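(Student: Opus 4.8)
The plan is to establish that \ALL is safe for MAJ by showing that \emph{every} agenda $\Phi_{\ALL}$ is safe for MAJ, which by Theorem \ref{th:mpl} (adapted to the multiset setting, as in Proposition \ref{prop:mpmll}) reduces to verifying that every agenda in \ALL satisfies the median property. The entire argument rests on Property \ref{prop:two}: since every provable sequent of \ALL contains exactly two formulas, there can be no minimally inconsistent multiset of size greater than $2$, so the median property holds \emph{automatically} for every agenda.

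First I would spell out why Property \ref{prop:two} forces the median property. Suppose $Y \subseteq \Phi_{\ALL}$ is minimally inconsistent, i.e. $Y \vdash \emptyset$ is provable while every proper submultiset is consistent. By Property \ref{prop:two}, any provable sequent carries exactly two formulas; in particular a provable sequent with empty right-hand side, $Y \vdash$, would need its two formulas to sit on the left. More carefully, I would argue that an inconsistent multiset $Y$ must already contain a two-element submultiset $\{\phi,\psi\}$ that is itself inconsistent, since the derivation of $Y \vdash$ can only "see" two formulas at its axiom leaves and the additive rules never combine contexts from distinct premises (they share the same context). Hence any genuine inconsistency is witnessed by at most two formulas, so $|Y|\le 2$ for every minimally inconsistent $Y$. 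This is exactly the median property (MP).

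Next I would invoke the left-to-right direction of the median-property characterisation. By Proposition \ref{prop:mpmll} (equivalently the \MALL argument restricted to additives), if $\Phi_{\ALL}$ satisfies MP then the majority rule $M$ is robustly consistent on $\Phi_{\ALL}$: any inconsistency in $M(\textbf{J})$ would arise from a minimally inconsistent subset of size at most $2$, say $\{\phi_1,\phi_2\}$, both accepted by a strict majority, which forces some individual $i$ with $\{\phi_1,\phi_2\}\subseteq J_i$, contradicting robust consistency of $J_i$. Since this holds for \emph{every} agenda $\Phi_{\ALL}$ and \emph{every} profile, and since MAJ characterises the majority rule (Theorem \ref{majority}), every $F \in [\Phi_{\ALL},\Phi_{\ALL}](\textrm{MAJ})$ is the majority rule and is robustly consistent. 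By the definition of safety of a logic, \ALL is therefore safe for MAJ.

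The main obstacle I anticipate is the justification that an inconsistency in \ALL is always witnessed by a \emph{two-element} submultiset, rather than merely that minimal inconsistencies have size at most $2$. Property \ref{prop:two} says provable sequents have exactly two formulas, but I must take care that a cut-free derivation of $Y\vdash$ cannot implicitly use more than two agenda formulas on the left while still being a legal \ALL proof; the key is that the additive connective rules do not split or merge contexts across premises, so contraction-free, weakening-free additive proofs keep the formula count fixed at two from the axioms upward. I would lean on the cut-elimination property of \MALL to restrict attention to cut-free proofs, making the counting argument of \cite{HughesGlabbeek2003} directly applicable, and I would note explicitly (as in their footnote caveat) that this depends on excluding the units $\top$ and $\textbf{0}$ from the agenda, which is already guaranteed by our standing assumption that agendas contain no units of $L$.
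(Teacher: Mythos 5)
Your proposal is correct and follows essentially the same route as the paper: Property \ref{prop:two} forces every inconsistent multiset in \ALL to have exactly two elements, hence every agenda satisfies the median property, and the median-property argument (a majority for both members of a two-element inconsistent set forces some individual to hold both, contradicting robust consistency) yields safety for MAJ. Your additional care about cut-freeness, the non-merging of additive contexts, and the exclusion of the units $\top$ and $\textbf{0}$ is a sound elaboration of what the paper leaves implicit.
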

  
 \begin{proof}
 Let $\Phi_{\ALL}$ any agenda in $\ALL$. Since in $\ALL$ every provable sequent contains exactly two formulas, if $\Phi_{\ALL}$ contains an inconsistent set $Y$, then $Y$ must contain exactly two propositions, otherwise $Y \t$ would contradict Property \ref{prop:two}. Therefore, any agenda in $\Phi_{\ALL}$ satisfies the median property, thus the majority rule is always robustly consistent. 
 \end{proof} 

\begin{example}\label{ex:all}
For instance, the agenda $\{A, B, A \with B, \neg A, \neg B, \neg (A \with B)\}$ does not violate the median property in $\ALL$. This is due to the fact that
$A, B, \neg(A \with B)$ is not inconsistent in $\ALL$, since $A , B \t A \with B$ is not derivable, hence $A, B, \neg (A \with B) \t$ is not provable.
\end{example}

\noindent
It is worth stressing that Theorem \ref{ALL} is a possibility results for a logic, i.e. $\ALL$ -- that is a set of inference rules -- and not just for a restriction of the language of the agenda, as usual in JA literature (cf. \cite{ListPuppe2009}). This is evident by noticing that $\ALL$ is no longer safe for MAJ, in case we add weakening (W) to the reasoning power of $\ALL$. In presence of weakening, the multiplicative conjunction entails the additive one (cf. \cite{Girard1995}): $A \otimes B \t A\with B$; that is the reason why $\ALL$ plus W is no longer safe, as the following proof shows.

\begin{proposition}\label{ALLW}  The logic \ALL W is not safe for MAJ. 
\end{proposition}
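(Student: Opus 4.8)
The plan is to show that \ALL W fails to be safe for MAJ by exhibiting a single agenda together with a profile on which the majority rule returns an inconsistent multiset. This mirrors the structure of Theorem~\ref{MALL} and Theorem~\ref{ALL}: safety for MAJ is equivalent to the median property, so to refute safety it suffices to produce an agenda containing a minimally inconsistent set of size strictly greater than $2$ that the majority rule actually accepts. The whole point is that adding weakening restores exactly the inference that Property~\ref{prop:two} forbade in plain \ALL, namely $A \otimes B \t A \with B$, which in turn collapses the additive conjunction onto the multiplicative one and revives the discursive dilemma.

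First I would reuse the agenda from Example~\ref{ex:all}, namely $\{A, B, A \with B, \neg A, \neg B, \neg (A \with B)\}$, which is closed under complements and contains neither tautologies nor contradictions. Then I would take the same profile as in the classical discursive dilemma (the table in the proof of Theorem~\ref{MALL}, with $\otimes$ replaced by $\with$), so that the majority rule outputs the multiset $\{A, B, \neg (A \with B)\}$. It remains to verify that each individual judgment multiset is robustly consistent in \ALL W and that the collective outcome is inconsistent in \ALL W. Robust consistency of the individual sets is routine; the essential step is the derivation witnessing $A, B, \neg (A \with B) \t$ in \ALL W.

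The key derivation proceeds as follows: from the axioms $A \t A$ and $B \t B$ one applies weakening on the left to obtain $A, B \t A$ and $A, B \t B$, then applies $\with$R to conclude $A, B \t A \with B$, and finally $\neg$L (the $L\neg$ rule of \MALL) to reach $A, B, \neg(A \with B) \t$. Concretely:

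\begin{center}
\AxiomC{$A \t A$}
\RightLabel{\footnotesize W L}
\UnaryInfC{$A, B \t A$}
\AxiomC{$B \t B$}
\RightLabel{\footnotesize W L}
\UnaryInfC{$A, B \t B$}
\RightLabel{$\with$R}
\BinaryInfC{$A, B \t A \with B$}
\RightLabel{$\neg$ L}
\UnaryInfC{$A, B, \neg (A \with B) \t$}
\DisplayProof
\end{center}

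\noindent
This shows that $\{A, B, \neg(A \with B)\}$ is inconsistent in \ALL W, so the majority rule is not even consistent (a fortiori not robustly consistent) on this agenda, and hence \ALL W is not safe for MAJ. The main obstacle—really the only delicate point—is to confirm that weakening is genuinely needed and genuinely available here: in pure \ALL the sequent $A, B \t A \with B$ is underivable (this is exactly Property~\ref{prop:two}, which blocks more than two formulas in a provable sequent), whereas the left-weakening steps above are precisely the inferences that \ALL W adds. I would emphasise this contrast explicitly, since it is what makes the result a statement about the \emph{logic} \ALL W rather than merely about the chosen agenda, and it is the formal counterpart of the remark that in the presence of weakening the multiplicative conjunction entails the additive one, $A \otimes B \t A \with B$.
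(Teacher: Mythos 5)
Your proposal is correct and follows essentially the same route as the paper: the same agenda containing $\{A, B, \neg (A \with B)\}$, the same discursive-dilemma profile, and an inconsistency derivation in \ALL W hinging on left weakening. The only difference is cosmetic --- the paper detours through $A, B \t A \otimes B$ and a cut against $A \otimes B \t A \with B$ to highlight that weakening makes the multiplicative conjunction entail the additive one, whereas you derive $A, B \t A \with B$ directly from two applications of W L followed by $\with$R, which is the same derivation minus the redundant cut.
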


\noindent
\begin{proof} Take the agenda in \ALL that includes $\{A, B, \neg (A \with B)\}$.  
In $\ALL$ W, we have the following proof. 

\begin{center}
$$
\AC{A \t A}
\AC{B \t B}
\rl{$\otimes$R}
\BC{A, B \t A \otimes B}
\AC{A \t A}\rl{$W$ L}
\UC{A, B \t A}
\AC{B \t B}
\rl{$W$ L}
\UC{A, B \t B}
\rl{$\with$R}
\BC{A, B \t A \with B}\rl{$\otimes$L}
\UC{A \otimes B \t A \with B}
\rl{cut}
\BC{A, B \t A \with B}
\rl{$\neg$ L}
\UC{A, B, \neg(A \with B) \t }
\dip
$$
\end{center}
The right branch of the tree is in fact the proof that shows that $A \otimes B$ in entails $A\with B$ in the presence of weakening.  Therefore, there is an agenda in $\ALL$ that is not safe for the MAJ rule evaluated on $\ALL$ W. Thus, $\ALL$ W is not safe for MAJ. 
\end{proof}

\noindent
Proposition \ref{ALLW} does not depend on the language of the agenda, the language is still defined as a subset of $\ALL$. Thus, language restrictions may not be sufficient to guarantee consistency and to analyze the rationality of the outcome of an aggregation procedure. The proof-theoretical analysis is more fine grained and allows for assessing which inference rules are responsible for failures in preserving consistency.
In particular, Propositions \ref{ALLW} shows that there are minimally inconsistent multisets of cardinality 3 when reasoning in $\ALL$ plus W, whereas there are none when reasoning in $\ALL$.\par

The situation of weakening W contrasts with the following case. By adding contraction to $\ALL$, agendas in \ALL C are still safe. The reason is that contraction can only shrink the number of formulas in a provable sequent: If there is no provable sequent with more that two formulas in $\ALL$, the same holds for $\ALL$ C.\par

\begin{proposition}\label{prop:ALLC}
The pair \ALL C is safe for MAJ.
\end{proposition}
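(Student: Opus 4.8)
The plan is to show that adding contraction to $\ALL$ preserves the consequence of Property \ref{prop:two} that we actually need, namely that every provable sequent involves at most two formula occurrences. Once this bound is established, the argument of Theorem \ref{ALL} applies essentially verbatim: every agenda $\Phi_{\ALL C}$ satisfies the median property, and since individual judgments are robustly consistent, the majority rule cannot produce an inconsistent outcome.

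The key step is an induction on the height of a derivation in $\ALL C$, restricted as in Property \ref{prop:two} to the unit-free additive fragment, establishing that for every provable sequent $\Gamma \t \Delta$ we have $|\Gamma| + |\Delta| \leq 2$ (counting occurrences). The base case is the axiom $A \t A$, which has exactly two occurrences. For the inductive step I would check each rule: the additive logical rules ($\with$, $\oplus$, $\leadsto$) and the negation rules only rearrange or replace formulas inside a shared context, so they leave the number of occurrences unchanged; the cut rule discharges the cut-formula from both premises, so by the induction hypothesis its conclusion again has at most two occurrences (as already observed for $\ALL$); and contraction (C L and C R) strictly decreases the number of occurrences by one. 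Since no rule increases the count and the base case is two, the bound $|\Gamma| + |\Delta| \leq 2$ propagates through every derivation.

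From this the result follows immediately. A multiset $Y$ is inconsistent exactly when $Y \t$ is provable; by the bound this forces $|Y| \leq 2$, so no minimally inconsistent multiset has cardinality greater than two, and every agenda $\Phi_{\ALL C}$ has the median property. Exactly as in the proof of Theorem \ref{ALL}, a minimally inconsistent subset of $M(\textbf{J})$ of size two would require some individual to accept both of its formulas, contradicting robust consistency, while size one is excluded because the agenda contains no contradictions; hence the majority rule is robustly consistent on every agenda and $\ALL C$ is safe for MAJ.

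The only delicate point, and the place where the argument could go wrong, is the bookkeeping in the inductive step: one must verify that \emph{every} rule is non-increasing in the number of occurrences, in particular that cut (with its cut-formula removed) and contraction do not enlarge sequents, and one must retain the restriction to the unit-free fragment, since the rules for $\top$ and $\textbf{0}$ introduce a formula into an arbitrary context and would break the bound. This is precisely what distinguishes $\ALL C$ from $\ALL W$: weakening \emph{increases} the count and allows the derivation of the size-three minimal inconsistency exhibited in Proposition \ref{ALLW}, whereas contraction can only shrink sequents and therefore leaves the possibility result intact.
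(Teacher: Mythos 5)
Your proof is correct and follows essentially the same route as the paper: the paper justifies the proposition by observing that contraction can only shrink the number of formulas in a provable sequent, so the bound of Property \ref{prop:two} persists in \ALL C and the argument of Theorem \ref{ALL} goes through. Your explicit induction on derivations (and the remarks on cut, the unit-free restriction, and the contrast with weakening) merely makes that one-line observation precise.
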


In fact, contraction entails that $A \with B \t A \otimes B$, which does not cause any problem with the consistency of the majority rule. 

Since $\ALL$ provides a possibility result for the majority rule, it is worthy to discuss whether this fact extends to larger classes of functions. We show that this is not the case. The following proofs adapt the constructions presented in \cite{EndrissEtAlJAIR2012}. 

\begin{proposition} The following facts hold.
\begin{enumerate}
\item  \ALL is not safe for $(WR, A, N)$
\item \ALL is not safe for $(WR, A,I)$.
\item \ALL is not safe for $(A, N, I, M)$ (quota rules).
\end{enumerate}
\end{proposition}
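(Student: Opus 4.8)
The plan is to leverage the gap between the median property---which every \ALL-agenda satisfies by Property \ref{prop:two}---and the strictly stronger simplified and syntactic median properties (SMP, SSMP), which \ALL-agendas can violate. A single purely additive agenda will defeat all three claims: take $\Phi_{\ALL}$ to be the closure under complements of $\{A, A \with B\}$, i.e. $\Phi_{\ALL} = \{A, \neg A, A \with B, \neg(A \with B)\}$. First I would record that $Y = \{A \with B, \neg A\}$ is (minimally) inconsistent in \ALL: applying $\with$L to the axiom $A \t A$ gives $A \with B \t A$, and $\neg$L then yields $A \with B, \neg A \t$. This derivation stays inside the additive fragment, so the counterexample genuinely concerns \ALL and not a larger logic. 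Since $\with$R would require $A \t B$ in order to prove $A \t A \with B$, we have $A \nvdash A \with B$, so $A \with B$ is not equivalent to $\neg\neg A \equiv A$; hence $Y$ contains no pair witnessing SMP, and $\Phi_{\ALL}$ violates SMP (a fortiori SSMP) while still satisfying the median property.

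For claim (1) I would reuse the neutral aggregator of Proposition \ref{prop:mallsmp}(i,ii): on the three-agent profile $J_1 = \{\neg A, \neg(A \with B)\}$, $J_2 = \{A, A \with B\}$, $J_3 = \{A, \neg(A \with B)\}$, let $F$ accept a formula iff at most one agent accepts it. A short count shows that, since $|N_\phi| + |N_{\neg\phi}| = 3$ on each complementary pair, exactly one member of every pair survives, so $F$ satisfies (WR) on all profiles; it is plainly anonymous and neutral. On the exhibited profile it returns $\{\neg A, A \with B\} = Y$, which is inconsistent. For claim (2) I would instead build a non-neutral but independent rule, as required when neutrality is dropped: assign per-proposition thresholds with $m_\phi + m_{\neg\phi} = 4$, accepting $\neg A$ and $A \with B$ on threshold $1$ and accepting $A$ and $\neg(A \with B)$ only on unanimity. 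This keeps (WR) on every profile and satisfies (A) and (I)---acceptance of $\phi$ depends only on $|N_\phi|$---but not (N); on the profile $J_1 = \{A, A \with B\}$, $J_2 = J_3 = \{\neg A, \neg(A \with B)\}$ it outputs $\{\neg A, A \with B\} = Y$, again inconsistent. In both cases the witnessing aggregator lies in the stated class but fails robust consistency, so $\Phi_{\ALL}$---and hence \ALL---is not safe.

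For claim (3) I would use a uniform quota rule with a sub-majority threshold, which the summary table predicts to be unsafe whenever $m < n/2$. With $n = 3$, take $F_1$ (quota $m = 1$), which belongs to $[\Phi_{\ALL}, \Phi_{\ALL}](A, I, N, M)$ by the characterization of uniform quota rules as exactly this class. On the profile $J_1 = \{A, A \with B\}$, $J_2 = \{\neg A, \neg(A \with B)\}$, $J_3 = \{A, \neg(A \with B)\}$ every formula with at least one supporter is accepted, so the outcome contains both $A \with B$ and $\neg A$ (indeed also $A$ and $\neg A$), and is therefore inconsistent in \ALL. Thus $F_1$ witnesses that \ALL is not safe for quota rules.

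The routine but delicate part---and the step I expect to be the main obstacle---is checking that each constructed aggregator genuinely lies in its declared axiom class for \emph{all} profiles, not merely the one exhibited: in particular that the at-most-one rule and the asymmetric-threshold rule return exactly one member of each complementary pair on every complete, robustly consistent profile, so that (WR) holds globally, and that dropping neutrality in claim (2) does not inadvertently break independence. Verifying completeness and robust consistency of the individual judgment sets, and confirming that every inconsistency invoked is derivable by additive rules and $\neg$ alone, is then straightforward once the additive sequent $A \with B, \neg A \t$ is in hand.
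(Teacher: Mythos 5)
Your proposal is correct, and for claims (1) and (3) it coincides with the paper's proof: the same agenda $\{A,\neg A, A\with B, \neg(A\with B)\}$, the same three-agent profile with the ``accept iff at most one agent accepts'' rule for (WR, A, N), and the quota-$1$ rule for the quota-rule claim. The only genuine divergence is in claim (2): the paper takes a \emph{constant} aggregator that always accepts some $\phi$ with $\phi \vdash \psi$ while rejecting $\psi$ (hence accepting $\sim\psi$ by completeness), which is trivially anonymous, independent and weakly rational; you instead build an asymmetric per-formula threshold rule ($m_\phi + m_{\sim\phi} = n+1$ with unequal splits). Both witnesses are valid. The paper's constant function is shorter and needs no profile-by-profile verification, whereas your threshold rule requires the check you correctly flag --- that exactly one member of each complementary pair clears its threshold on every complete, complement-free profile --- but in exchange it is a more natural, non-degenerate aggregator and sits inside the standard (non-uniform) quota-rule family, which makes the failure of neutrality, rather than some pathology of constancy, visibly the culprit. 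Your preliminary observation that the agenda satisfies the median property but violates SMP and SSMP (via $A\with B \vdash A$ but $A \nvdash A\with B$) is also sound and correctly situates the counterexamples relative to the paper's safety characterisations.
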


\begin{proof} 
1. We show that there is an agenda in \ALL and an aggregator that satisfies the axioms above that return an inconsistent outcome. Take the agenda in \ALL that consists of $\{A, \neg A, A \with B, \neg (A \with B)\}$. Consider a profile with 3 agents such that agent $J_{1} = \{ \neg A, \neg (A \with B)\}$,  $J_{2} = \{ A, A \with B\}$ and 
$J_{3} = \{A, \neg (A \with B)\}$. Take the aggregator $F$ such that $\phi \in F(\textbf{J})$ if 0 or 1 agents has $\phi \in J_{i}$ and  $\phi \notin F(\textbf{J})$ if more than one agent has $\phi \in J_{i}$. $F$ is clearly satisfies (WR), (A), and (N). Thus, $F(\textbf{J})$ is $\{\neg A, A \with B\}$ , and $\neg A, A \with B \t $.\par

\noindent
2. Suppose the agenda contains two just two propositions $A$ and $B$ such that $A \t B$ in $\ALL$ and $A \neq \neg B$. Take the constant function $F$ that for every profile accepts $A$ while it rejects $B$. This function is anonymous, independent and weakly rational, however the outcomes violates consistency.\par

\noindent
3. It is enough to set the threshold for acceptance to $q \geq 1$ (i.e. take the union of all formulas that are accepted by at least one agent) to make the 
collective set inconsistent on any agenda in $\ALL$.
\end{proof}

However, by restricting the possible threshold of quota rules to the super-majoritarian quota, a possibility result can be achieved. 

\begin{proposition}\label{quota:ll}
$\ALL C$ is safe for $(A, N, I, M)$, for $F_{m}$ with $m > \frac{n}{2}$
\end{proposition}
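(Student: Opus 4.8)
The plan is to reduce the claim to two facts that are already available: that every agenda over the language of $\ALL$ satisfies the median property even once contraction is added, and a simple counting argument for thresholds strictly above $\frac{n}{2}$ that generalises the majority-rule case of Theorem~\ref{ALL}.

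First I would recall from Property~\ref{prop:two} together with Proposition~\ref{prop:ALLC} that in $\ALL C$ every provable sequent still contains exactly two formulas, since contraction can only decrease the number of formula occurrences in a provable sequent. Consequently, whenever a multiset $Y$ is inconsistent in $\ALL C$ --- that is, $Y \t$ is provable --- the multiset $Y$ consists of exactly two formulas. In particular, over any agenda $\Phi_{\ALL}$ every minimally inconsistent multiset has cardinality at most $2$, so every such agenda has the median property with respect to $\ALL C$.

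Next, fix an arbitrary agenda $\Phi_{\ALL}$, a threshold $m > \frac{n}{2}$, and a profile $\textbf{J} = (J_1, \dots, J_n)$ of robustly consistent judgment multisets. I would argue by contradiction: suppose $F_m(\textbf{J})$ is not robustly consistent. Then it contains a minimally inconsistent submultiset $Y$; by the median property $|Y| \leq 2$, and since the agenda contains no contradictions $|Y| \neq 1$, so $Y = \{\phi, \psi\}$ with $\{\phi,\psi\} \t$. As both $\phi$ and $\psi$ belong to $F_m(\textbf{J})$, we have $|N_\phi| \geq m$ and $|N_\psi| \geq m$, whence $|N_\phi| + |N_\psi| \geq 2m > n$; therefore $N_\phi$ and $N_\psi$ cannot be disjoint, and some agent $i$ accepts both $\phi$ and $\psi$. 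But then $\{\phi,\psi\}$ is an inconsistent submultiset of $J_i$, contradicting the (robust) consistency of $J_i$.

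The argument is essentially the one used for the majority rule, of which $F_m$ with $m > \frac{n}{2}$ is the natural generalisation, so I do not expect a genuine obstacle. The one point to handle carefully is the multiset bookkeeping: I would verify that the counting of the supporter sets $N_\phi$, $N_\psi$ and the intersection argument remain correct when $\phi$ and $\psi$ may coincide or occur with multiplicity, and that it is precisely robust consistency (not mere consistency) of the individual multisets that rules out an inconsistent pair sitting inside a single $J_i$. It is also worth stating explicitly that the strictness $m > \frac{n}{2}$ is indispensable: for $m \leq \frac{n}{2}$ the sets $N_\phi$ and $N_\psi$ may be disjoint, and the earlier failure of low-threshold quota rules on $\ALL$ shows that consistency is then lost.
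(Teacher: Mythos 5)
Your proof is correct and follows essentially the same route as the paper: the paper likewise reduces the claim to the median property of agendas in \ALL (preserved under contraction) and then invokes the general threshold condition $m > n - \frac{n}{k}$ with $k=2$, which is exactly the pigeonhole count you carry out explicitly. Your added care about robust (rather than mere) consistency of the individual multisets is the right point to emphasise and is consistent with how the paper treats logics without weakening.
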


\begin{proof}
Since any agenda in \ALL satisfies the median property, inconsistent sets have cardinality 2. A 	quota rule $F_m$ preserves consistency whenever $m > n - \frac{n}{k}$, where $k$ is the size of the largest inconsistent set in \ALL. Thus, $F_{m}$ is consistent for agendas in \ALL if $m > \frac{n}{2}$.
\end{proof}

Therefore, reasoning in $\ALL$ does not provide new possibility results for classes of aggregators that significantly depart from the majority or the super-majority rule. The reason is that consistency in $\ALL$ tightly corresponds to the median property, which is the condition that preserves the majority rule from inconsistency.\par
Once we extend the class of aggregation procedure, the median property is no longer sufficient to guarantee consistency.
In order to guarantee consistency, we need then to apply further restrictions on the agenda in $\ALL$ that strengthen the median property, i.e. the SMP and the SSMP, cf. \cite{EndrissEtAlJAIR2012}.\par

\begin{proposition}\label{prop:allsmp}
The following facts hold:
\begin{itemize}
\item i)  $\Phi_{\ALL}$ satisfies the SMP iff it is safe for $(WR, A, N ,I)$.
\item i) $\Phi_{\ALL}$ satisfies the SMP iff it is safe for $(WR, A, N)$.
\item ii) $\Phi_{\ALL}$ satisfies the SSMP iff it is safe for $(WR, A, I)$.
\end{itemize}
\end{proposition}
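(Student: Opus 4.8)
The plan is to mirror, almost verbatim, the three arguments already established for $\MALL$ in Proposition \ref{prop:mallsmp}, since the only logical facts used there were (a) that individual judgment sets are robustly consistent and complete, (b) that consistency and robust consistency diverge because weakening fails, and (c) the representability of anonymous, independent, neutral aggregators by a threshold function $h$. All three facts hold equally in \ALL, which is a fragment of \MALL, so the structure of each equivalence carries over directly. For each item I would prove the two directions separately, reusing the representation argument from Theorem \ref{majority} and the counterexample constructions from the \MALL case.

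For the left-to-right direction of (i) and the \emph{i)} that duplicates it (the restatement for $(WR,A,N)$), I would argue contrapositively: assume some $F \in [\Phi_{\ALL},\Phi_{\ALL}](WR,A,N,I)$ is not robustly consistent, so $F(\textbf{J})$ contains formulas $\phi$ and $\neg\psi$ with $\phi$ equivalent to $\psi$ in \ALL. Using robust consistency and completeness of the $J_i$ exactly as in the proof of Proposition \ref{prop:mallsmp}(i), one shows $\phi \in J_i \Leftrightarrow \psi \in J_i$ for every agent, whence by $(A),(I),(N)$ one gets $\phi \in F(\textbf{J}) \Leftrightarrow \psi \in F(\textbf{J})$, contradicting complement-freeness. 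For $(WR,A,N)$ without independence, the representation function $h$ may depend on the profile, but the pointwise argument is unaffected, so the same reasoning applies. For the right-to-left direction I would assume \ALL violates the SMP, exhibiting $\phi,\psi$ with $\phi \t \neg\psi$ but $\neg\psi \nvdash \phi$, and build the three-agent profile with $J_1 = \{\neg A, \neg(A\with B)\}$, $J_2 = \{A, A\with B\}$, $J_3 = \{A, \neg(A\with B)\}$ together with the aggregator accepting a formula iff at most one agent supports it; this is the construction from Proposition \ref{prop:mallsmp} and yields the inconsistent outcome $\{\neg A, A\with B\}$, which is a genuine \ALL inconsistency since $\neg A, A\with B \t$ uses only additive rules. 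Item (ii) for the SSMP and $(WR,A,I)$ is the simplest: left-to-right, an inconsistent robustly-consistent-preserving failure forces $\phi$ and $\neg\phi$ both into $F(\textbf{J})$, violating complement-freeness; right-to-left reuses the standard SSMP counterexample.

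The one point demanding care, and the main obstacle, is verifying that all the witnessing formulas and equivalences used in the \MALL proofs actually live inside the additive fragment \ALL. The \MALL argument for (i) invoked the specific equivalence involving $\with$ (for instance $A\with B$ versus $\neg(\neg A \oplus \neg B)$), and I must confirm that the SMP-violating pair I exhibit uses only additive connectives $\with, \oplus, \neg$ and no multiplicative $\otimes$ or $\parr$; the displayed counterexample already does, which is reassuring. I would also double-check that the claimed inconsistency $\neg A, A\with B \t$ is derivable by a purely additive proof (via $\with$L followed by $\neg$L), respecting Property \ref{prop:two} that every provable \ALL sequent has exactly two formulas. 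Once these syntactic confinements are checked, the proof is a routine transcription, so I would write ``The proof adapts the argument of Proposition \ref{prop:mallsmp}, observing that all the formulas involved belong to the additive fragment'' and present only the SMP-violation construction in full.
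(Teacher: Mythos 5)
Your proposal is correct and follows essentially the route the paper intends: the paper states Proposition \ref{prop:allsmp} without a separate proof, relying on the argument already given for Proposition \ref{prop:mallsmp}, and your key verification — that the witnessing formulas and the inconsistency $\neg A, A\with B \vdash$ live entirely in the additive fragment — is exactly the check needed (indeed the same three-agent profile and threshold aggregator already appear verbatim in the paper's earlier demonstration that \ALL is not safe for $(WR,A,N)$). Nothing further is required.
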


We conclude this section with a positive result concerning acceptance-rejection neutrality. We focus on the case in which individual judgements sets are just assumed to be robustly consistent  (i.e. they do not need to be complete). 
\cite{DietrichList2009} have shown that every aggregator that is acceptance-rejection neutral and consistent is a dictatorship of some individual (namely, the aggregator always copies the judgements of some individual). The theorem does not hold for the majority rule, provided we evaluate it w.r.t. to $\ALL$. Therefore, the majority rule is an acceptance-rejection neutral aggregator that preserves robust consistency. 

\begin{proposition}
\ALL is safe for $(arN)$.
\end{proposition}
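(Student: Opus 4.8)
The plan is to reduce robust consistency in $\ALL$ to the avoidance of a single, highly restricted defect and then to show that acceptance--rejection neutrality already rules this defect out. The starting point is Property~\ref{prop:two}: since every provable sequent of $\ALL$ carries exactly two formulas, the only provable sequents of the form $Y \vdash$ are those with $|Y| = 2$, and there are no contradictory singletons. Consequently the minimal inconsistent multisets of any agenda $\Phi_{\ALL}$ all have cardinality $2$, exactly as in the proof of Theorem~\ref{ALL}. Hence a collective multiset $F(\textbf{J})$ fails to be robustly consistent \emph{iff} it contains a submultiset $\{\phi,\psi\}$ with $\phi,\psi \vdash_{\ALL}$. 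The whole statement therefore collapses to the claim that an aggregator satisfying $(arN)$ never returns both members of an inconsistent pair.

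First I would record a convenient description of these pairs. Using the negation rules of Table~\ref{sequent:mall}, $\phi,\psi \vdash$ is interderivable with $\phi \vdash \neg\psi$, and since double negation cancels in $\MALL$ (hence in $\ALL$) this is in turn equivalent to $\phi \vdash \sim\psi$, and symmetrically $\psi \vdash \sim\phi$. Because the agenda is closed under complements, both $\sim\phi$ and $\sim\psi$ are available to be fed into $(arN)$. The core step is then a proof by contradiction: assume some profile $\textbf{J}$ of robustly consistent individual multisets yields $\{\phi,\psi\} \subseteq F(\textbf{J})$ with $\phi,\psi \vdash_{\ALL}$, and derive a violation of $(arN)$. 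The leverage is that robust consistency of each $J_i$ forbids $\{\phi,\psi\} \subseteq J_i$, so in the profile $\textbf{J}$ no agent simultaneously holds $\phi$ and $\psi$; feeding the complementary pair into the duality clause of $(arN)$ should then force exactly one of the two to be collectively accepted, contradicting the assumption that both are. This is precisely the mechanism that makes the majority rule robustly consistent in the proof of Theorem~\ref{majority}, here isolated at the level of the single axiom $(arN)$.

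The step I expect to be the genuine obstacle is bridging the asymmetry inside the antecedent of $(arN)$. That clause activates only when $\phi$ and $\psi$ are \emph{dual} in the profile, i.e.\ when each agent holds exactly one of them ($\phi \in J_i \Leftrightarrow \psi \notin J_i$), whereas robust consistency of a \emph{possibly incomplete} $J_i$ delivers only the one-sided implication $\phi \in J_i \Rightarrow \psi \notin J_i$. Closing this gap is the crux, and I would attack it by reducing to complementary inconsistent pairs ($\psi = \sim\phi$) and exploiting that, by Property~\ref{prop:two}, these are the \emph{only} obstructions to robust consistency, so that the supporting profiles witnessing the collective acceptance of both $\phi$ and $\sim\phi$ can be arranged to realise the required duality. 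Securing this reduction for arbitrary incomplete profiles is the delicate point; under the complementary completeness pattern $\phi \in J_i \Leftrightarrow \sim\phi \notin J_i$ the separation is immediate, which already suffices to exhibit, via the majority rule, an $(arN)$ aggregator that is robustly consistent yet not a dictatorship, contrary to the Dietrich--List theorem.
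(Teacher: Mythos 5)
You have correctly isolated both the right reduction and the real difficulty, but you should first know that the paper offers no proof of this proposition at all: it is stated bare, and the paragraph preceding it makes clear that the intended content is the \emph{possibility} reading --- the majority rule is an acceptance-rejection neutral aggregator that is robustly consistent on every agenda in \ALL (by Property~\ref{prop:two} every minimal inconsistent multiset in \ALL has cardinality $2$, so the pigeonhole argument of Theorems~\ref{majority} and~\ref{ALL} applies, and majority plainly satisfies $(arN)$ since $|N_\phi|+|N_\psi|=n$ under the duality pattern with $n$ odd) --- whence the Dietrich--List dictatorship theorem for $(arN)$ fails in \ALL. Your closing clause reproduces exactly this argument, and that part is sound; your reduction of robust consistency to the avoidance of inconsistent pairs via Property~\ref{prop:two} is also correct.

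The universal reading of ``safe'' --- every $F \in [\Phi_{\ALL},\Phi_{\ALL}](arN)$ is robustly consistent, which is what the paper's own definition of safety literally says and what the bulk of your argument aims at --- cannot be rescued, precisely because of the asymmetry you flag. Individual judgment multisets are here only assumed robustly consistent, not complete, so there are profiles in which some agent accepts neither $\phi$ nor $\sim\phi$; on such a profile the antecedent of $(arN)$ fails for the pair $(\phi,\sim\phi)$ and the axiom imposes no constraint at all. Concretely, on the agenda $\{A,\neg A\}$ let $F$ agree with the majority rule on every profile in which each agent accepts exactly one of $A,\neg A$, and output $\{A,\neg A\}$ on any profile in which some agent accepts neither. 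All non-vacuous instances of $(arN)$ are handled by majority, so $F$ satisfies $(arN)$, yet it returns the multiset $\{A,\neg A\}$, which is inconsistent in \ALL. Your hope of ``arranging the supporting profiles to realise the required duality'' therefore cannot succeed: the offending profile is given to you, not chosen by you. The defensible conclusion is the one your last sentence already contains --- the proposition holds in the possibility sense witnessed by the majority rule, which is all that the paper's surrounding discussion actually claims.
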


\section{Judgment Aggregation in Relevant logics}\label{sec:jar}

We discuss now the extension of linear logic to relevant logic. In this case, judgments form multiset again. We assume that the individual judgments are complete. For $\R$, weakening does not hold, therefore the condition of robust consistency is required. The full language of $\R$, that includes multiplicative and additive connectives (i.e. intensional and extensional), behaves as linear logic, thus, we have again the following results.

\begin{theorem}
$\R$ is not safe for MAJ.
\end{theorem}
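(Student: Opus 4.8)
The plan is to reuse exactly the same strategy that established impossibility for the previously treated non-classical logics, in particular Theorem~\ref{MALL} for \MALL, since \R is obtained from linear logic by adding contraction and distributivity and hence \emph{proves more sequents} than \MALL. The core idea is to exhibit a single agenda together with a profile of robustly consistent, complete individual judgments on which the majority rule returns an inconsistent multiset. Because \R extends \MALL, any multiset that is inconsistent in \MALL is inconsistent in \R as well; this monotonicity-in-the-logic observation is precisely the principle recorded in the summary discussion (``if $Y$ is inconsistent in $L$, then $Y$ is inconsistent in any logic $L'$ that is stronger than $L$'').

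Concretely, I would take the agenda $\Phi_{\R} = \{A, \neg A, B, \neg B, A \otimes B, \neg(A \otimes B)\}$ closed under complements, and the same three-voter profile used in the proof of Theorem~\ref{MALL}: voter $i_1$ accepts $\{A, B, A\otimes B\}$, voter $i_2$ accepts $\{A, \neg B, \neg(A\otimes B)\}$, and voter $i_3$ accepts $\{\neg A, B, \neg(A\otimes B)\}$. First I would check that each individual judgment multiset is robustly consistent and complete with respect to \R: completeness is immediate from the table, and robust consistency follows because each voter's multiset contains no pair $\{\phi, \sim\phi\}$ and is in fact satisfiable. Then the majority rule returns $M(\textbf{J}) = \{A, B, \neg(A\otimes B)\}$, exactly as in the \MALL case. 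The derivation showing inconsistency is the same two-step proof used earlier:
\begin{center}
$$
\AC{A \t A}
\AC{B \t B}
\rl{$\otimes$ R}
\BC{A, B \t A \otimes B}
\rl{$\neg$ L}
\UC{A, B, \neg (A \otimes B) \t}
\dip
$$
\end{center}
Since \R contains all the multiplicative rules of \MALL (the $\otimes$R and $\neg$L rules are unaffected by adding contraction and distributivity), this derivation is valid in \R, witnessing $\{A, B, \neg(A\otimes B)\} \t_{\R} \emptyset$.

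The conclusion then follows by the definition of safety: the pair $(\R,\R)$ fails to be safe for MAJ because the agenda $\Phi_{\R}$ admits an aggregator in the relevant class, namely the majority rule, whose output is not robustly consistent (indeed not even consistent). I would emphasize that the argument is genuinely an instance of the earlier pattern rather than something new, so the write-up can be extremely short. The only point that requires a sentence of care is why adding contraction and distributivity does not rescue consistency: these extra principles only enlarge the set of provable sequents, so the counterexample multiset remains inconsistent, and none of the new axioms interferes with the robust consistency or completeness of the individual profiles. There is no real obstacle here; the mild subtlety is simply to confirm that the multiplicative tensor and negation behave identically in \R and \MALL, which is immediate from the Hilbert presentation in Table~\ref{himall} together with the added axioms, so the same proof figure transfers verbatim.
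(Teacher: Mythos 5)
Your proposal is correct and matches the paper's own proof, which simply says "It is sufficient to use the argument in the proof of Theorem \ref{MALL}" --- you have spelled out exactly that argument (same agenda, same profile, same derivation) together with the observation that inconsistency transfers to the stronger logic \R. No further comment is needed.
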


\begin{proof}
It is sufficient to use the argument in the proof of Theorem \ref{MALL}.
\end{proof}

Moreover, the median property provides again a suitable condition for the consistency of the majority rule. 

\begin{proposition} 
$\Phi_{\R}$ satisfies the median property iff it is safe for MAJ.
\end{proposition}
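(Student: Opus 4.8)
The plan is to reuse, essentially verbatim, the structure of the proof of the median-property characterisation already established for \MALL (the proposition following Example~\ref{ex:mpmall}), since \R is obtained from linear logic by adding contraction and distributivity, and neither of these additions affects the two directions of the argument in a way that breaks them. The key observation is that the characterisation rests only on the interplay between robust consistency of the individual multisets and the cardinality of minimally inconsistent sets, not on the specific inference rules of the logic beyond the fact that weakening is absent.

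For the direction from left to right, I would argue the contrapositive exactly as in Theorem~\ref{th:mpl}: assume $\Phi_{\R}$ satisfies the median property and, for contradiction, that $M(\textbf{J})$ is inconsistent in \R. Then $M(\textbf{J})$ contains a minimally inconsistent multiset $Y$ of cardinality at most~$2$. As before, $|Y|$ cannot be~$1$ because the agenda excludes contradictions, so $Y=\{\phi_1,\phi_2\}$ with both formulas accepted by a majority; counting gives $|N_{\phi_1}|\geq\frac{n+1}{2}$ and $|N_{\phi_2}|\geq\frac{n+1}{2}$, hence some agent $i$ has both $\phi_1$ and $\phi_2$ in $J_i$, contradicting the robust consistency of $J_i$. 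This is where robust consistency (rather than mere consistency) is indispensable, exactly as flagged in the \MALL discussion, since in \R too a consistent multiset may have an inconsistent proper submultiset.

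For the direction from right to left, I would prove that violating the median property yields an inconsistent majority outcome, again by the profile construction of Theorem~\ref{th:ilmp}. If $\Phi_{\R}$ violates the median property, there is a minimally inconsistent multiset $Y\subseteq\Phi_{\R}$ of cardinality at least~$3$; choosing distinct $\phi,\psi\in Y$ and distributing acceptances so that each individual multiset omits exactly one element of $Y$ (and the middle agent accepts only $\phi$ and $\psi$), every $J_i$ is robustly consistent while $Y$ is accepted by majority, so $Y\subseteq M(\textbf{J})$ and $M(\textbf{J})$ is inconsistent in \R.

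The only point requiring genuine attention — and the main (mild) obstacle — is verifying that adding contraction and distributivity to the logic does not create new small minimally inconsistent sets that would trivialise one direction, nor destroy the ones used in the counterexample for the other. Concretely, I would note that the counterexample multiset of Theorem~\ref{MALL}, namely $\{A,B,\neg(A\otimes B)\}$, remains minimally inconsistent in \R (contraction and distributivity concern the additives and duplication of premises, and do not render any proper submultiset inconsistent), so the right-to-left direction is witnessed; and for the left-to-right direction no feature of \R beyond the absence of weakening is used, so the cardinality counting argument goes through unchanged. Hence the proof reduces to invoking these two adapted constructions, and I would simply remark that it proceeds as in the \MALL case.
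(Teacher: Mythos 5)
Your proposal follows the same route the paper takes: the paper actually omits the proof of this proposition, but the intended argument is exactly the adaptation of the \MALL case (and of Theorems~\ref{th:ilmp} and~\ref{th:mpl}) that you describe, and your extra check that contraction and distributivity neither create new small minimally inconsistent multisets nor destroy the witness $\{A,B,\neg(A\otimes B)\}$ is a sensible precaution the paper leaves implicit.

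One step in your right-to-left direction would, as written, fail: from $Y\subseteq M(\textbf{J})$ with $Y$ inconsistent you conclude that $M(\textbf{J})$ is inconsistent in \R. That inference is licensed only by weakening, which \R lacks --- this is precisely the phenomenon the paper stresses when it separates consistency from robust consistency, and which you yourself invoke in the left-to-right direction. (The paper makes the contrast explicit: in the intuitionistic case it says ``since \IL satisfies W, if $Y$ is inconsistent, then $M(\textbf{J})$ is inconsistent,'' whereas for \MALL it only concludes that $M(\textbf{J})$ ``violates robust consistency.'') The correct conclusion here is that $M(\textbf{J})$ is not \emph{robustly} consistent; since safety is defined as robust consistency of every outcome, this is all that is needed, so the argument goes through once that final sentence is amended.
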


General possibility results can be obtained by restricting to $\AR$, the additive fragment of relevant logic. We extend the possibility result for $\ALL$ to the additive fragment of $\R$ by proving it by means of the Hilbert system. 

\begin{lemma}\label{lemma:C}
In $\HMALL$, the axiom for contraction (C): $\t (\phi \multimap (\phi \multimap \psi)) \multimap (\phi \multimap \psi)$ is derivable iff the following rule of contraction (HC) is:

$$
\AC{\{\Gamma, \phi,\phi\} \t \psi}
\rl{HC}
\UC{\{\Gamma, \phi\}\t \psi}
\dip
$$
\end{lemma}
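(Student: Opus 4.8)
The plan is to prove the two directions of the biconditional separately, relying on the deduction theorem for $\HMALL$ (recalled just before the statement) together with the $\multimap$-rule (modus ponens) and the assumption leaves $\phi \t \phi$ from Definition \ref{def:derivation-mall}. The whole argument is the standard Hilbert-style correspondence between an axiom and its associated structural rule, but carried out in the resource-sensitive setting where hypotheses form multisets rather than sets.

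For the direction ``HC derivable $\Rightarrow$ (C) derivable'', I would first build the deduction $\phi \multimap (\phi \multimap \psi), \phi, \phi \t \psi$ using two applications of the $\multimap$-rule: one copy of $\phi$ together with $\phi \multimap (\phi \multimap \psi)$ yields $\phi \multimap \psi$, and the second copy of $\phi$ together with $\phi \multimap \psi$ yields $\psi$. Applying the rule HC to the two occurrences of $\phi$ gives $\phi \multimap (\phi \multimap \psi), \phi \t \psi$. Two applications of the deduction theorem then discharge first $\phi$ and then $\phi \multimap (\phi \multimap \psi)$, producing $\t (\phi \multimap (\phi \multimap \psi)) \multimap (\phi \multimap \psi)$, which is exactly axiom (C).

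For the converse, ``(C) derivable $\Rightarrow$ HC derivable'', I would start from an arbitrary deduction of $\Gamma, \phi, \phi \t \psi$ and apply the deduction theorem twice, peeling off one occurrence of $\phi$ at a time, to obtain $\Gamma \t \phi \multimap (\phi \multimap \psi)$. Combining this with axiom (C), namely $\t (\phi \multimap (\phi \multimap \psi)) \multimap (\phi \multimap \psi)$, via the $\multimap$-rule yields $\Gamma \t \phi \multimap \psi$. A final $\multimap$-rule against the leaf $\phi \t \phi$ then gives $\Gamma, \phi \t \psi$, which is the conclusion of HC.

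The only delicate point — and the main obstacle — is the bookkeeping of the multiset of hypotheses, since in $\HMALL$ one may neither duplicate nor discard assumptions for free. Concretely, one must verify that the $\multimap$-rule combines the two contexts as a multiset union (so that both copies of $\phi$ are genuinely present and genuinely consumed), that each application of the deduction theorem removes exactly one occurrence of the discharged formula, and that the converse use of the deduction theorem needed in the second direction is legitimately realised by modus ponens against $\phi \t \phi$. Once this resource accounting is respected, both derivations go through and the equivalence follows; I expect no further complications, as contraction is precisely the structural counterpart of the contraction axiom (C).
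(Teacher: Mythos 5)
Your proposal is correct and follows essentially the same route as the paper's proof: both directions rest on the deduction theorem (and its converse, realised by the $\multimap$-rule against the leaf $\phi \t \phi$) together with modus ponens applications of axiom (C), exactly as in the paper. The only difference is cosmetic: you carry the general context $\Gamma$ through the argument, whereas the paper writes out only the case $\Gamma = \emptyset$, and your explicit attention to the multiset bookkeeping is a welcome precision rather than a divergence.
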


\begin{proof}
Assume the axiom $C$ holds. We derive the rule HC. Assume $\{\phi,\phi\} \t \psi$. By the deduction theorem, we have that $\t \phi \multimap (\phi \multimap \psi)$ is derivable. 
By the axiom (C) and the $\multimap$-rule, we can derive:

\begin{itemize}
\item $\t \phi \multimap (\phi \multimap \psi)$ (hypothesis)
\item $\t \phi \multimap (\phi \multimap \psi) \multimap (\phi \multimap \psi)$ (axiom (C))
\item $\t \phi \multimap \psi$ ($\multimap$-rule)
\end{itemize}

Thus, by the deduction theorem, we have that $\{\phi\} \t \psi$.\par
\noindent
In the other direction, \HMALL plus the rule (HC) derives (C). We show that the axiom (C) can be derived as follows. In $\HMALL$, it is derivable that:
$\{\phi, \phi, \phi \multimap (\phi \multimap \psi)\} \t \psi$. By applying (HC), we obtain that $\{\phi, \phi \multimap (\phi \multimap \psi)\} \t \psi$, which by the deduction theorem entails 
$\t (\phi \multimap (\phi \multimap \psi)) \multimap (\phi \multimap \psi)$.
\end{proof}

Moreover, if we assume axioms (D1) and (D2), then by the deduction theorem, we have that:

\begin{itemize}
\item (HD1): $\{A \with (B \oplus C)\} \t (A \with B) \oplus (A \with C)$
\item (HD2)$\{ (A \with B) \oplus (A \with C)\} \t  A \with (B \oplus C)$\\
\end{itemize}

We can now extend Property \ref{prop:two} to \AR as follows.

\begin{property}\label{two:CD}
In \AR every inconsistent set has cardinality 2. 
\end{property}

\begin{proof}
We know that in \ALL, every provable sequent has cardinality two. Since the sequent calculus \MALL and its Hilbert systems are equivalent, that entails that if
$\Gamma \t \bot$ in \HMALL, then $\Gamma$ contains exactly at most two formulas. The Hilbert system for $\R$ is obtained by adding to \HMALL the rule (HC), (HD1) and (HD2). It is easy to see that none of this can increase the number of formulas in a derivation.  
\end{proof}

We can now show that the additive fragment of \R is safe for MAJ. 

\begin{theorem}\label{th:ar}
$\AR$ is safe for MAJ.
\end{theorem}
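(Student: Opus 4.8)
The plan is to mirror the structure of the proof of Theorem~\ref{ALL}, using the characterisation of inconsistency that Property~\ref{two:CD} provides for \AR. Concretely, I would argue that every agenda $\Phi_{\AR}$ satisfies the median property, and then invoke the already-established fact that the median property is exactly the condition guaranteeing safety of the majority rule (as in Theorem~\ref{th:mpl} and the analogous \MALL result).

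First I would fix an arbitrary agenda $\Phi_{\AR}$ and suppose, towards showing the median property, that $Y \subseteq \Phi_{\AR}$ is minimally inconsistent. By definition of inconsistency, $Y \t \emptyset$ (equivalently $Y \t \bot$) is derivable in \R, hence in \HMALL augmented by (HC), (HD1), and (HD2). By Property~\ref{two:CD}, any such inconsistent $Y$ has cardinality exactly $2$. Therefore there are no minimally inconsistent sets of size greater than $2$, which is precisely the median property (MP) for $\Phi_{\AR}$.

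Second I would complete the argument by appealing to the left-to-right direction of the safety characterisation already proved for the weaker systems: since individual judgment sets are assumed robustly consistent, if $M(\textbf{J})$ were inconsistent it would contain a minimally inconsistent subset of cardinality at most $2$, hence a pair $\{\phi_1,\phi_2\}$ each accepted by a strict majority; by the standard counting argument (two majorities overlap) some individual $i$ would have both $\phi_1,\phi_2 \in J_i$, contradicting robust consistency of $J_i$. Since this holds for every agenda, \AR is safe for MAJ, establishing the statement.

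The main obstacle is ensuring that Property~\ref{two:CD} genuinely transfers from \ALL to \AR, i.e.\ that adding (HC), (HD1), (HD2) to \HMALL cannot produce an inconsistent set of cardinality greater than two. This is exactly what Property~\ref{two:CD} asserts and its proof already records that none of these rules increases the number of formulas in a derivation, so I would simply cite it rather than re-derive it. The only subtlety worth flagging is that the median-property-implies-safety direction relies essentially on \emph{robust} consistency of the individual sets (mere consistency being insufficient in systems without weakening, as Example~\ref{ex:mpmall} shows), so I would make the use of robust consistency explicit in the counting step.
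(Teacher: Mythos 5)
Your proposal is correct and follows essentially the same route as the paper: invoke Property~\ref{two:CD} (itself resting on Lemma~\ref{lemma:C}) to conclude that every minimally inconsistent set in \AR has cardinality $2$, so every agenda satisfies the median property, and then apply the standard majority-overlap counting argument under robust consistency. Your explicit remark that robust (rather than mere) consistency is what makes the counting step work in a weakening-free logic is a useful clarification that the paper's terse proof leaves implicit.
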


\begin{proof}
By Lemma \ref{lemma:C} and Property \ref{two:CD}, every minimally inconsistent set in \AR has cardinality 2. Thus, every agenda in \AR is safe for MAJ. 
\end{proof}

The other safety results and the characterisation of safe agendas can be easily restated for the case of $\R$.  Many systems of relevant logics have been discussed in the literature, cf. \cite{Anderson1992entailment,Restall2002}. A dedicated treatment of judgment aggregation in relevant logics is left for future work.

\section{Extensions to other logics}\label{sec:extensions}

We have approached JA in substructural logics as they allows for covering a significant number of non-classical logics. As we discussed, any extension that allows for weakening is going to suffer of the analogous problems of aggregation of classical logic. It is however possible to look for extensions of the reasoning power of \ALL and \AR ---for which safety is ensured--- to other logics that lack weakening. An interesting minimal class of \emph{fuzzy logics} can be defined as an extensions of $\MALL$. The class of \emph{uninorm fuzzy logics} is obtained from \HMALL by adding, besides distributivity, the \emph{linearity} axiom, which can be written in our notation, by means of the axiom (L): $(A \multimap B) \oplus (B \multimap A)$ (cf. \cite{metcalfe2008proof}, p. 54).  As the axiom does not increase the number of formula in a derivation, it is in principle possible to extend our treatment and the safety results of the additive fragment to uninorm fuzzy logics.\par
Another reasonable extension is to discuss the case of the logic of \emph{bunched implication} \BI \cite{pym2013semantics} which have significant applications in computer science. From an axiomatic perspective, this logic is close to the relevant logic \R ---since its additive conjunction and disjunction are distributive--- (cf \cite{Paoli2002}, p. 130).  Moreover \BI blocks weakening at the general level and it separates additive and multiplicative connectives. Thus, it appears a good candidate for extending the safety results of \ALL. However, \BI combines an \emph{intuitionistic} implication and a substructural implication. That is, the additive implication of \BI is more poweful than the additive implication of \LL. In particular, the additive implication of \BI is in fact an intuitionistic implication that is related, by the adjunction $A \wedge B \leadsto C$ iff $A \leadsto (B \leadsto C)$, to the additive conjunction. This causes Property \ref{prop:two} to fail for the additive fragment of \BI, since $A, A \leadsto B, \neg B$ is inconsistent in \BI. Therefore, \BI is not in general safe for the axioms that characterise the majority rule. 
Adding a strong negation to the intuitionistic fragments that we have explored is also a viable extension to justify the completeness assumption concerning the individual judgments. The axioms that capture a strong constructive negation --- see  \cite{Wansing2007}--- preserve Proposition \ref{prop:two}, thus in principle adding a strong negation provides a viable extension to intuitionistic \ALL and \AR.\par

\subsection{Two-logics judgment aggregation}

The modelling that we have introduced enables the definition of aggregators that link profiles of judgment expressed in a certain logic to collective outcomes expressed in a possibly different logic.  In particular, our previous results can be extended to aggregation procedures that takes profiles of judgments that are defined on agendas in \emph{classical} logic and return judgments in \ALL or \AR. Firstly, note that classical logic is obtained, for instance from \ALL, by adding weakening and contraction. Thus, any agenda in $\Phi_{\ALL}$ can faithfully represent the deductive structure of a classical agenda once we reason on its formulas of by means of  \ALL plus contraction and weakening (\ALL C W) . By adding weakening and contraction, the multiset of formulas of the agenda in \ALL is deductively equivalent to a set, since multiple occurrences of a formula can be identified by means of contraction (C).\par
By inspecting the proofs of Theorem \ref{ALL} and \ref{th:ar}, one can see that the proof still holds in case individual reasoning in \ALL or \AR is extended to individual reasoning in \CL. The reason is that, by reasoning in \CL more theorems (more sequents) can be derived, thus, in principle, there are more provably inconsistent sets. However, individual judgements are assumed to be robustly consistent ---that in this case coincides with standard consistency because of weakening and contraction--- therefore, although more (individually) inconsistent sets may be derived in classical logic, they are all prevented by the (robust) consistency assumption. For those reasons, we obtain safety results for MAJ and pairs of agendas $(\Phi_{\CL}, \Phi_{\ALL})$ and $(\Phi_{\CL}, \Phi_{\AR})$, which entails the following result.\footnote{In this case, the aggregation procedures basically associate profiles of sets to mutlisets, this is done by viewing the set that would be the output of the aggregation procedure as multiset with multiplicity one}

\begin{corollary}(of Theorem 29 and 40). The pairs of logics $(\CL, \ALL)$ and $(\CL, \AR)$ are safe for MAJ. 

\end{corollary}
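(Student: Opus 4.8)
The plan is to \emph{reduce} the statement to Theorems~\ref{ALL} and~\ref{th:ar}, which already establish that $\ALL$ and $\AR$ are safe for MAJ, by observing that passing from individual reasoning in $\ALL$ (resp.\ $\AR$) to individual reasoning in $\CL$ only \emph{shrinks} the set of admissible profiles. Recall that, by definition, the pair $(\CL, \ALL)$ is safe for MAJ iff for \emph{every} pair of agendas $(\Phi_{\CL}, \Phi_{\ALL})$ the majority rule $M \in [\Phi_{\CL}, \Phi_{\ALL}](MAJ)$ is robustly consistent. As explained in the discussion above, the relevant pairs are those in which the individual and the collective agenda consist of the \emph{same} formulas of $\mathcal{L}_{\ALL}$, the classical agenda being a set and the collective one the corresponding multiset with multiplicity one; recall also that $\CL$ is obtained from $\ALL$ by adding weakening and contraction.

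The key step I would isolate is the inclusion $J^{*}(\Phi_{\CL}) \subseteq J^{*}(\Phi_{\ALL})$, i.e.\ that every complete, $\CL$-robustly consistent judgment set is also complete and $\ALL$-robustly consistent. Completeness is agenda-relative and hence unaffected by the choice of logic. For consistency I would use the monotonicity-of-provability observation made at the end of Section~\ref{sec:modelJA}: since every sequent provable in $\ALL$ is provable in $\CL$, any set that is inconsistent in $\ALL$ is inconsistent in $\CL$; contrapositively, $\CL$-consistency implies $\ALL$-consistency. Applying this to every subset of a judgment set shows that $\CL$-robust consistency entails $\ALL$-robust consistency, which gives the desired inclusion.

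With the inclusion in hand the corollary is immediate: any profile $\textbf{J} \in J^{*}(\Phi_{\CL})^{n}$ is also a profile in $J^{*}(\Phi_{\ALL})^{n}$, and since the majority rule is computed pointwise from the supports $N_{\phi}$ it returns the same multiset $M(\textbf{J})$ under either reading. By Theorem~\ref{ALL} this multiset is robustly consistent with respect to $\ALL$. As the pair of agendas was arbitrary, the pair of logics $(\CL, \ALL)$ is safe for MAJ. Replacing Theorem~\ref{ALL} by Theorem~\ref{th:ar} and $\ALL$ by $\AR$ throughout yields the companion result for $(\CL, \AR)$.

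I expect the only genuinely delicate point to be the bookkeeping behind the inclusion $J^{*}(\Phi_{\CL}) \subseteq J^{*}(\Phi_{\ALL})$, in particular the identification of the classical set-agenda with the corresponding multiplicity-one multiset over $\mathcal{L}_{\ALL}$, so that the two instances of the majority rule genuinely coincide; everything downstream is then a direct appeal to the possibility results already proved. I would also double-check that it is \emph{robust} consistency, rather than mere consistency, that is transported, since it is robust consistency that secures complement-freeness of the collective outcome in the absence of weakening, exactly as in the proofs of Theorems~\ref{ALL} and~\ref{th:ar}.
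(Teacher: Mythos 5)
Your proof is correct and follows essentially the same route as the paper: the paper likewise reduces the corollary to Theorems \ref{ALL} and \ref{th:ar} by observing that $\CL$ proves more sequents than $\ALL$ (resp.\ $\AR$), so $\CL$-(robust) consistency of the individual judgment sets is a stronger hypothesis that still blocks every minimally $\ALL$-inconsistent pair from appearing in any $J_i$, and hence the original possibility arguments go through unchanged. Your packaging of this as the inclusion $J^{*}(\Phi_{\CL}) \subseteq J^{*}(\Phi_{\ALL})$ is just a slightly more explicit formulation of the paper's "inspect the proofs and note they still hold" step, including the same bookkeeping identification of the classical set-agenda with the multiplicity-one multiset.
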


Recall that a pair of logics is safe for a certain class of axioms if for every agenda defined on the fist logic, every aggregation functions that satisfies those axioms returns sets of judgments that are robustly consistent when assessed with respect to the second logic (cf. Definition 10).\par
By assuming that individuals reason by means of classical logic, we are meeting the assumptions of the standard model of JA and of the impossibility result in \cite{ListPettit2002}. Therefore, by assessing collective rationality in a logic that is significantly weaker that classical logic, i.e. \ALL or \AR, we can actually circumvent the classical impossibility result of \cite{ListPettit2002}. By assessing collective rationality in \ALL or \AR, we can guarantee the preservation of a consistency property from the individual judgments to the collective judgments: if the individual judgments are consistent w.r.t. classical logic, then the majority rule provides judgments that are consistent w.r.t. additive linear or relevant logic.\par 
This strategy for circumventing impossibility results maintains all the normative axioms of the majority rule, it only weaken the notion of rationality that is suitable for collective entities.\par
One may wander what happens if we weaken the logic that models individual reasoning while keeping a strong logic that assesses the collective outcome, for instance if individuals reason by means of \ALL and the outcome of the majoritarian aggregation is assessed by means of \CL.  In this case, the preservation of (robust) consistency is lost. In particular, suppose that  $\{A, B, \neg (A \with B)\}$ is the outcome of the majority rule: it is not inconsistent in \ALL, but it is in \CL, once we enable also weakening (W).


\section{Related work}\label{sec:relatedwork}

A number of articles that investigate logics for judgment aggregation are related to this paper. Proof-theoretical methods in social choice theory and judgment aggregation have been previously used to provide formal proofs of important theorems such as Arrow's theorem. Hilbert system presentations have been used in \cite{aagotnes2011logic} and \cite{CinaEndrissAAMAS2015}, and natural deduction for modal logics of JA has been introduced in \cite{Perkov2016}. The use of proof-theory ---and in particular of sequent calculi--- for modeling collectively accepted beliefs has been developed by \cite{hakli2011reasoning}, although the logic that has been introduced there admits weakening. The use of sequent calculi for modelling inferences in judgment aggregation has been introduced in \cite{PorelloIJCAI2013,PorelloECAI2012}.\par
Logics for judgment aggregation have been mainly developed as modal logics that extend classical propositional reasoning, e.g. \cite{Pauly2007} uses non-normal modal logics to axiomatize the majority rule and \cite{aagotnes2011logic} and \cite{CinaEndrissAAMAS2015} use (normal) modal logics to prove theorems in judgment aggregation. However, a small number of articles discussed non-classical logics related to judgment aggregation. G\"odel-Dummett Logic has been used to compare judgment and preference aggregation by \cite{grossi2009unifying}. G\"odel-Dummett logic is an extension of intuitionistic logic, therefore weakening holds; from the perspective of the safety of a logic for a set of axioms, this case is comparable to the case of intuitionistic logic. Moreover, recent work discusses judgment aggregation within the framework of argumentation theory, that is based on default logics, see for instance \cite{caminada2011judgment}.\par
Finally, a recent work that extends the general model of judgment aggregation of \cite{Dietrich2007} to the case of nonmonotonic logics is \cite{WenJANMSYnthese2017}. The impossibility results of \cite{Dietrich2007} are there extended to the case of nonmonotonic logic. Here, we have approached non-monotonicity from the perspective of substructural logics, that is, we have discussed logics that lack weakening. Notice that, as we have argued, the lack of weakening is not sufficient for the possibility results of Theorem 29 and 40. The distinction between multiplicatives and additives (or intensionals and extensionals) and the restriction to the additives is required. Multiplicative linear logic lacks weakening, however it does not guarantee the consistency of the majority rule. A comparison between the treatment of non-monotonicity in substructural logics (i.e. the lack of weakening) and the non-monotonicity of default logics and of nonmonotonic logics requires a dedicated treatment and it is left for future work.\par
Another example of non-classical logic ---or more precisely of non-classical definition of the conditional--- in judgment aggregation is the approach based on subjunctive implications of Lewis's conditional logic by \cite{dietrich2010possibility}. The motivation for studying subjunctive conditional is close to the motivations of this paper --- and of a number of non-classical logics--- that is, that of exploring meanings of conditionals that are close to the meaning of implications used by human agents and that do not suffer of the paradoxes of classical material implication. For subjunctive implications, Dietrich shows possibility results for quota rules. Moreover, the notion of relevant premises in a judgment aggregation problem has been discussed in \cite{dietrich2015aggregation} by tuning the axioms of aggregation procedures. Here, we have approached the problem of relevance by studying judgement aggregation within relevant and linear logics, that is, by introducing suitable logical operators.\par

An important group of articles that is related to the analysis of this paper is the semantic approach to logics for judgment aggregation due to \cite{Dietrich2007}. In particular, \cite{Dietrich2007}, \cite{dietrich2010premiss} and \cite{mongin2012doctrinal} discuss JA with respect to a ``general logic'', that includes a number of significant logics, providing an important step towards generalizing the model of judgment aggregation. The methodology they use is algebraic: a logic is defined by means of a (semantic) \emph{consequence relation} that satisfies a number of desiderata that are met by a variety of logics (e.g. by classical logic and by many modal logics). In particular, the consequence relation is there assumed to be monotonic, which is equivalent to assuming weakening from a proof-theoretical perspective. The results of \cite{Dietrich2007} provide a general impossibility for judgment aggregation for logics that are monotonic (besides the other conditions). Moreover, a similar algebraic approach for nonmonotonic logic has been developed in \cite{WenJANMSYnthese2017}. The case of linear and relevant logics still escape a straightforward reformulation of the approach by  \cite{Dietrich2007} due to the distinction between multiplicatives and additives. A close comparison with the semantic approach proposed by Dietrich and in \cite{WenJANMSYnthese2017} requires a dedicated future work.\par
Another interesting algebraic approach to discussing JA in a variety of logics is given by \cite{herzberg2013universal} and \cite{esteban2015abstract}. In that case, the treatment applies to a wide class of distributive logics, thus our sections on linear logic may provide an extension of their analysis, once rephrased in their algebraic counterparts.

\section{Conclusions and future work}\label{sec:conclusion}

We have seen an overview of results for judgment aggregation in a number of important non-classical logics. For monotonic logics ---that are here understood as logics that satisfy weakening, e.g. for intuitionistic logic and for extensions of classical logic--- we have seen that a simple rephrasing of the usual modelling of JA is sufficient to establish the expected results. For the case of non-monotonic logics ---e.g. Lambek calculus, linear logic, and relevant logics--- the modelling requires a careful examination of the consistent sets of formulas and of the operational meaning of logical connectives, in order to adapt the standard characterisations of safe agendas (e.g. the median property).\par
By studying JA in weak logics, and by means of a proof-theoretical approach, a fine-grained analysis of the inferences that are responsible of collective inconsistency can be provided. We have seen that by defining agendas in the additive fragment of linear or relevant logic, general safety results are viable. However, adding weakening is sufficient for collective inconsistency to stike back. By contrast, contraction does not affect safety results.\par 
We have also seen that even if individuals reason by  means of the full power of classical logic ---as in the standard JA setting--- it is nonetheless possible to asses collective rationality with respect to the additive fragment of linear or relevant logic, in order to preserve consistency. 
Understanding collective rationality with respect to a weak logic provides then a way to circumvent the impossibility theorem provided by \cite{ListPettit2002}, at least in the sense that discursive dilemmas are no longer construed as a problem of logical consistency. Substructural reasoning provides a rationalization of classical collective inconsistencies, thus the problem of ascribing irrational attitudes to collective entities ---that threaten an aggregative view of collective agents, see for instance \cite{ListPettit2011}--- is prevented by the use of substructural reasoning. A detailed analysis of discursive dilemmas from the perspective of substructural logics in left for future work.\par
We did not discuss here the computational complexity of the decision problems involved in our setting. For the systems that we have introduced, the computational complexity of theorem proving is known, therefore, it is possible in principle to adapt the treatment of \cite{EndrissEtAlJAIR2012} to the non-classical logics. For instance, multiplicative additive linear logic is known to be PSPACE-complete. From that, one can approach the computational complexity of checking whether an agenda of propositions in \MALL has the median property by adapting the arguments in \cite{EndrissEtAlJAIR2012} (e.g. Lemma 20).\par 
We have covered a representative number of significant non-classical logics and of non-classical connectives that can be in principle applied to model specific aggregation problems. For instance, we have introduced an order-sensitive implication that may capture dependencies between propositions in the agenda, a resource-sensitive implication that captures a number of aspects of causality and resource sensitivity, and we presented relevant implication that captures the connection between the antecedent and the consequent of true implications.
We have only marginally discussed a number of logical systems that can be obtained as extensions of linear or relevant logic. The case of many-valued and fuzzy logics and the case of paraconsistent logics have been discussed here only within the substructural perspective. An extension of the present treatment to further logical calculi is the matter of future work. 

\section*{Acknowledgments}

This work extends \cite{PorelloIJCAI2013} and \cite{PorelloECSI2014}. I thank the anonymous reviewers for the insightful comments.

\bibliographystyle{apa}
\bibliography{jancl}
\end{document}